\pdfoutput=1

\documentclass[sigconf, nonacm]{acmart}
\usepackage{pvldb}

\renewcommand\vldbavailabilityurl{https://github.com/S1mpleCod/Near-optimal-approximation-for-DPPS}

\usepackage{amsmath,amssymb,amsthm}
\usepackage{mathtools}
\usepackage{booktabs}
\usepackage{array}
\usepackage{float}
\usepackage{enumitem}
\usepackage{xspace}
\usepackage{microtype}
\usepackage[linesnumbered,ruled,vlined]{algorithm2e}
\usepackage{tikz}
\usetikzlibrary{arrows.meta,positioning,calc,backgrounds,fit,shapes.geometric,matrix}
\usepackage{subcaption}

\usepackage{pgfplots}
\usepackage{pgfplotstable}
\pgfplotsset{compat=1.18}
\usepgfplotslibrary{groupplots}

\definecolor{cL3}{HTML}{1f5fbf}   
\definecolor{cL4}{HTML}{d1622b}   
\definecolor{cGuar}{HTML}{888888} 
\definecolor{cS1}{HTML}{9ecae1}   
\definecolor{cS2}{HTML}{6baed6}   
\definecolor{cS3}{HTML}{3182bd}   
\definecolor{cS4}{HTML}{08519c}   
\definecolor{cOnline}{HTML}{2c7fb8}
\definecolor{cOracle}{HTML}{d95f0e}

\newcommand{\panelW}{2.62cm}
\newcommand{\panelH}{1.62cm}   

\pgfplotsset{
  expbase/.style={
    width=\panelW, height=\panelH,
    scale only axis=true,
    font=\scriptsize,
    tick label style={font=\tiny},
    label style={font=\scriptsize},
    title style={font=\scriptsize, yshift=-1.0ex},
    every axis title/.style={above, at={(0.5,1.0)}},
    axis line style={line width=0.4pt},
    tick style={line width=0.3pt, color=black!50},
    major tick length=1.6pt,
    enlarge x limits=0.08,
    line width=0.7pt,
    mark size=1.0pt,
    legend style={font=\tiny, draw=none, fill=none, inner sep=1pt,
                  row sep=-2pt, /tikz/every even column/.append style={column sep=3pt}},
  },
  exprow/.style={expbase},
}

\makeatletter
\newtheoremstyle{acmplaintight}%
  {3\p@\@plus2\p@\@minus2\p@}
  {3\p@\@plus2\p@\@minus2\p@}
  {\@acmplainbodyfont}{\@acmplainindent}{\@acmplainheadfont}{.}{.5em}%
  {\thmname{#1}\thmnumber{ #2}\thmnote{ {\@acmplainnotefont(#3)}}}%
\newtheoremstyle{acmdefinitiontight}%
  {3\p@\@plus2\p@\@minus2\p@}
  {3\p@\@plus2\p@\@minus2\p@}
  {\@acmdefinitionbodyfont}{\@acmdefinitionindent}{\@acmdefinitionheadfont}{.}{.5em}%
  {\thmname{#1}\thmnumber{ #2}\thmnote{ {\@acmdefinitionnotefont(#3)}}}%
\makeatother

\theoremstyle{acmplaintight}
\newtheorem{theorem}{Theorem}
\newtheorem{lemma}{Lemma}
\newtheorem{corollary}{Corollary}
\newtheorem{proposition}{Proposition}
\newtheorem{definition}{Definition}
\newtheorem{remark}{Remark}
\newtheorem{problem}{Problem}

\newtheorem{property}{Property}

\theoremstyle{acmdefinitiontight}
\newtheorem{example}{Example}
\newtheorem{observation}{Observation}
\theoremstyle{acmplaintight}

\newcommand{\mR}{\ensuremath{\mathcal{R}}\xspace} \newcommand{\mA}{\ensuremath{\mathcal{A}}\xspace}   

\newcommand{\HIN}{\ensuremath{G=(V,E)}\xspace}
\newcommand{\mpath}{\ensuremath{\mathcal{P}}\xspace}          
\newcommand{\plen}{\ensuremath{i}\xspace}                      
\newcommand{\pfam}{\ensuremath{\mathcal{V}}\xspace}             
\newcommand{\inst}[1]{\ensuremath{\mathcal{F}(#1)}\xspace}     
\newcommand{\dens}{\ensuremath{\rho}\xspace}
\newcommand{\densopt}{\ensuremath{\rho^{*}}\xspace}
\newcommand{\densoptM}[1][\mathbf{m}]{\ensuremath{\rho^{*}_{#1}}\xspace}
\newcommand{\irm}{\ensuremath{\mathbf{m}}\xspace}              
\newcommand{\irmopt}{\ensuremath{\mathbf{m}^{*}}\xspace}

\newcommand{\irmall}{\ensuremath{\mathbb{M}}\xspace}
\newcommand{\peelout}{\ensuremath{\widehat{\gamma}}\xspace}    
\newcommand{\dleft}{\ensuremath{L}\xspace}
\newcommand{\dright}{\ensuremath{R}\xspace}
\newcommand{\cnt}[2]{\ensuremath{P(#1,#2)}\xspace}             
\newcommand{\eps}{\ensuremath{\varepsilon}\xspace}
\newcommand{\OO}{\ensuremath{\mathcal{O}}\xspace}

\makeatletter
\g@addto@macro\normalsize{%
  \setlength\abovedisplayskip{3pt plus 1pt minus 1pt}%
  \setlength\belowdisplayskip{3pt plus 1pt minus 1pt}%
  \setlength\abovedisplayshortskip{0pt plus 1pt}%
  \setlength\belowdisplayshortskip{2pt plus 1pt minus 1pt}%
}
\makeatother
\setlist[itemize]{partopsep=0pt}

\makeatletter
\renewenvironment{proof}[1][\proofname]{\par
  \pushQED{\qed}%
  \normalfont \topsep3\p@\@plus2\p@\relax
  \trivlist
  \item[\@proofindent\hskip\labelsep
        {\@proofnamefont #1\@addpunct{.}}]\ignorespaces
}{%
  \popQED\endtrivlist\@endpefalse
}
\makeatother

\AtBeginDocument{%
  }

\begin{document}

\everymath{\scriptstyle}
\everydisplay\expandafter{\the\everydisplay \scriptsize\scriptstyle}
\tikzset{every picture/.append style={execute at begin picture={\everymath{}}}}

\title[From Enumeration to Covering: Near-Optimal DPSS over Large HINs]{From Enumeration to Covering: Near-Optimal Densest $\mathcal{P}$-Partite Subgraph Search over Large Heterogeneous Information Networks}
\subtitle{Technical Report}

\author{\mbox{Lu Chen, Chengfei Liu, Rui Zhou}}
\affiliation{%
  \institution{Swinburne University of Technology}
  \city{Melbourne}
  \country{Australia}
}
\email{\{luchen, cliu, rzhou\}@swin.edu.au}

\author{Jiajie Xu}
\affiliation{%
  \institution{Soochow University}
  \city{Suzhou}
  \country{China}
}
\email{xujj@suda.edu.cn}

\author{Jianxin Li}
\affiliation{%
  \institution{Edith Cowan University}
  \city{Perth}
  \country{Australia}
}
\email{jianxin.li@ecu.edu.au}

\begin{abstract}
Given a heterogeneous information network (HIN) and a query meta-path \mpath
of length $\plen$, the densest \mpath-partite subgraph problem finds the
subgraph, spanning the $\plen$ typed layers of \mpath, that maximizes a
parameter-free density: the number of meta-path instances over the geometric
mean of the layer sizes.
It has applications across bibliographic, e-commerce, and
biomedical networks. The state-of-the-art approximation linearizes the
geometric-mean objective by fixing per-layer weights, but solves one
subproblem for every feasible weight set, of which there are
$\OO((n/\plen)^{\plen})$, and on each achieves only a $1/\plen$
approximation. We show that neither the exhaustive enumeration nor the loose guarantee
is necessary. First, we replace enumeration by covering: polylogarithmically
many representative weight sets, localized further by a data-dependent bound,
cover all feasible ones while losing only a tunable factor $1+\eta$ in density.
Second, we cast each fixed-weight subproblem as a weighted supermodular
densest-subgraph instance and solve it near-optimally, lifting the overall
guarantee to $(1-\delta)/(1+\eta)$. To our knowledge, this is the first
near-optimal density approximation beyond the bipartite ($\plen{=}2$) case, and
it yields a PTAS for every fixed $\plen$. Algorithmically, our solver is an adaptive
peeling scheme that never materializes the meta-path instances, whose number
can exceed the graph size by orders of magnitude. Incumbent-driven pruning further discards representative weight sets
before their subproblems are solved. Experiments on five real HINs show that our algorithms
achieve substantial speedups over enumeration-based baselines, and
can further certify the near-optimality of the returned subgraph.


\end{abstract}


\maketitle

\pagestyle{plain}
\begingroup\small\noindent\raggedright\textbf{PVLDB Artifact Availability:}\\
The source code, data, and/or other artifacts have been made available at \url{\vldbavailabilityurl}.
\endgroup

\section{Introduction}\label{sec:intro}

\noindent\textbf{Heterogeneous information networks.}
A heterogeneous information network (HIN)~\cite{shi2017survey} carries a schema $T=(\mA,\mR)$ of object
types $\mA$ and relation types $\mR$: a graph \HIN{} with a vertex-type map
$\phi:V\to\mA$ and edge-type map $\psi:E\to\mR$, $|\mA|+|\mR|>2$. HINs model
bibliographic, e-commerce, knowledge-graph, and biomedical data such as
Hetionet~\cite{himmelstein2017hetionet}, which links compounds, genes, diseases,
anatomies, and others
(Figure~\ref{fig:hetionet-schema}). The schema lets one query at the type level through a
\emph{meta-path}~\cite{sun2011pathsim}: a path
$A_1\xrightarrow{R_1}A_2\xrightarrow{R_2}\cdots\xrightarrow{R_{\plen-1}}A_{\plen}$ on
the schema $T$ that alternates object types $A\in\mA$ and relation types
$R\in\mR$, composing them into a single typed relation between its endpoints. When
the connecting relations are understood we abbreviate it by its \emph{sequence of
vertex types} $\mpath=\langle A_1,\dots,A_{\plen}\rangle$; e.g.\ over Hetionet
$\mpath=\langle\textsf{C},\textsf{G},\textsf{D}\rangle$ (compound--gene--disease)
links a drug to diseases via the genes it binds, of interest to drug repurposing.

\noindent\textbf{Cohesive \mpath-partite subgraphs.}
A \mpath-family is a tuple of vertex sets, one per type in \mpath; the subgraph it
induces, the \mpath-partite graph, is typically large and sparse, while its meaningful information is hidden in the
dense regions. Mining cohesion across multiple vertex sets has driven a long line of
models, from multipartite cliques~\cite{grunert2002kpartite} and cross-graph
quasi-cliques~\cite{pei2005crossgraph} to multilayer cores and densest
subgraphs~\cite{galimberti2017multilayer, chen2023densest}, the relational
community~\cite{jian2020effective} and influential community~\cite{zhou2023influential}.
Among these models, the densest \mpath-partite subgraph is especially appealing: a
simple, parameter-free query surfacing cohesive structure across \emph{every} type in
\mpath, with no size, degree, or pattern threshold to tune. It stands to the
pattern-based $(k,\mathrm P)$-core~\cite{fang2020effective} as the densest subgraph
does to the $k$-core.

\noindent\textbf{Densest \mpath-partite subgraph search (\textsf{DPSS}).}
Given a meta-path \mpath, \textsf{DPSS}~\cite{chen2023densest} seeks a cohesive and
semantically meaningful subgraph within the corresponding
\mpath-partite graph.
Its density is defined as the ratio between the number of
\mpath-instances contained in the subgraph and the geometric mean of the
sizes of its vertex sets across all vertex types. Mathematically, for a
\mpath-family $\pfam=(V_1,\ldots,V_{\plen})$ of $\plen$
typed layers,
\begin{equation}\label{eq:density}
\dens(\pfam)
=
\frac{|\inst{\pfam}|}
{\left(\prod_{j=1}^{\plen}|V_j|\right)^{1/\plen}},
\end{equation}
where $\inst{\pfam}$ is the set of \mpath-instances induced by \pfam,
and \textsf{DPSS} returns the family of maximum density $\densopt$.
The geometric-mean normalization is what makes the objective
parameter-free: it balances all $\plen$ layers simultaneously, so that a
high-density family must concentrate many meta-path instances among
\emph{relatively few} vertices of \emph{every} type.


\definecolor{cmpColor}{RGB}{36,123,160}   
\definecolor{genColor}{RGB}{217,138,8}    
\definecolor{disColor}{RGB}{197,58,50}    
\definecolor{auxColor}{RGB}{120,120,120}  
\definecolor{netColor}{RGB}{70,130,90}    
\definecolor{trapColor}{RGB}{120,80,160}  

\providecommand{\eidcn}[2]{\tikz[baseline=-0.55ex]{\node[circle,draw=#1,
  line width=0.4pt,inner sep=0.2pt,minimum size=2.7mm,font=\tiny]{#2};}}
\providecommand{\eid}[1]{\eidcn{auxColor!90}{#1}}
\providecommand{\eidr}[1]{\tikz[baseline=-0.55ex]{\node[rounded corners=1pt,
  draw=auxColor!90,line width=0.4pt,inner sep=1.2pt,font=\tiny]{#1};}}
\begin{figure}[t]
\centering
\resizebox{\columnwidth}{!}{%
\begin{tikzpicture}[
  metanode/.style={draw, rounded corners=2pt, inner sep=1.7pt, minimum height=3.4mm,
                   fill=auxColor!12, line width=0.45pt, font=\scriptsize},
  cmp/.style={metanode, fill=cmpColor!18, draw=cmpColor, line width=0.7pt},
  gen/.style={metanode, fill=genColor!22, draw=genColor, line width=0.7pt},
  dis/.style={metanode, fill=disColor!18, draw=disColor, line width=0.7pt},
  rel/.style={auxColor!85, line width=0.45pt},
  mp/.style ={line width=1.1pt},
  eid/.style={circle, draw=auxColor!90, fill=white, inner sep=0pt,
              minimum size=2.7mm, font=\tiny, line width=0.4pt}
]
\node[gen] (G)  at (0,0)        {Gene};
\node[cmp] (C)  at (-2.25,0.3)  {Compound};
\node[dis] (D)  at (2.25,0.3)   {Disease};
\node[metanode] (A)  at (0,1.55)     {Anatomy};
\node[metanode] (PC) at (-3.55,1.05) {Ph.\,Class};
\node[metanode] (SE) at (-3.45,-0.8) {Side Eff.};
\node[metanode] (S)  at (3.55,-0.6)  {Symptom};
\node[metanode] (PW) at (-1.95,-1.4) {Pathway};
\node[metanode] (BP) at (-0.78,-1.55){Biol.Proc.};
\node[metanode] (MF) at (0.78,-1.55) {Mol.Func.};
\node[metanode] (CC) at (1.95,-1.4)  {Cell.Comp.};
\draw[mp, cmpColor!75!black] (C) -- (G) node[eid,draw=cmpColor,line width=0.7pt,pos=0.58]{1};
\draw[mp, disColor!75!black] (G) -- (D) node[eid,draw=disColor,line width=0.7pt,pos=0.42]{6};
\draw[rel] (C)  to[bend left=85,looseness=1.15] (D);
\node[eid] at (0,2.11) {2};
\draw[rel] (C)  -- (SE)                         node[eid,pos=0.5]{3};
\draw[rel] (PC) -- (C)                          node[eid,pos=0.5]{4};
\draw[rel] (C)  to[out=150,in=210,looseness=7] (C);
\node[eid] at (-3.15,0.3) {5};
\draw[rel] (D)  -- (A)                          node[eid,pos=0.55]{7};
\draw[rel] (D)  -- (S)                          node[eid,pos=0.5]{8};
\draw[rel] (D)  to[out=30,in=-30,looseness=7] (D);
\node[eid] at (3.15,0.3) {9};
\draw[rel] (A)  -- (G)                          node[eid,pos=0.62]{10};
\draw[rel] (G)  -- (PW)                         node[eid,pos=0.5]{11};
\draw[rel] (G)  -- (BP)                         node[eid,pos=0.5]{12};
\draw[rel] (G)  -- (MF)                         node[eid,pos=0.5]{13};
\draw[rel] (G)  -- (CC)                         node[eid,pos=0.5]{14};
\draw[rel] (G)  to[out=162,in=104,looseness=14] (G);
\node[eid] at (-0.55,0.6) {15};
\end{tikzpicture}}
\par\smallskip\hrule height0.3pt\smallskip
{\scriptsize\setlength{\parindent}{0pt}%
\eidcn{cmpColor}{1}~compound \textbf{binds}/regulates a gene 
;\,
\eid{2}~treats/palliates a disease 
;\,
\eid{3}~causes a side effect 
;\,
\eid{4}~drug class includes a compound 
;\,
\eid{5}~resembles a compound 
;\,
\eidcn{disColor}{6}~disease \textbf{associates}/regulates a gene 
;\,
\eid{7}~localizes to an anatomy 
;\,
\eid{8}~presents a symptom 
;\,
\eid{9}~resembles a disease 
;\,
\eid{10}~anatomy expresses/regulates a gene 
;\,
\eidr{11--14}~gene participates in a pathway, process, etc. 
;\,
\eid{15}~gene interacts a gene. 
\par}
\caption{Schema of the Hetionet biomedical HIN
}
\label{fig:hetionet-schema}
\end{figure}



\begin{figure}[t]
\centering
\resizebox{0.74\columnwidth}{!}{%
\begin{tikzpicture}[
  v/.style={circle, draw, minimum size=3.6mm, inner sep=0pt, line width=0.55pt, font=\tiny},
  cmpv/.style={v, fill=cmpColor!18, draw=cmpColor},
  genv/.style={v, fill=genColor!22, draw=genColor},
  disv/.style={v, fill=disColor!18, draw=disColor},
  eopt/.style={line width=0.7pt, black},
  etrap/.style={line width=0.55pt, auxColor, densely dashed}
]
\node[font=\tiny\bfseries, cmpColor!70!black] at (0,2.65)   {Compounds};
\node[font=\tiny\bfseries, genColor!60!black] at (2.6,2.65) {Genes};
\node[font=\tiny\bfseries, disColor!70!black] at (5.2,2.65) {Diseases};
\node[cmpv] (c2) at (0, 2.1)  {$c_2$};
\node[cmpv] (c3) at (0, 1.5)  {$c_3$};
\node[cmpv] (c7) at (0, 0.9)  {$c_7$};
\node[cmpv] (c4) at (0, 0.3)  {$c_4$};
\node[cmpv] (c6) at (0,-0.3)  {$c_6$};
\node[cmpv] (c8) at (0,-0.9)  {$c_8$};
\node[cmpv] (c5) at (0,-1.5)  {$c_5$};
\node[cmpv] (c1) at (0,-2.1)  {$c_1$};
\node[genv] (g2) at (2.6, 1.5)  {$g_2$};
\node[genv] (g4) at (2.6, 0.5)  {$g_4$};
\node[genv] (g1) at (2.6,-0.5)  {$g_1$};
\node[genv] (g3) at (2.6,-1.5)  {$g_3$};
\node[disv] (d1) at (5.2, 1.5)  {$d_1$};
\node[disv] (d2) at (5.2, 0.7)  {$d_2$};
\node[disv] (d3) at (5.2,-0.5)  {$d_3$};
\node[disv] (d4) at (5.2,-1.5)  {$d_4$};
\draw[eopt] (c3)--(g2); \draw[eopt] (c4)--(g2);
\draw[eopt] (c2)--(g4); \draw[eopt] (c3)--(g4); \draw[eopt] (c7)--(g4);
\draw[eopt] (g2)--(d1); \draw[eopt] (g2)--(d2); \draw[eopt] (g4)--(d2);
\draw[etrap] (c1)--(g1); \draw[etrap] (c5)--(g1); \draw[etrap] (c8)--(g1); \draw[etrap] (g1)--(d3);
\draw[etrap] (c4)--(g3); \draw[etrap] (c6)--(g3); \draw[etrap] (c8)--(g3); \draw[etrap] (g3)--(d4);
\end{tikzpicture}}
\par\smallskip\hrule height0.3pt\smallskip
{\scriptsize\setlength{\parindent}{0pt}%
\textcolor{cmpColor!75!black}{\textbf{Compounds}}: $c_1$ masitinib, $c_2$ avapritinib,
$c_3$ imatinib, $c_4$ dasatinib, $c_5$ crenolanib, $c_6$ axitinib, $c_7$ ripretinib,
$c_8$ sunitinib.\;
\textcolor{genColor!60!black}{\textbf{Genes}}: $g_1$ \textsf{PDGFRA}, $g_2$ \textsf{ABL1},
$g_3$ \textsf{KDR}, $g_4$ \textsf{KIT}.\;
\textcolor{disColor!75!black}{\textbf{Diseases}}: $d_1$ chronic myeloid leukemia (CML),
$d_2$ gastrointestinal stromal tumor (GIST), $d_3$ hypereosinophilic syndrome,
$d_4$ renal cell carcinoma.\par}
\caption{A $\langle\textsf{C},\textsf{G},\textsf{D}\rangle$-partite graph ($\plen{=}3$)
}
\label{fig:running-instance}
\end{figure}

\noindent\textbf{Applications.}
Each densest family is a cohesive group spanning \emph{every} type on the meta-path. Prior
work motivates \textsf{DPSS} with such groups across HINs~\cite{chen2023densest}, exposing
fraud rings over user--device--merchant networks, finding research communities over
author--paper--venue networks, etc. We broaden the case to biomedical drug repurposing.
Take $\mpath=\langle\textsf{C},\textsf{G},\textsf{D}\rangle$ (compound--gene--disease) over
a biomedical network such as Hetionet: a densest family is a set of compounds, genes, and
diseases all densely interlinked across the three layers. In
Figure~\ref{fig:running-instance}, the solid $(4,2,2)$ block $\mathcal V^{*}$ is four kinase
inhibitors acting through two genes (\textsf{ABL1}, \textsf{KIT}) on two cancers, CML and
GIST. This block \emph{suggests} drug-repurposing candidates: because \emph{imatinib}, developed
for CML via \textsf{ABL1}, shares the dense subgraph with GIST through \textsf{KIT}, the result
identifies it as a candidate, a repurposing that in fact holds~\cite{dagher2002imatinib}. Pairwise, edge-based methods are likely to miss it: the query places no edge between a
compound and a disease; such a connection therefore tends to surface only through the \emph{joint}
density across all three layers.


\noindent\textbf{State of the art.}
\textsf{DPSS} is polynomial-time solvable for fixed $\plen$, but only inefficiently, through
costly flow computations; approximation is the more promising direction. The
state-of-the-art approximation~\cite{chen2023densest} fixes a per-layer weight set with
product one, the \emph{iRM set}, to linearize the coupling geometric-mean denominator,
turning \textsf{DPSS} into a weighted densest-subgraph subproblem. Two weaknesses remain: the optimal weight set is unknown, forcing a
search over $\OO((n/\plen)^{\plen})$ of them, and each is solved only to the loose
$1/\plen$ factor of peeling. Both are already overcome at $\plen=2$, the directed/bipartite
densest subgraph~\cite{charikar2000greedy,khuller2009dense}: there the weight set collapses
to a single scalar ratio that is geometrically scanned, and each fixed-ratio subproblem is
supermodular and solved near-optimally. This relies on one-dimensionality and does not
survive $\plen>2$.

\noindent\textbf{Open problems and challenges.}
The obstacle is twofold: for $\plen>2$ the weight sets form a
$(\plen-1)$-dimensional hyperplane with \emph{no} total order to sweep, and the numerator
turns from the \emph{edge} count into the \emph{meta-path instance} count $|\inst{S}|$,
implicit and far larger than the graph size. Three challenges result.
\emph{(C1)} The enumeration of weight sets is $\OO(\prod_j|V(A_j)|)$, exponential in
$\plen$; can approximation avoid it? \emph{(C2)} The $1/\plen$ worst-case factor of
peeling is often near-optimal in practice, yet never \emph{certified} to be; can a
fixed-weight subproblem, with its per-type-weighted denominator, be solved with a
certified near-optimal guarantee? \emph{(C3)} The
materialized instances number $\OO(\prod_j|V_j|)$; can peeling avoid them?


\noindent\textbf{Our approach.}
We answer all three.
\emph{(C1)} We replace enumeration with covering. In logarithmic coordinates we flatten the feasibility $\prod_j m_j=1$ to the hyperplane $\sum_j\ln m_j=0$, where the feasible weights fill a bounded $(\plen-1)$-dimensional polytope; a single $\eta$-net covers it. We space that net at the coarse $\sqrt\eta$ pitch of the \emph{second-order net}, which the curvature of the AM--GM gap justifies, keeping only $\OO((\log n/\sqrt\eta)^{\plen-1})$ representatives at a $1+\eta$ loss, $\eta^{-(\plen-1)/2}$ times fewer than the naive $\eta$ pitch, and we shrink $\log n$ to $\log\Delta$, where $\Delta$ is the largest per-vertex instance count, with a data-dependent bound.
\emph{(C2)} We cast each fixed-weight subproblem as a weighted supermodular densest-subgraph problem and solve it natively to a certified $(1-\delta)$ under its \emph{per-type-weighted} denominator, the part the supermodular machinery, built for the cardinality ratio (the $i{=}2$ case), leaves out. That $(1-\delta)$ lives on the \emph{weighted} objective, not the density; we prove it crosses over, since the AM--GM factor between them \emph{vanishes at a family's own weights}, so at the optimum's weight our bound attains $\densopt$ itself.
\emph{(C3)} For meta-paths, we factor each per-vertex incidence as $\dleft(v)\dright(v)$, so an adaptive peel reads it in $\OO(|E|)$ and achieves the same $(1-\delta)$ \emph{without ever materializing} the $\OO(\prod_j|V_j|)$ instances.
Together we get a $\frac{1-\delta}{1+\eta}$ approximation to $\densopt$: a PTAS for fixed
$\plen$ and, to our knowledge, the first near-optimal guarantee past $\plen=2$.
Finally, we add incumbent-driven \emph{pruning} that removes representatives before we ever solve them.
 


\noindent\textbf{Contributions.}
In summary, this paper makes the following contributions.
\begin{itemize}[leftmargin=1.2em,itemsep=2pt,topsep=0pt]
\item {\sloppy \emph{Covering, not enumeration} (C1): instead of the $\OO((n/\plen)^{\plen})$
weight sets prior work enumerates, an $\eta$-net of only
$\OO((\min\{\log n,\log\Delta\}/\sqrt{\eta})^{\plen-1})$ representatives, polylogarithmic in
$n$ and $\Delta$-adaptive, covers all of them at a $1+\eta$ loss
(\S\ref{sec:geometry}\footnote{Throughout, \S\,$n$ abbreviates Section~$n$.}).\par}
\item \emph{Certified near-optimal subproblems} (C2): each fixed-weight step is a supermodular
densest-subgraph instance, which we solve to a certified $(1-\delta)$ under its
\emph{per-type-weighted} denominator, the part the supermodular machinery leaves out; with the
cover, this yields a $\frac{1-\delta}{1+\eta}$ PTAS for $\densopt$ at fixed $\plen$
(\S\ref{sec:supermod}).
\item \emph{Instance-free peeling} (C3): for path meta-paths, an $\dleft(v)\dright(v)$
factorization runs the peel in $\OO(|E|)$, \emph{without materializing} the
$\OO(\prod_j|V_j|)$ $\mpath$-instances (\S\ref{sec:implicit}).
\item \emph{Pruning}: an incumbent-driven test further removes representatives and reduces subproblem sizes (\S\ref{sec:pruning}).
\item Experiments on real HINs show consistent gains in efficiency
 over existing methods with certified effectiveness (\S\ref{sec:experiments}).
\end{itemize}

\section{Problem Formulation}\label{sec:prelim}

\noindent
In \S\ref{sec:intro}, we have defined the heterogeneous information
network \HIN, the query meta-path \mpath\ of length $\plen=|\mpath|$, the
\mpath-family $\pfam=(V_1,\dots,V_\plen)$ and its induced \mpath-partite
graph, the instance set \inst{\pfam}, and the density \dens. We do not
restate them. 
We introduce the remaining notation and formally state the problem; throughout, $n=|V|$. Table~\ref{tab:notation}
summarizes all symbols.


\noindent\textbf{Typed vertices and positions.}
A
meta-path may \emph{repeat} a vertex type ($A_j=A_k$ for $j\neq k$), as in the gene--disease--gene
path $\langle\textsf{G},\textsf{D},\textsf{G}\rangle$, and its repeated positions are filled
\emph{independently}: a family may pick different sets $V_j\neq V_k$, of different sizes, for
them. The type $\phi(v)$ then no longer identifies a vertex's role: when a type recurs,
knowing $\phi(v)$ leaves it ambiguous which position $v$ occupies. 
We therefore index by position and work over the \emph{position-expanded} ground set
$
U=\biguplus_{j=1}^{\plen} V(A_j),
$
one copy of $V(A_j)$ per position, so every $v\in U$ carries a definite position
$\phi'(v)\in[\plen]$ (of required type $A_{\phi'(v)}$). A \mpath-family is then a subset of
$U$, equivalently a tuple $\pfam=(V_1,\dots,V_\plen)$ with $V_j\subseteq V(A_j)$, and a
vertex whose type recurs appears once per position, chosen freely at each. Every per-position
quantity is thereby unambiguous. When all types are distinct, $\phi'$ is fixed by $\phi$ and
$U\cong V$. 

\noindent\textbf{Induced subgraph and per-vertex instances.}
A \mpath-family $\pfam=(V_1,\dots,V_\plen)$ induces the subgraph
$G(\pfam)$ on $V_1\cup\cdots\cup V_\plen$, and a \mpath-instance of
$G(\pfam)$ is a sequence $\pi=(v_1,\dots,v_\plen)$ with $v_j\in V_j$ whose
consecutive vertices are joined by the relation that \mpath\ prescribes; we write
$v\in\pi$ when $v$ is one of its vertices.
By convention positions are matched independently, so a vertex may fill more than
one role (i.e.\ $v_j=v_k$ is allowed when $A_j=A_k$) and every position-respecting
sequence counts as a distinct instance.
The algorithms that will be discussed operate not on \inst{\pfam} directly but on its
per-vertex incidence. As such, for a vertex $v$,  we write
$
\cnt{v}{G(\pfam)}
=
\bigl|\{\,\pi\in\inst{\pfam}:v\in\pi\,\}\bigr|
$
for the number of \mpath-instances of $G(\pfam)$ passing through $v$.

\begin{table}[t]
\centering
\small
\setlength{\tabcolsep}{5pt}
\caption{Summary of notation.}
\label{tab:notation}
\begin{tabular}{@{}ll@{}}
\toprule
Symbol & Meaning \\
\midrule
\HIN & input HIN (typed graph); $n=|V|$ \\
$\mpath,\ \plen$ & meta-path; length $\plen=|\mpath|$ \\
$A_j,\ V(A_j)$ & position-$j$ type; its vertex set \\
$U,\ \phi'(v)$ & position-expanded $\biguplus_j V(A_j)$; position of $v$ \\
$\phi(v)$ & type of vertex $v$ \\
$\pfam$ & \mpath-family $(V_1,\dots,V_\plen)$, $V_j\subseteq V(A_j)$ \\
$G(\pfam)$ & subgraph induced by $\pfam$ \\
$\inst{\pfam}$ & \mpath-instances in $G(\pfam)$ \\
$\cnt{v}{G(\pfam)}$ & \# \mpath-instances through $v$ \\
$g(\pfam)$ & $\big(\prod_{j}|V_j|\big)^{1/\plen}$ \\
$\mathbf m,\ m_j$ & iRM set; layer-$j$ weight $g(\pfam)/|V_j|$ \\
$\Pi_n$ & polytope of feasible iRM sets (\S\ref{sec:geometry}) \\
$\dens,\ \densopt$ & density; optimum density \\
\bottomrule
\end{tabular}
\end{table}

\begin{example}
\label{ex:instances}
Consider $\pfam^{*}=(\{c_2,c_3,c_4,c_7\},\{g_2,g_4\},\{d_1,d_2\})$, the solid
edges of Figure~\ref{fig:running-instance}: $g_2$ binds $c_3,c_4$ and reaches
$d_1,d_2$, while $g_4$ binds $c_2,c_3,c_7$ and reaches $d_2$. Its
\mpath-instances (the compound--gene--disease paths) number 7.
The per-vertex incidences $\cnt{v}{G(\pfam^{*})}$ are $4,3$ at $g_2,g_4$; $1,3,2,1$
at $c_2,c_3,c_4,c_7$; and $2,5$ at $d_1,d_2$.
\end{example}

\begin{problem}[Densest \mpath-Partite Subgraph Search, \textsf{DPSS}]
\label{prob:dpss}
Given a HIN \HIN\ and a meta-path \mpath, return a \mpath-family
\[
\pfam^{*}=\arg\max_{\pfam}\ \dens(\pfam),
\qquad
\densopt=\dens(\pfam^{*}),
\]
where $\dens(\pfam)=|\inst{\pfam}|/g(\pfam)$ is the density of Eq.~\eqref{eq:density}.
\end{problem}

\noindent\textbf{Scope.} \textsf{DPSS} is defined for every $\plen\ge2$, but the case $\plen=2$
is directed/bipartite densest subgraph, already solved near-optimally by existing
methods~\cite{ma2022convexdds,liang2026racd}. This paper targets the open general regime
$\plen>2$. 


\begin{example}
\label{ex:density}
The family $\pfam^{*}$ of Example~\ref{ex:instances} has density
$
\dens(\pfam^{*})=\frac{7}{\sqrt[3]{4\cdot2\cdot2}}
\approx 2.78,
$
and it is the densest \mpath-family in
Figure~\ref{fig:running-instance}, so $\densopt\approx2.78$. Enlarging it with the
gray decoy ones ($g_1,g_3$ and their compounds $c_1,c_5,c_6,c_8$ and diseases
$d_3,d_4$) gives the whole graph, of shape $(8,4,4)$ with $13$ instances, yet its
density is only
$13/\sqrt[3]{128}\approx2.58<\densopt$. 
\end{example}

\noindent\textbf{The iRM ($\plen$th-root-multiplication) set~\cite{chen2023densest}}.
For a \mpath-family \pfam, the per-layer weights are
$\mathbf m=(m_1,\dots,m_\plen)$ with
\[
m_j=\frac{g(\pfam)}{|V_j|},\qquad j\in[\plen],
\]
rescaling each layer by its size. Fixing it turns \textsf{DPSS} into a tractable
weighted subproblem (to be discussed in \S\ref{sec:revisit}). 
\begin{property}
\label{prop:irm-product}
For every \mpath-family \pfam, the induced iRM set satisfies
$\prod_{j=1}^{\plen} m_j = 1$.
\end{property}

\noindent\textbf{iRM set extension.}
We extend the notion beyond family-induced weights: \emph{any} vector $\mathbf m$ with
$\prod_j m_j=1$ is a \emph{feasible} iRM set, and one actually induced by a \mpath-family is
\emph{realizable}. 

\section{State of the Art}\label{sec:revisit}

\noindent
We revisit the state of the art for \textsf{DPSS} and its $\plen=2$ special case,
isolating the gaps to fill.

\noindent\textbf{The DPSS approximation.}
The geometric-mean density $\dens(\pfam)=|\inst{\pfam}|/g(\pfam)$ resists direct
optimization because its denominator $g=(\prod_j|V_j|)^{1/\plen}$ couples all \plen\ layers
multiplicatively. Fixing a target iRM-set $\irm=(m_1,\dots,m_\plen)$ breaks the coupling and linearizes the denominator. Over the
position-expanded ground set $U$ of \S\ref{sec:prelim} (with $S_j=S\cap V(A_j)$ for
$S\subseteq U$ and position map $\phi'$), fixing $\irm$ replaces the geometric mean by the
\emph{weighted size}
\begin{equation}\label{eq:wsize}
w_{\irm}(S)=\sum_{v\in S}m_{\phi'(v)}=\sum_j m_j|S_j|,
\end{equation}
a weighted vertex count \emph{linear} in $S$, equal to $\plen\,g(S)$ exactly when $\irm$ is
$S$'s own iRM-set.
The fixed-weight subproblem is then to maximize the \emph{auxiliary objective}
\[
h_{\irm}(S)=\frac{f(S)}{w_{\irm}(S)},\qquad f(S):=|\inst{S}|,
\]
the instance count over a linear denominator, a \emph{weighted densest-subgraph} problem,
which \cite{chen2023densest} solves by \emph{reweighted peeling}: it treats each vertex's incidence
$\cnt{v}{G(\pfam)}$ as its value and its weight $m_{\phi'(v)}$ as its cost, repeatedly deletes a
vertex of least value-per-cost $\cnt{v}{G(\pfam)}/m_{\phi'(v)}$, and keeps the densest family
seen. A charging argument, each optimum instance charged to one of its \plen\
endpoints, bounds the loss for $\irm$. 

\begin{theorem}[\cite{chen2023densest}]
\label{thm:chen-peeling}
For a fixed iRM-set $\irm$, reweighted peeling is a
$1/\plen$-approximation to the optimum density $\densoptM$ over all
\mpath-families conforming to $\irm$ (i.e., whose own induced iRM-set equals the fixed $\irm$).
\end{theorem}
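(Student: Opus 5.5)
Let $\pfam^{*}_{\irm}$ be an optimum family conforming to $\irm$, with density $\densoptM$ and vertex set $S^{*}\subseteq U$. The plan is the standard weighted-peeling charging argument, in three steps.

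\emph{Step 1: translate density into the auxiliary objective.} Since $\irm$ is $S^{*}$'s own iRM set, Eq.~\eqref{eq:wsize} gives $w_{\irm}(S^{*})=\plen\,g(S^{*})$. Hence
\[
h_{\irm}(S^{*})=\frac{f(S^{*})}{\plen\,g(S^{*})}=\frac{\densoptM}{\plen}.
\]
For any family $S$ the peel visits, AM--GM gives $w_{\irm}(S)=\sum_j m_j|S_j|\ge \plen\bigl(\prod_j m_j|S_j|\bigr)^{1/\plen}=\plen\,g(S)$, using Property~\ref{prop:irm-product}. Therefore $\dens(S)=f(S)/g(S)\ge \plen\,h_{\irm}(S)$. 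It suffices to show that some peeled $S$ has $h_{\irm}(S)\ge h_{\irm}(S^{*})/\plen$, because then $\dens(S)\ge \plen\cdot\densoptM/\plen^{2}=\densoptM/\plen$.

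\emph{Step 2: local optimality of $S^{*}$.} Take $S^{*}$ to be an inclusion-minimal maximizer of $h_{\irm}$ among conforming families, or simply a maximizer of $h_{\irm}$ over all subsets of $U$; the latter only strengthens the bound. Removing any $v\in S^{*}$ cannot increase $h_{\irm}$, i.e.
\[
\frac{f(S^{*})-\cnt{v}{G(S^{*})}}{w_{\irm}(S^{*})-m_{\phi'(v)}}\le h_{\irm}(S^{*}).
\]
This gives $\cnt{v}{G(S^{*})}\ge m_{\phi'(v)}\,h_{\irm}(S^{*})$ for every $v\in S^{*}$.

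\emph{Step 3: the charging argument.} Consider the first time the peel deletes a vertex $v\in S^{*}$, and let $S$ be the current set, so $S\supseteq S^{*}$. The deleted vertex minimizes value-per-cost, so every $u\in S$ satisfies
\[
\frac{\cnt{u}{G(S)}}{m_{\phi'(u)}}\ \ge\ \frac{\cnt{v}{G(S)}}{m_{\phi'(v)}}\ \ge\ \frac{\cnt{v}{G(S^{*})}}{m_{\phi'(v)}}\ \ge\ h_{\irm}(S^{*}),
\]
using monotonicity of instance counts under supersets. Each instance of $G(S)$ has exactly $\plen$ position-slots, so summing gives
\[
\plen f(S)=\sum_{u\in S}\cnt{u}{G(S)}\ \ge\ h_{\irm}(S^{*})\,w_{\irm}(S).
\]
Hence $h_{\irm}(S)\ge h_{\irm}(S^{*})/\plen$. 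Since the peel keeps the densest family seen, Step 1 finishes the proof.

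\emph{Main obstacle.} The delicate point is the identity $\sum_u \cnt{u}{\cdot}=\plen f$ when types repeat. It holds only in the position-expanded ground set, where $v_j=v_k$ is counted once per position; I would state that convention explicitly. A secondary subtlety is that the returned family need not conform to $\irm$. This is why the AM--GM inequality in Step 1, rather than an equality, is needed, and it works in our favor.
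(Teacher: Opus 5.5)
Your proof is correct once Step~2 commits to the second option, a maximizer $\hat S$ of $h_{\mathbf m}$ over all of $U$. It takes a different route from the paper's.

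The paper only cites this theorem. Its own version of the argument is Lemma~\ref{lem:perturb} with $\mathbf m'=\mathbf m$ (all $\theta_j=0$), applied to the conforming optimum $\mathcal V^{*}$, in three steps:
\begin{itemize}
\item Each instance of $\mathcal V^{*}$ is charged to its first-deleted vertex. This gives $|\mathcal F(\mathcal V^{*})|\le\sum_j|V^{*}_j|\mu_j$, where $\mu_j$ is the largest incidence any layer-$j$ vertex of $\mathcal V^{*}$ has at its deletion.
\item Each $\mu_j\le\widehat\gamma\,m_j$, because the minimum key never exceeds the weighted average $i f(S)/w_{\mathbf m}(S)\le\rho(S)$.
\item Dividing by $g(\mathcal V^{*})=m_j|V^{*}_j|$ yields $\rho^{*}_{\mathbf m}\le\sum_j\mu_j/m_j\le i\,\widehat\gamma$.
\end{itemize}

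You instead run the Charikar-style argument. You use the incidence floor $P(v,G(\hat S))\ge m_{\phi'(v)}h^{*}_{\mathbf m}$ of the global maximizer, which is exactly Lemma~\ref{lem:coreduce}(i). You then average the keys at the snapshot when $\hat S$ is first touched.

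What each route buys:
\begin{itemize}
\item The paper's charging uses no maximizer and no local optimality, so it applies verbatim to any competitor family. That is how it absorbs mismatched weights via $\mu_j/m^{*}_j=(\mu_j/m'_j)e^{\theta_j}$.
\item Yours gives stronger intermediate facts: $h_{\mathbf m}(S)\ge h^{*}_{\mathbf m}/i$, which is a $1/i$ guarantee for the weighted subproblem itself, and $\rho(S)\ge h^{*}_{\mathbf m}$.
\item Yours also handles mismatched weights just as easily, through $h^{*}_{\mathbf m'}\ge h_{\mathbf m'}(\mathcal V^{*})=\rho^{*}/(iL)$.
\end{itemize}

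Do delete the first option in Step~2. A maximizer among conforming families, inclusion-minimal or not, need not be locally optimal: $S^{*}\setminus\{v\}$ is generally not conforming, so it is not a competitor. For example, take $i=2$, $\mathbf m=(1,1)$, layers $\{v,a_1,a_2\}$ and $\{b_1,b_2,b_3\}$, and edges $a_kb_l$ for all $k,l$ plus $vb_1$. The whole graph is the unique conforming optimum, with $h_{\mathbf m}=7/6$, yet $P(v,G)=1<7/6$.

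With the global maximizer, keep it notationally separate from the conforming optimum. Step~1 then gives only $h_{\mathbf m}(\hat S)\ge h_{\mathbf m}(S^{*})=\rho^{*}_{\mathbf m}/i$, not equality. Also note that $h^{*}_{\mathbf m}>0$ forces every layer of $\hat S$, and hence of the snapshot set, to be nonempty. So $\hat S\setminus\{v\}\neq\emptyset$ in the local-optimality step, and the AM--GM bound $\rho(S)\ge i\,h_{\mathbf m}(S)$ applies at the snapshot.
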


\noindent
Since $\irmopt$ is unknown, \cite{chen2023densest} peels at \emph{every} realizable iRM-set
$\irm\in\irmall$ ($\irmall$ the set of realizable iRM-sets) and returns the best, a global
$1/\plen$-approximation, at the
cost of $\Theta(|\irmall|)=\OO\!\big((n/\plen)^{\plen}\big)$ peelings.
Besides, iRM-sets can only be pruned using flow networks. As such, pure peeling-based approximation is even slower than the exact algorithms \cite{chen2023densest}.

\noindent\textbf{The $\plen=2$ advantage.}
At $\plen=2$, directed/bipartite densest subgraph, the iRM-set collapses to a
single scalar \emph{ratio} ($m_2=1/m_1$, $m_1\in[n^{-1/2},n^{1/2}]$). A
\emph{multiplicative} $(1{+}\eps)$ approximation of the ratio suffices, so geometric
spacing spans the whole interval with only $\OO(\eps^{-1}\log n)$ guesses, the
``$\log n$ trick''~\cite{ma2022convexdds} that replaces Charikar's $\Theta(n^2)$ ratio
sweep~\cite{charikar2000greedy,khuller2009dense}. Each fixed-ratio subproblem is a
supermodular program solved to $(1{-}\eps)$ by accelerated coordinate
descent~\cite{nguyen2024acdm,liang2026racd}. Both the configuration search and the
per-subproblem solver are thus already near-optimal.

\textit{Why this does not lift.}
Both pillars presuppose \emph{one-dimensionality}. The search exploits the \emph{total order} of the
ratio interval: a $(1{+}\eps)$ change in the ratio moves the density only by $(1{+}\OO(\eps))$, so
one geometric scan suffices and the order lets core decomposition and divide-and-conquer discard
sub-intervals that cannot hold the optimum. Once $\plen-1>1$, Property~\ref{prop:irm-product}
makes the feasible iRM-sets a $(\plen-1)$-dimensional hyperplane with \emph{no} total order, and
hence no axis to sweep and no sub-interval to prune. The optimum should therefore be
\emph{covered} in the hyperplane, not searched in a line. The
subproblem solver fails for two reasons: at $\plen=2$ the numerator is the \emph{edge} count, a
supermodular function the existing solvers exploit, whereas at $\plen>2$ it is the \emph{meta-path
instance} count $|\inst{S}|$, whose supermodularity is not a priori clear; and even granting it,
the per-type iRM weights turn the denominator into a general $\plen$-type weighted size
$\sum_j m_j|S_j|$, beyond the two-sided $m_1|S_1|{+}\tfrac1{m_1}|S_2|$ those solvers handle.


\section{Solution Overview}\label{sec:overview}

\noindent
Our approach builds on a simple \emph{cover-and-conquer} principle that
decouples the two difficulties of \textsf{DPSS}: searching the iRM
configuration space and solving each fixed-weight subproblem.
Rather than enumerating every iRM-set, we partition the
iRM polytope $\Pi_n$ into a small number of \emph{non-overlapping} cells,
solve a \emph{single} representative iRM-set in each, and return the
densest family found.

Algorithm~\ref{alg:overview} is parameterized by two components, each supplied by a later
section: a \emph{partition} of $\Pi_n$ into disjoint cells $C_1\sqcup\cdots\sqcup C_K$
(\S\ref{sec:geometry}), and a \emph{cell solver} $\textsc{Solve}(G,\mpath,\irm)$ returning a
\mpath-family for a fixed iRM-set $\irm$ (\S\ref{sec:supermod}). It never searches for
$\irmopt$; it solves one representative per cell, on the premise that the cell containing
$\irmopt$ already yields a near-optimal family. Three questions decide whether the premise holds.

\begin{algorithm}[t]
\small
\caption{\textsc{Cover-and-Conquer} framework}
\label{alg:overview}
\KwIn{HIN $G$; meta-path $\mpath$}
\KwOut{A \mpath-family $\widehat{\pfam}$}
Partition the iRM polytope $\Pi_n$ into non-overlapping cells
$C_1,\dots,C_K$, $\widehat{\pfam}\leftarrow\emptyset$\;
\ForEach{cell $C_k$}{
  pick a representative iRM-set $\widehat{\irm}_k\in C_k$\;
  $\pfam_k\leftarrow\textsc{Solve}(G,\mpath,\widehat{\irm}_k)$\;
  \lIf{$\dens(\pfam_k)>\dens(\widehat{\pfam})$}{$\widehat{\pfam}\leftarrow\pfam_k$}
}
\Return{$\widehat{\pfam}$}\;
\end{algorithm}

\noindent\emph{(Q1)~Does a small partition bound the density error?}
The representative $\widehat{\irm}_k$ need not be induced by a family, so the partition must
be both \emph{few} and \emph{faithful}: solving at $\widehat{\irm}_k$ rather than at the
optimum's own weights must cost little density. \S\ref{sec:geometry} delivers both: an
$\eta$-net covers $\Pi_n$ with only $\OO((\log n/\sqrt\eta)^{\plen-1})$ cells (polylogarithmic
in $n$, against the $\OO((n/\plen)^{\plen})$ enumeration), each within a tunable
\emph{covering factor} $1+\eta$ of the optimum.

\noindent\emph{(Q2)~Can each cell be solved near-optimally?}
At a fixed iRM-set the subproblem is still a hard densest-subgraph problem.
\S\ref{sec:supermod} supplies a near-optimal cell solver returning, for a tunable $\delta$, a
family within a \emph{solver factor} $\beta=1-\delta$ of the best at $\irm$, running without ever
materializing the meta-path instances (\S\ref{sec:implicit}).

\noindent\emph{(Q3)~Do the two approximation errors compose?}
Both fall on each cell and multiply, and the cell holding $\irmopt$ ties them to the global
optimum: its \emph{cell optimum} $\densoptM[C_k]:=\max\{\dens(\pfam):\mathrm{iRM}(\pfam)\in
C_k\}$ ($\mathrm{iRM}(\pfam)$ the iRM-set $\pfam$ induces) equals $\densopt$ and the realizable $\irmopt$ makes $\pfam^{*}$ a competitor
there, so that cell's \textsc{Solve} certifies density $\ge\tfrac{\beta}{1+\eta}\densopt$, and
reporting the densest family over all cells improves it. The result is a
$\tfrac{\beta}{1+\eta}=\tfrac{1-\delta}{1+\eta}$ approximation, near-optimal
for every fixed $\plen$.

\section{From Enumeration to Covering}\label{sec:geometry}

\noindent
The cover promised in \S\ref{sec:revisit} comes down to one fact: approximation does not need the
\emph{exact} optimal iRM-set $\irmopt$, only one \emph{close} to it, so that the
exponential search for $\irmopt$ collapses to solving at a few fixed \emph{representatives}.

\noindent\textbf{Closeness in \emph{weights} yields closeness in \emph{density}}. Recall from
\S\ref{sec:revisit} that fixing the weights replaces the coupling geometric mean in the density
by the linear \emph{weighted size} $w_{\irm}(S)=\sum_j m_j|S_j|$ (Eq.~\ref{eq:wsize}), equal to
$\plen g$ at a family's own iRM weights. 
Solving instead at a nearby
representative $\irm'$ scores the optimum $\pfam^{*}$ at the \emph{wrong} weights: each
$m^{*}_j$ becomes $m'_j$, so (using $|V^{*}_j|=g(\pfam^{*})/m^{*}_j$) its weighted size grows from
$w_{\irmopt}(\pfam^{*})=\plen\,g(\pfam^{*})$ to $w_{\irm'}(\pfam^{*})=\sum_j m'_j|V^{*}_j|$ (by AM--GM). Since the
solver reads density as the \emph{weighted density} $\rho_{\irm}(S)=\plen|\inst{S}|/w_{\irm}(S)$
(exact at a family's own weights), the density it assigns
$\pfam^{*}$ at $\irm'$ is deflated by a \emph{loss factor} $L$:
\[
\frac{\rho_{\irm'}(\pfam^{*})}{\densopt}
=\frac{w_{\irmopt}(\pfam^{*})}{w_{\irm'}(\pfam^{*})}
=\frac{\plen\,g(\pfam^{*})}{\sum_j m'_j|V^{*}_j|}
=(\tfrac1\plen \sum_{j} m'_j/m^{*}_j)^{-1}
=\frac1L .
\]
Here $L:=\tfrac1\plen\sum_{j} m'_j/m^{*}_j\ge1$, with equality iff $\irm'=\irmopt$, by AM--GM. The optimum thus attains only
$\densopt/L$ at the representative; the closer $\irm'$ to $\irmopt$, the closer $L$ to $1$, so a small
set of representatives covers $\irmopt$ within a controlled density loss. Since $L$ is
governed by the per-axis ratios $m'_j/m^{*}_j$, covering \emph{every} possible $\irmopt$ at
once means distributing the representatives \emph{multiplicatively}, so that each feasible
iRM-set lies within a small ratio of one on every axis.

\noindent\textbf{The change of coordinates.}
A multiplicative grid is awkward in native coordinates.
Feasibility $\prod_{j} m_j=1$ is a \emph{curved} surface that no axis-aligned grid meets, and
the closeness we need is the \emph{multiplicative} ratio $m'_j/m_j$, so a uniformly spaced
grid in $\mathbf m$ is needlessly fine near the origin and uselessly coarse far from it.
Passing to \emph{logarithms} $x_j=\ln m_j$ cures both: feasibility collapses to the
\emph{additive} $\sum_{j} x_j=0$, a flat $(\plen-1)$-dimensional hyperplane, and a
multiplicative ratio becomes an additive displacement, so a \emph{single uniform lattice}
in $\mathbf x$ spaces the representatives multiplicatively, exactly as the loss factor
demands.

\noindent\textbf{Roadmap.}
\S\ref{ssec:polytope} bounds the targets in a flat $(\plen-1)$-polytope $\Pi_n$;
\S\ref{ssec:net} covers it with an $\eta$-net at a coarse $\sqrt\eta$ pitch ($1+\eta$ loss);
\S\ref{ssec:reduction} turns weight-closeness into a density
$\tfrac{\beta}{1+\eta}$ guarantee; and \S\ref{ssec:deltabox} shrinks the net to the sub-area
where the optimum can occupy.

\subsection{The Bounded Log-iRM Polytope}
\label{ssec:polytope}

Write an iRM-set as $\mathbf m=(m_1,\ldots,m_{\plen})$ and pass to
logarithmic coordinates $x_j=\ln m_j$, $\mathbf x=(x_1,\ldots,x_{\plen})$.
By the feasibility condition of
Property~\ref{prop:irm-product}, $\prod_{j=1}^{\plen} m_j=1$  linearizes
to $\sum_{j=1}^{\plen} x_j=0$.

\begin{lemma}\label{lem:hyperplane}
In logarithmic coordinates the feasible iRM-sets form the
$(\plen-1)$-dimensional affine subspace (a hyperplane)
\[
\mathcal H=\Big\{\mathbf x\in\mathbb R^{\plen}:
\textstyle\sum_{j=1}^{\plen} x_j=0\Big\}.
\]
\end{lemma}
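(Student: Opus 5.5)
The plan is to show that the coordinatewise logarithm carries the set of feasible iRM-sets bijectively onto $\mathcal H$, and then to check that $\mathcal H$ is a linear subspace of dimension $\plen-1$.

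\textbf{Step 1: the log map is a bijection.} A feasible iRM-set is, by the extension following Property~\ref{prop:irm-product}, any $\mathbf m\in\mathbb R_{>0}^{\plen}$ with $\prod_j m_j=1$. Positivity is implicit, since realizable weights are $g(\pfam)/|V_j|>0$. The map $\mathbf m\mapsto\mathbf x=(\ln m_1,\dots,\ln m_\plen)$ is a bijection from $\mathbb R_{>0}^{\plen}$ onto $\mathbb R^{\plen}$, with inverse $x_j\mapsto e^{x_j}$.

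\textbf{Step 2: the constraint becomes linear.} Since $\ln$ is injective on $(0,\infty)$ and turns products into sums, we have
\[
\prod_j m_j=1 \iff \ln\prod_j m_j=0 \iff \sum_j x_j=0 .
\]
So the image of the feasible set is exactly $\mathcal H$. Conversely, every $\mathbf x\in\mathcal H$ pulls back to $\mathbf m=(e^{x_1},\dots,e^{x_\plen})$, which is positive and has product $e^{\sum_j x_j}=1$, hence is feasible.

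\textbf{Step 3: dimension.} The set $\mathcal H$ is the kernel of the nonzero linear functional $\mathbf x\mapsto\langle\mathbf 1,\mathbf x\rangle$. By rank--nullity it is a linear, hence affine, subspace of dimension $\plen-1$, that is, a hyperplane through the origin. An explicit basis is $e_j-e_\plen$ for $j<\plen$.

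There is no real obstacle here. The only point needing care is the implicit positivity of iRM weights, without which the logarithm is undefined. I would state that explicitly, pointing to the definition $m_j=g(\pfam)/|V_j|$ for realizable sets and taking feasibility to mean $\mathbf m>0$ with unit product.
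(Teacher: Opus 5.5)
Your proof is correct and follows the paper's argument: the coordinatewise log/exp bijection turns $\prod_j m_j=1$ into the single linear constraint $\sum_j x_j=0$, which cuts out a codimension-one subspace of $\mathbb R^{\plen}$. Your explicit remark that feasibility must include $\mathbf m>0$, so that the logarithm is defined, is a worthwhile addition that the paper leaves implicit.
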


\begin{proof}
Under the bijection $\mathbf x\mapsto e^{\mathbf x}$, $\prod_j m_j=1$ is equivalent
to $\sum_j x_j=0$, one linear constraint of codimension one.
\end{proof}

So the $\plen$ coordinates carry only $\plen-1$ free parameters. For
$\langle\textsf{C},\textsf{G},\textsf{D}\rangle$ ($\plen=3$),
Figure~\ref{fig:net-variants}(a) draws the resulting $2$-dimensional plane, which is a finite \emph{hexagon}. We settle its boundary below.

Each $m_j$ is located between the smallest and largest values, and under logarithms, the bound is a hypercube. The lemma below makes this precise.

\begin{lemma}\label{lem:box}
Every feasible iRM-set satisfies\footnote{The uniform cap $|V_k|\le n$ is for
readability; with per-type sizes $n_k=|V(A_k)|$ the same computation yields the
per-axis bounds $n_j^{-(\plen-1)/\plen}\le m_j\le\big(\prod_{k\neq j}n_k\big)^{1/\plen}$.}
\[
n^{-(\plen-1)/\plen}\ \le\ m_j\ \le\ n^{(\plen-1)/\plen}
\qquad\forall\, j\in[\plen].
\]
\end{lemma}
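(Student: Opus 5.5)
The plan is a direct computation from the definition of the iRM set, $m_j=g(\pfam)/|V_j|$ with $g(\pfam)=(\prod_k|V_k|)^{1/\plen}$. The lemma concerns feasible iRM-sets that arise from families, i.e.\ realizable ones. A generic vector with $\prod_j m_j=1$ is unbounded, so the statement must be read as bounding realizable sets, or equivalently as defining the box we restrict to. For a nonempty family we have $1\le|V_k|\le n$ for every $k$; the empty family induces no iRM set, so we assume all layers are nonempty. First I will rewrite $m_j$ so that the factor $|V_j|$ is separated from the others:
\[
m_j=\frac{\bigl(\prod_{k}|V_k|\bigr)^{1/\plen}}{|V_j|}=|V_j|^{-(\plen-1)/\plen}\prod_{k\neq j}|V_k|^{1/\plen}.
\]

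Each bound then follows by pushing every factor to its extreme independently. For the lower bound, take $|V_k|\ge1$ for $k\neq j$ and $|V_j|\le n$; this gives $m_j\ge n^{-(\plen-1)/\plen}$. For the upper bound, take $|V_k|\le n$ for the $\plen-1$ indices $k\neq j$ and $|V_j|\ge1$; this gives $m_j\le (n^{\plen-1})^{1/\plen}=n^{(\plen-1)/\plen}$. Keeping $n_k=|V(A_k)|$ in place of the uniform cap $n$ in the same two substitutions yields the refined per-axis bounds in the footnote: $m_j\ge n_j^{-(\plen-1)/\plen}$ and $m_j\le(\prod_{k\neq j}n_k)^{1/\plen}$.

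Finally, I will note the consequence in log coordinates: $|x_j|\le\frac{\plen-1}{\plen}\ln n$. This box, intersected with the hyperplane $\mathcal H$ of Lemma~\ref{lem:hyperplane}, gives the bounded polytope; for $\plen=3$ it is the hexagon mentioned after that lemma. There is no real obstacle here. The only care needed is the scoping point above: the bound depends on realizability, namely integrality and $1\le|V_k|\le n$, and not merely on $\prod_j m_j=1$. I would state explicitly that the targets of interest, such as $\irmopt$, are realizable.
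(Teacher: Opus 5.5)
Your proof is correct and is essentially the paper's own computation: the paper writes $m_j=\bigl(\prod_{k\neq j}|V_k|/|V_j|^{i-1}\bigr)^{1/i}$ and bounds the numerator and denominator using $1\le|V_k|\le n$, which is exactly your factor-by-factor extremization. Your scoping caveat is also right, since that argument, like yours, uses the layer sizes $|V_k|$ and so really bounds realizable iRM-sets, not every vector with $\prod_j m_j=1$.
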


\begin{proof}
$m_j=\big(\prod_{k\neq j}|V_k|/|V_j|^{\plen-1}\big)^{1/\plen}$; with $1\le|V_k|\le n$
the numerator is $\le n^{\plen-1}$ and the denominator $\ge1$, giving the upper
bound, and symmetrically the lower bound.
\end{proof}

Taking logarithms, every feasible point lies in the hypercube
\[
B_n=\Big[-\tfrac{\plen-1}{\plen}\ln n,\ \tfrac{\plen-1}{\plen}\ln n\Big]^{\plen},
\]
The bound treats each layer as free in $[1,n]$; the actual budget
$\sum_k|V_k|\le n$ forbids those extremal profiles (a coordinate reaches
$n^{(\plen-1)/\plen}$ only if $\plen-1$ parts are each of size $n$), so $B_n$
strictly over-encloses the feasible region, which only tightens the cover.
Intersecting the hypercube with the hyperplane yields the
feasible region
$
\Pi_n=B_n\cap\mathcal H,
$
a bounded $(\plen-1)$-dimensional polytope 
that contains every feasible iRM-set. That settles boundedness.

At $\plen=3$, $\Pi_n$ is the hypercube $B_n$ sliced perpendicular to its main diagonal
$(1,1,1)$, a regular hexagon; in general it is a $(\plen-1)$-polytope with up to $2\plen$ facets.

\begin{figure*}[t]
\centering
\begin{minipage}[t]{0.235\textwidth}\centering\vspace{0pt}
\resizebox{\linewidth}{!}{%
\begin{tikzpicture}[
  >={Stealth[length=1.6mm]},
  gridline/.style={auxColor!30, line width=0.3pt},
  Mdot/.style={auxColor!95, fill=auxColor!95},
  netpt/.style={netColor, fill=netColor},
  opt/.style={disColor, fill=disColor}
]
\def\s{1.3082}
\def\Hex{(2.4,0)--(1.2,2.078)--(-1.2,2.078)--(-2.4,0)--(-1.2,-2.078)--(1.2,-2.078)--cycle}
\fill[disColor!12] (-0.72,-0.72) rectangle (0,0);
\begin{scope}
  \clip \Hex;
  \foreach \i in {-4,...,4}{\draw[gridline] (\i*0.72,-2.4)--(\i*0.72,2.4);}
  \foreach \j in {-4,...,4}{\draw[gridline] (-2.6,\j*0.72)--(2.6,\j*0.72);}
  \foreach \a in {1,...,8}{\foreach \b in {1,...,4}{\foreach \cc in {1,...,4}{
     \pgfmathsetmacro{\ux}{ln(\b/\a)/sqrt(2)*\s}
     \pgfmathsetmacro{\vy}{ln((\cc*\cc)/(\a*\b))/sqrt(6)*\s}
     \fill[Mdot] (\ux,\vy) circle(0.8pt);
  }}}
\end{scope}
\draw[line width=0.8pt, auxColor!85] \Hex;
\node[font=\scriptsize, auxColor!85] at (-1.75,1.75) {$\Pi_n$};
\draw[->, auxColor!80, line width=0.5pt] (1.8,1.039)--(2.27,1.31);
\node[font=\scriptsize, cmpColor!80!black, anchor=west] at (2.3,1.35) {$\ln m_C$};
\draw[->, auxColor!80, line width=0.5pt] (-1.8,1.039)--(-2.27,1.31);
\node[font=\scriptsize, genColor!75!black, anchor=east] at (-2.3,1.35) {$\ln m_G$};
\draw[->, auxColor!80, line width=0.5pt] (0,-2.078)--(0,-2.6);
\node[font=\scriptsize, disColor!80!black, anchor=north] at (0,-2.66) {$\ln m_D$};
\fill[black] (0,0) circle(1.1pt);
\node[font=\scriptsize, anchor=south west] at (0.06,0.04) {$\mathbf 1$};
\foreach \p in {(-2.52,-0.36),(-2.52,0.36),(-1.8,-1.8),(-1.8,-1.08),(-1.8,-0.36),(-1.8,0.36),(-1.8,1.08),(-1.8,1.8),(-1.08,-1.8),(-1.08,-1.08),(-1.08,-0.36),(-1.08,0.36),(-1.08,1.08),(-1.08,1.8),(-0.36,-1.8),(-0.36,-1.08),(-0.36,-0.36),(-0.36,0.36),(-0.36,1.08),(-0.36,1.8),(0.36,-1.8),(0.36,-1.08),(0.36,-0.36),(0.36,0.36),(0.36,1.08),(0.36,1.8),(1.08,-1.8),(1.08,-1.08),(1.08,-0.36),(1.08,0.36),(1.08,1.08),(1.08,1.8),(1.8,-1.8),(1.8,-1.08),(1.8,-0.36),(1.8,0.36),(1.8,1.08),(1.8,1.8),(2.52,-0.36),(2.52,0.36)}{\fill[netpt] \p circle(1.1pt);}
\draw[disColor, line width=0.9pt] (-0.72,-0.72) rectangle (0,0);
\draw[<->, auxColor, line width=0.4pt] (-0.72,-0.86)--(0,-0.86);
\node[font=\scriptsize, auxColor, anchor=north] at (-0.36,-0.86) {$\eta$};
\fill[netpt] (-0.36,-0.36) circle(1.4pt);
\node[font=\scriptsize, netColor!55!black, anchor=west] at (0.06,-0.28) {$\widehat{\mathbf m}$};
\draw[netColor!60, line width=0.3pt] (-0.30,-0.34)--(0.04,-0.29);
\node[font=\scriptsize, netColor!60!black, anchor=north] at (-1.5,-2.14) {lattice point};
\draw[netColor!60, line width=0.3pt] (-1.32,-2.02)--(-1.1,-1.86);
\node[opt,star,star points=5,star point ratio=2.3,inner sep=0pt,
      minimum size=4.2pt] at (-0.641,-0.370) {};
\node[font=\scriptsize, disColor!75!black, anchor=east] at (-1.15,-0.55) {$\mathbf m^{*}$};
\draw[disColor!70, line width=0.3pt] (-1.06,-0.55)--(-0.70,-0.40);
\end{tikzpicture}}\\[1pt]
{\footnotesize (a) polytope $\Pi_n$, step $\eta$}
\end{minipage}\hfill
\begin{minipage}[t]{0.235\textwidth}\centering\vspace{0pt}
\resizebox{\linewidth}{!}{%
\begin{tikzpicture}[
  >={Stealth[length=1.6mm]},
  gridline/.style={auxColor!30, line width=0.3pt},
  Mdot/.style={auxColor!95, fill=auxColor!95},
  netpt/.style={netColor, fill=netColor},
  opt/.style={disColor, fill=disColor}
]
\def\Hex{(2.4,0)--(1.2,2.078)--(-1.2,2.078)--(-2.4,0)--(-1.2,-2.078)--(1.2,-2.078)--cycle}
\fill[netColor!12] (-1.077,-1.077) rectangle (0,0);
\begin{scope}
  \clip \Hex;
  \draw[gridline] (-4.308,-2.6)--(-4.308,2.6);
  \draw[gridline] (-2.6,-4.308)--(2.6,-4.308);
  \draw[gridline] (-3.231,-2.6)--(-3.231,2.6);
  \draw[gridline] (-2.6,-3.231)--(2.6,-3.231);
  \draw[gridline] (-2.154,-2.6)--(-2.154,2.6);
  \draw[gridline] (-2.6,-2.154)--(2.6,-2.154);
  \draw[gridline] (-1.077,-2.6)--(-1.077,2.6);
  \draw[gridline] (-2.6,-1.077)--(2.6,-1.077);
  \draw[gridline] (0.000,-2.6)--(0.000,2.6);
  \draw[gridline] (-2.6,0.000)--(2.6,0.000);
  \draw[gridline] (1.077,-2.6)--(1.077,2.6);
  \draw[gridline] (-2.6,1.077)--(2.6,1.077);
  \draw[gridline] (2.154,-2.6)--(2.154,2.6);
  \draw[gridline] (-2.6,2.154)--(2.6,2.154);
  \draw[gridline] (3.231,-2.6)--(3.231,2.6);
  \draw[gridline] (-2.6,3.231)--(2.6,3.231);
  \draw[gridline] (4.308,-2.6)--(4.308,2.6);
  \draw[gridline] (-2.6,4.308)--(2.6,4.308);
  \foreach \a in {1,...,8}{\foreach \b in {1,...,4}{\foreach \cc in {1,...,4}{
     \pgfmathsetmacro{\ux}{ln(\b/\a)/sqrt(2)*1.3082}
     \pgfmathsetmacro{\vy}{ln((\cc*\cc)/(\a*\b))/sqrt(6)*1.3082}
     \fill[Mdot] (\ux,\vy) circle(0.8pt);
  }}}
\end{scope}
\draw[line width=0.8pt, auxColor!85] \Hex;
\node[font=\scriptsize, auxColor!85] at (-1.75,1.75) {$\Pi_n$};
\draw[->, auxColor!80, line width=0.5pt] (1.8,1.039)--(2.27,1.31);
\node[font=\scriptsize, cmpColor!80!black, anchor=west] at (2.3,1.35) {$\ln m_C$};
\draw[->, auxColor!80, line width=0.5pt] (-1.8,1.039)--(-2.27,1.31);
\node[font=\scriptsize, genColor!75!black, anchor=east] at (-2.3,1.35) {$\ln m_G$};
\draw[->, auxColor!80, line width=0.5pt] (0,-2.078)--(0,-2.6);
\node[font=\scriptsize, disColor!80!black, anchor=north] at (0,-2.66) {$\ln m_D$};
\fill[black] (0,0) circle(1.1pt);
\node[font=\scriptsize, anchor=south west] at (0.06,0.04) {$\mathbf 1$};
\foreach \p in {(-2.692,-0.538),(-2.692,0.538),(-1.615,-1.615),(-1.615,-0.538),(-1.615,0.538),(-1.615,1.615),(-0.538,-1.615),(-0.538,-0.538),(-0.538,0.538),(-0.538,1.615),(0.538,-1.615),(0.538,-0.538),(0.538,0.538),(0.538,1.615),(1.615,-1.615),(1.615,-0.538),(1.615,0.538),(1.615,1.615),(2.692,-0.538),(2.692,0.538)}{\fill[netpt] \p circle(1.1pt);}
\draw[netColor, line width=0.9pt] (-1.077,-1.077) rectangle (0,0);
\draw[<->, auxColor, line width=0.4pt] (-1.077,-1.217)--(0,-1.217);
\node[font=\scriptsize, auxColor, anchor=north] at (-0.538,-1.217) {$\sqrt{2\eta/e}$};
\fill[netpt] (-0.538,-0.538) circle(1.4pt);
\node[font=\scriptsize, netColor!55!black, anchor=west] at (-0.46,-0.70) {$\widehat{\mathbf m}$};
\node[opt,star,star points=5,star point ratio=2.3,inner sep=0pt,minimum size=4.2pt] at (-0.641,-0.370) {};
\node[font=\scriptsize, disColor!75!black, anchor=east] at (-1.25,-0.34) {$\mathbf m^{*}$};
\draw[disColor!70, line width=0.3pt] (-1.16,-0.34)--(-0.70,-0.37);
\end{tikzpicture}}\\[1pt]
{\footnotesize (b) the $\eta$-net, step $\sqrt{2\eta/e}$}
\end{minipage}\hfill
\begin{minipage}[t]{0.235\textwidth}\centering\vspace{0pt}
\resizebox{\linewidth}{!}{%
\begin{tikzpicture}[
  >={Stealth[length=1.6mm]},
  gridline/.style={auxColor!70, line width=0.8pt},
  Mdot/.style={auxColor!95, fill=auxColor!95},
  netpt/.style={netColor, fill=netColor},
  opt/.style={disColor, fill=disColor}
]
\def\Hex{(2.4,0)--(1.2,2.078)--(-1.2,2.078)--(-2.4,0)--(-1.2,-2.078)--(1.2,-2.078)--cycle}
\begin{scope}
  \clip \Hex;
  \fill[genColor!8] (-1.218,-0.942)--(0.686,-0.942)--(-0.266,0.707)--cycle;
  \begin{scope}\clip (-1.218,-0.942)--(0.686,-0.942)--(-0.266,0.707)--cycle;
    \draw[gridline] (-4.320,-3)--(-4.320,3);
    \draw[gridline] (-3,-4.320)--(3,-4.320);
    \draw[gridline] (-3.600,-3)--(-3.600,3);
    \draw[gridline] (-3,-3.600)--(3,-3.600);
    \draw[gridline] (-2.880,-3)--(-2.880,3);
    \draw[gridline] (-3,-2.880)--(3,-2.880);
    \draw[gridline] (-2.160,-3)--(-2.160,3);
    \draw[gridline] (-3,-2.160)--(3,-2.160);
    \draw[gridline] (-1.440,-3)--(-1.440,3);
    \draw[gridline] (-3,-1.440)--(3,-1.440);
    \draw[gridline] (-0.720,-3)--(-0.720,3);
    \draw[gridline] (-3,-0.720)--(3,-0.720);
    \draw[gridline] (0.000,-3)--(0.000,3);
    \draw[gridline] (-3,0.000)--(3,0.000);
    \draw[gridline] (0.720,-3)--(0.720,3);
    \draw[gridline] (-3,0.720)--(3,0.720);
    \draw[gridline] (1.440,-3)--(1.440,3);
    \draw[gridline] (-3,1.440)--(3,1.440);
    \draw[gridline] (2.160,-3)--(2.160,3);
    \draw[gridline] (-3,2.160)--(3,2.160);
    \draw[gridline] (2.880,-3)--(2.880,3);
    \draw[gridline] (-3,2.880)--(3,2.880);
    \draw[gridline] (3.600,-3)--(3.600,3);
    \draw[gridline] (-3,3.600)--(3,3.600);
    \draw[gridline] (4.320,-3)--(4.320,3);
    \draw[gridline] (-3,4.320)--(3,4.320);
  \end{scope}
  \foreach \a in {1,...,8}{\foreach \b in {1,...,4}{\foreach \cc in {1,...,4}{
     \pgfmathsetmacro{\ux}{ln(\b/\a)/sqrt(2)*1.3082}
     \pgfmathsetmacro{\vy}{ln((\cc*\cc)/(\a*\b))/sqrt(6)*1.3082}
     \fill[Mdot] (\ux,\vy) circle(0.8pt);
  }}}
\end{scope}
\draw[line width=0.8pt, auxColor!85] \Hex;
\node[font=\scriptsize, auxColor!85] at (-1.75,1.75) {$\Pi_n$};
\draw[->, auxColor!80, line width=0.5pt] (1.8,1.039)--(2.27,1.31);
\node[font=\scriptsize, cmpColor!80!black, anchor=west] at (2.3,1.35) {$\ln m_C$};
\draw[->, auxColor!80, line width=0.5pt] (-1.8,1.039)--(-2.27,1.31);
\node[font=\scriptsize, genColor!75!black, anchor=east] at (-2.3,1.35) {$\ln m_G$};
\draw[->, auxColor!80, line width=0.5pt] (0,-2.078)--(0,-2.6);
\node[font=\scriptsize, disColor!80!black, anchor=north] at (0,-2.66) {$\ln m_D$};
\fill[black] (0,0) circle(1.1pt);
\node[font=\scriptsize, anchor=south west] at (0.06,0.04) {$\mathbf 1$};
\foreach \p in {(-1.08,-1.08),(-1.08,-0.36),(-0.36,-1.08),(-0.36,-0.36),(-0.36,0.36),(0.36,-1.08),(0.36,-0.36),(0.36,0.36)}{\fill[netpt] \p circle(1.1pt);}
\draw[genColor!75, line width=0.8pt, densely dashed] (-1.218,-0.942)--(0.686,-0.942)--(-0.266,0.707)--cycle;
\node[font=\scriptsize, genColor!60!black, anchor=south] at (-0.266,0.747) {$\Delta$-box};
\node[opt,star,star points=5,star point ratio=2.3,inner sep=0pt,minimum size=4.2pt] at (-0.641,-0.370) {};
\node[font=\scriptsize, disColor!75!black, anchor=east] at (-1.25,-0.34) {$\mathbf m^{*}$};
\draw[disColor!70, line width=0.3pt] (-1.16,-0.34)--(-0.70,-0.37);
\end{tikzpicture}}\\[1pt]
{\footnotesize (c) $\Delta$-box $+$ step $\eta$ }
\end{minipage}\hfill
\begin{minipage}[t]{0.235\textwidth}\centering\vspace{0pt}
\resizebox{\linewidth}{!}{%
\begin{tikzpicture}[
  >={Stealth[length=1.6mm]},
  gridline/.style={auxColor!70, line width=0.8pt},
  Mdot/.style={auxColor!95, fill=auxColor!95},
  netpt/.style={netColor, fill=netColor},
  opt/.style={disColor, fill=disColor}
]
\def\Hex{(2.4,0)--(1.2,2.078)--(-1.2,2.078)--(-2.4,0)--(-1.2,-2.078)--(1.2,-2.078)--cycle}
\begin{scope}
  \clip \Hex;
  \fill[trapColor!8] (-1.218,-0.942)--(0.686,-0.942)--(-0.266,0.707)--cycle;
  \begin{scope}\clip (-1.218,-0.942)--(0.686,-0.942)--(-0.266,0.707)--cycle;
    \draw[gridline] (-4.308,-3)--(-4.308,3);
    \draw[gridline] (-3,-4.308)--(3,-4.308);
    \draw[gridline] (-3.231,-3)--(-3.231,3);
    \draw[gridline] (-3,-3.231)--(3,-3.231);
    \draw[gridline] (-2.154,-3)--(-2.154,3);
    \draw[gridline] (-3,-2.154)--(3,-2.154);
    \draw[gridline] (-1.077,-3)--(-1.077,3);
    \draw[gridline] (-3,-1.077)--(3,-1.077);
    \draw[gridline] (0.000,-3)--(0.000,3);
    \draw[gridline] (-3,0.000)--(3,0.000);
    \draw[gridline] (1.077,-3)--(1.077,3);
    \draw[gridline] (-3,1.077)--(3,1.077);
    \draw[gridline] (2.154,-3)--(2.154,3);
    \draw[gridline] (-3,2.154)--(3,2.154);
    \draw[gridline] (3.231,-3)--(3.231,3);
    \draw[gridline] (-3,3.231)--(3,3.231);
    \draw[gridline] (4.308,-3)--(4.308,3);
    \draw[gridline] (-3,4.308)--(3,4.308);
  \end{scope}
  \foreach \a in {1,...,8}{\foreach \b in {1,...,4}{\foreach \cc in {1,...,4}{
     \pgfmathsetmacro{\ux}{ln(\b/\a)/sqrt(2)*1.3082}
     \pgfmathsetmacro{\vy}{ln((\cc*\cc)/(\a*\b))/sqrt(6)*1.3082}
     \fill[Mdot] (\ux,\vy) circle(0.8pt);
  }}}
\end{scope}
\draw[line width=0.8pt, auxColor!85] \Hex;
\node[font=\scriptsize, auxColor!85] at (-1.75,1.75) {$\Pi_n$};
\draw[->, auxColor!80, line width=0.5pt] (1.8,1.039)--(2.27,1.31);
\node[font=\scriptsize, cmpColor!80!black, anchor=west] at (2.3,1.35) {$\ln m_C$};
\draw[->, auxColor!80, line width=0.5pt] (-1.8,1.039)--(-2.27,1.31);
\node[font=\scriptsize, genColor!75!black, anchor=east] at (-2.3,1.35) {$\ln m_G$};
\draw[->, auxColor!80, line width=0.5pt] (0,-2.078)--(0,-2.6);
\node[font=\scriptsize, disColor!80!black, anchor=north] at (0,-2.66) {$\ln m_D$};
\fill[black] (0,0) circle(1.1pt);
\node[font=\scriptsize, anchor=south west] at (0.06,0.04) {$\mathbf 1$};
\foreach \p in {(-1.615,-0.538),(-0.538,-0.538),(-0.538,0.538),(0.538,-0.538),(0.538,0.538)}{\fill[netpt] \p circle(1.1pt);}
\draw[trapColor!75, line width=0.8pt, densely dashed] (-1.218,-0.942)--(0.686,-0.942)--(-0.266,0.707)--cycle;
\node[font=\scriptsize, trapColor!60!black, anchor=south] at (-0.266,0.747) {$\Delta$-box};
\node[opt,star,star points=5,star point ratio=2.3,inner sep=0pt,minimum size=4.2pt] at (-0.641,-0.370) {};
\node[font=\scriptsize, disColor!75!black, anchor=east] at (-1.25,-0.34) {$\mathbf m^{*}$};
\draw[disColor!70, line width=0.3pt] (-1.16,-0.34)--(-0.70,-0.37);
\end{tikzpicture}}\\[1pt]
{\footnotesize (d) $\Delta$-box $+$ step $\sqrt{2\eta/e}$}
\end{minipage}
\caption{
The log-iRM polytope and covering nets
}
\label{fig:net-variants}
\end{figure*}

\subsection{Covering with the $\eta$-Net}
\label{ssec:net}

With the targets boxed into $\Pi_n$, we lay the cover. Two constraints shape the grid:
every grid point must be a \emph{feasible} weight vector ($\prod_j m_j=1$), and we must be
able to \emph{round} an arbitrary target onto it.
A naive axis-aligned grid in the $\plen$ raw log-coordinates meets neither: a
generic point has $\sum_j x_j\neq0$, off the hyperplane $\mathcal H$, so rounding to
it leaves $\mathcal H$ and re-projecting reintroduces error on every coordinate.
Laying the grid \emph{on} $\mathcal H$, as an integer lattice, fixes both at once.

\begin{definition}[Zero-sum log-lattice]\label{def:lattice}
For a step $s>0$ let
\[
\Lambda_s=\Big\{(k_1 s,\ldots,k_{\plen}s):k\in\mathbb Z^{\plen},\ \sum_j k_j=0\Big\}\cap B_n^{+s},
\quad
B_n^{+s}=\big[-\tfrac{\plen-1}{\plen}\ln n-s,\ \tfrac{\plen-1}{\plen}\ln n+s\big]^{\plen},
\]
i.e.\ the lattice points whose log-coordinate $x_j=k_j s$ is an integer multiple of the
step $s$, with the index $k_j=x_j/s\in\mathbb Z$ and $\sum_j k_j=0$ keeping
$\sum_j x_j=0$ (the point on $\mathcal H$). The box is grown by one step per side, the smallest
enlargement whose cells cover $B_n$, so every rounding image of a point in $B_n$ is retained.
Let $\mathcal M^{s}=\{(e^{x_1},\ldots,e^{x_{\plen}}):\mathbf x\in\Lambda_s\}$ be
its image in weight space. Every point of $\mathcal M^{s}$ satisfies
$\prod_j m_j=e^{\sum_j x_j}=1$ \emph{exactly}, with no projection step.
\end{definition}

\begin{lemma}[Zero-sum rounding]\label{lem:round}
For every feasible target $\mathbf x^{*}\in B_n$ with $\sum_j x^{*}_j=0$ there is
$\boldsymbol\lambda\in\Lambda_s$ with $|\lambda_j-x^{*}_j|\le s$ for all $j$.
\end{lemma}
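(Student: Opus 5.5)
We will construct $\boldsymbol\lambda$ explicitly by coordinatewise rounding followed by a zero-sum repair, in the style of largest-remainder apportionment. Write $y_j=x^{*}_j/s$, so $\sum_j y_j=0$, and let $r_j=y_j-\lfloor y_j\rfloor\in[0,1)$ be the fractional parts. Since $\sum_j y_j=0$ is an integer, the deficit
\[
t=\sum_j r_j=-\sum_j\lfloor y_j\rfloor
\]
is an integer with $0\le t<\plen$.

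We then set $k_j=\lfloor y_j\rfloor+1$ for $t$ of the indices and $k_j=\lfloor y_j\rfloor$ for the remaining $\plen-t$. For definiteness we round up the $t$ indices with the largest $r_j$, although any choice works. By construction $k\in\mathbb Z^{\plen}$ and $\sum_j k_j=\sum_j\lfloor y_j\rfloor+t=0$, so $\boldsymbol\lambda=(k_1s,\dots,k_{\plen}s)$ lies on $\mathcal H$.

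Each $k_j$ is either $\lfloor y_j\rfloor$ or $\lceil y_j\rceil$ (when $y_j$ is an integer, rounding it up still gives $k_j-y_j=1$). Hence $|k_j-y_j|\le1$, which gives
\[
|\lambda_j-x^{*}_j|=s\,|k_j-y_j|\le s .
\]

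It remains to check that $\boldsymbol\lambda$ survives the intersection with $B_n^{+s}$ in Definition~\ref{def:lattice}. Since $x^{*}\in B_n$ and each coordinate moves by at most $s$, we have $|\lambda_j|\le\tfrac{\plen-1}{\plen}\ln n+s$ for every $j$, so $\boldsymbol\lambda\in B_n^{+s}$ and therefore $\boldsymbol\lambda\in\Lambda_s$.

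The only delicate step is the integrality of the deficit $t$ and the bound $0\le t\le\plen-1$. This guarantees that there are enough indices to round up, and that no coordinate ever needs to move by more than one step. Both facts follow from $\sum_j y_j=0$ together with $r_j\in[0,1)$, so we expect no real obstacle.
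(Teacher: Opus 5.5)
Your proof is correct and follows essentially the paper's route: round each coordinate of $\mathbf x^{*}/s$ to an integer, then repair the integer defect of $\sum_j k_j$ by unit adjustments to a few coordinates, so that every final error stays within one step $s$. The differences are cosmetic. The paper starts from nearest-integer rounding and needs a sign-counting argument to find enough coordinates to shift. Starting from the floor, as you do, makes your deficit $t$ automatically lie in $\{0,\dots,\plen-1\}$, so any $t$ indices can be bumped. You also verify membership in $B_n^{+s}$ explicitly, which the paper leaves to Definition~\ref{def:lattice}.
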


\begin{proof}
Round each coordinate, $k_j=\lfloor x^{*}_j/s+\tfrac12\rfloor$, with error
$e_j=k_js-x^{*}_j\in[-s/2,s/2]$; the defect $r=\sum_j k_j\in\mathbb Z$ has
$|r|\le\plen/2$. If $r\neq0$, adjust $|r|$ coordinates by $\mp1$ (possible since at
least $|r|$ errors share its sign), so each adjusted error lands in $[-s,s)$, the
rest stay in $[-s/2,s/2]$, and now $\sum_j k_j=0$. Every final error has magnitude
at most $s$.
\end{proof}

\noindent\textbf{Choosing the pitch.}
We ask of the lattice one thing: rounding any feasible target to its nearest net point should
keep the density loss to a factor $1+\eta$. A coarse pitch can meet it, as the next
definition and theorem show.

\begin{definition}[$\eta$-net]\label{def:net}
For $\eta\in(0,1]$, set $s(\eta)=\sqrt{2\eta/e}$ and let
$\mathcal M_\eta=\mathcal M^{s(\eta)}$ (the \emph{second-order net}) be the zero-sum
log-lattice of Definition~\ref{def:lattice} at this step.
\end{definition}

\begin{theorem}[A small $\eta$-net]\label{thm:net}
For $\eta\in(0,1]$, the net $\mathcal M_\eta$ has size
$\OO\big((\log n/\sqrt\eta)^{\plen-1}\big)$, and for every feasible $\irmopt$ it
contains a representative $\irm'$ with $\prod_j m'_j=1$ whose loss factor
$L=\tfrac1\plen\sum_j e^{\theta_j}$, $\theta_j=\ln(m'_j/m^{*}_j)$, equals
$1+\Theta(\|\boldsymbol\theta\|^{2})$ and satisfies $L\le1+\eta$.
\end{theorem}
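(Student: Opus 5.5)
The plan has two independent parts: the loss bound, which comes from a second-order Taylor estimate on the zero-sum displacement, and the size bound, which comes from counting lattice points in the enlarged box. Throughout, write $s=s(\eta)=\sqrt{2\eta/e}$.

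\textbf{Representative and displacement.} Fix a feasible $\irmopt$ and set $\mathbf x^{*}=\ln\irmopt$. By Lemma~\ref{lem:box}, $\mathbf x^{*}\in B_n$, and by Lemma~\ref{lem:hyperplane}, $\sum_j x^{*}_j=0$. Lemma~\ref{lem:round} then gives $\boldsymbol\lambda\in\Lambda_s$ with $|\lambda_j-x^{*}_j|\le s$ for every $j$. Take $\irm'=e^{\boldsymbol\lambda}\in\mathcal M_\eta$; Definition~\ref{def:lattice} gives $\prod_j m'_j=1$ exactly. The displacement $\theta_j=\lambda_j-x^{*}_j$ satisfies $\|\boldsymbol\theta\|_\infty\le s$. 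The key structural fact is that both endpoints lie on $\mathcal H$, so $\sum_j\theta_j=0$.

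\textbf{Loss bound.} Because $\sum_j\theta_j=0$, the linear term of the exponential cancels:
\[
L-1=\tfrac1\plen\sum_j\big(e^{\theta_j}-1-\theta_j\big).
\]
This is why the pitch can be $\sqrt\eta$ rather than $\eta$: only the quadratic remainder survives. I will bound each summand using Taylor's theorem with Lagrange remainder, $e^{\theta}-1-\theta=\tfrac{\theta^2}{2}e^{\xi}$ for some $\xi$ between $0$ and $\theta$. This gives
\[
\tfrac{\theta^2}{2}e^{-|\theta|}\ \le\ e^{\theta}-1-\theta\ \le\ \tfrac{\theta^2}{2}e^{|\theta|}.
\]
Since $\eta\le1$, we have $s\le\sqrt{2/e}<1$, so $e^{|\theta_j|}\le e^{s}\le e$.

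For the upper bound, each summand is at most $\tfrac{s^2}{2}e=\eta$, so averaging gives $L\le1+\eta$. The constant $2/e$ in $s(\eta)$ is chosen exactly so that this holds.

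For the lower bound, each summand is at least $\theta_j^2/(2e)$. Together these give
\[
\frac{\|\boldsymbol\theta\|^2}{2e\plen}\ \le\ L-1\ \le\ \frac{e\,\|\boldsymbol\theta\|^2}{2\plen},
\]
which is the claimed $L=1+\Theta(\|\boldsymbol\theta\|^2)$, with constants depending only on $\plen$.

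\textbf{Size.} A point of $\Lambda_s$ is determined by integer indices $k_j$ with $|k_j|\le K:=\tfrac{\plen-1}{\plen}\ln n/s+1$ and $\sum_j k_j=0$. Choosing $k_1,\dots,k_{\plen-1}$ freely forces $k_\plen$, so
\[
|\mathcal M_\eta|\le(2K+1)^{\plen-1}.
\]
Since $s=\Theta(\sqrt\eta)$, $\eta\le1$, and $\ln n\ge1$ for $n\ge3$, the additive constants are absorbed into $\OO(\log n/\sqrt\eta)$. This gives $\OO\big((\log n/\sqrt\eta)^{\plen-1}\big)$ for fixed $\plen$.

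\textbf{Main obstacle.} The only delicate point is the cancellation of the first-order term, which requires $\sum_j\theta_j=0$ exactly. That in turn depends on the rounded point being a genuine zero-sum lattice point rather than a projection of an off-hyperplane point, and this is precisely what Lemma~\ref{lem:round} supplies at the cost of widening the per-coordinate error from $s/2$ to $s$. A secondary detail to check is that the enlarged box $B_n^{+s}$ actually retains the rounded point. This holds because $\mathbf x^{*}\in B_n$ and $|\lambda_j-x^{*}_j|\le s$, which places $\boldsymbol\lambda$ within one step of $B_n$ on every axis.
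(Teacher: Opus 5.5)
Your proposal is correct and follows essentially the same route as the paper. You use the representative from Lemma~\ref{lem:round}, cancel the linear term via $\sum_j\theta_j=0$, apply the bound $e^{t}\le1+t+\tfrac{t^{2}}{2}e^{|t|}$ with $e^{|\theta_j|}\le e$ to get $L\le1+\tfrac{e}{2}s(\eta)^{2}=1+\eta$, and count free coordinates for the size; your Lagrange-remainder sandwich also gives the lower bound $L-1\ge\|\boldsymbol\theta\|^{2}/(2e\plen)$, which makes rigorous the $\Theta(\|\boldsymbol\theta\|^{2})$ claim that the paper only reads off informally from the Taylor series.
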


\begin{proof}
$\Lambda_{s(\eta)}$ is $(\plen-1)$-dimensional (the constraint $\sum_j k_j=0$
removes one degree of freedom) and each free coordinate ranges over
$\OO(\log n/s(\eta))=\OO(\log n/\sqrt\eta)$ values (Lemma~\ref{lem:box}), giving the
size. Feasibility is Definition~\ref{def:lattice}; the closeness
$\|\boldsymbol\theta\|_\infty\le s(\eta)$ is Lemma~\ref{lem:round} at
$\mathbf x^{*}=\ln\irmopt$. For the loss, the balance $\sum_j\theta_j=0$ cancels the
linear term,
\[
L=\tfrac1\plen\textstyle\sum_j e^{\theta_j}
=1+\underbrace{\tfrac1\plen\sum_j\theta_j}_{=\,0}
+\tfrac{1}{2\plen}\sum_j\theta_j^{2}+\cdots
=1+\Theta\!\big(\|\boldsymbol\theta\|^{2}\big),
\]
the curvature of the AM--GM gap at equality. For the bound, the elementary inequality
$e^{t}\le1+t+\tfrac{t^{2}}{2}e^{|t|}$ (valid for every real $t$), applied at $t=\theta_j$ and
averaged over $j$, gives $L\le1+\tfrac1\plen\sum_j\theta_j+\tfrac1{2\plen}\sum_j\theta_j^{2}e^{|\theta_j|}$;
the linear sum vanishes by $\sum_j\theta_j=0$, and with $|\theta_j|\le s(\eta)\le1$ each
$e^{|\theta_j|}\le e$, so $L\le1+\tfrac{e}{2}s(\eta)^{2}=1+\eta$.
\end{proof}

\noindent\emph{An intuitive baseline.} The same $1+\eta$ loss is met more straightforwardly by
pinning \emph{each} ratio $m'_j/m^{*}_j$ within $1+\eta$ on its own, the \emph{first-order
net} at pitch $s=\ln(1+\eta)=\Theta(\eta)$. Being quadratically finer, it lays
$\eta^{-(\plen-1)/2}$ times as many cells (Figure~\ref{fig:net-variants}(a) vs.\ (b)); we
keep it only as a baseline.

These net points are the representatives $\widehat{\irm}_k$ that
Algorithm~\ref{alg:overview} solves. Each is the \emph{center} of its cell, the lattice
point every iRM-set there rounds to (Lemma~\ref{lem:round}), so one \textsc{Solve} covers the
whole cell within $1+\eta$. Since $\irmopt$ is unknown, the algorithm solves at
\emph{every} net point and returns the densest family; the cell covering $\irmopt$ is
guaranteed among them, so the global best inherits the $1+\eta$ bound.
On the running instance of Figure~\ref{fig:running-instance}, the $110$ distinct feasible iRM-sets
are covered by 40 net cells, centered at green lattice points (Figure~\ref{fig:net-variants}(a)) and only 20 net cells (Figure~\ref{fig:net-variants}(b)), one \textsc{Solve}
per cell instead of $110$, securing a density loss of
$1+\eta$.

\subsection{How Net Approximation Affects Subproblem Quality}
\label{ssec:reduction}

Theorem~\ref{thm:net} puts a representative $\irm'$ within loss factor $L\le1+\eta$ in the
cell of the unknown optimum $\irmopt$. Running an \emph{existing} fixed-weight solver at
$\irm'$ turns this closeness in \emph{weights} into closeness in \emph{density}: the
\emph{approximate} weights the net hands to the solver cost only the tunable factor
$1+\eta$.


\medskip

\begin{lemma}[Peeling is ratio-robust]\label{lem:perturb}
Let $\pfam^{*}$ be optimal with iRM-set $\irmopt$, and $\irm'$ feasible with loss
factor $\tfrac1\plen\sum_j e^{\theta_j}\le1+\eta$, $\theta_j=\ln(m'_j/m^{*}_j)$
(Theorem~\ref{thm:net}). Peeling with $\irm'$ on any subgraph containing
$G(\pfam^{*})$ yields $\peelout$, the largest density seen during the peel, with
$\densopt\le\plen(1+\eta)\,\peelout$.
\end{lemma}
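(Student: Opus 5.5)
The plan is to run the standard charging argument for reweighted peeling at the weights $\irm'$, and then convert the weighted-density bound it produces back to true density in two places. The first conversion uses the loss factor $L\le1+\eta$ on the optimum. The second uses AM--GM on each peeled suffix.

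\emph{Step 1: the optimum's weighted density at $\irm'$.}
Write $\pfam^{*}=(V^{*}_1,\dots,V^{*}_\plen)$, $f^{*}=|\inst{\pfam^{*}}|$, and $\lambda^{*}=f^{*}/w_{\irm'}(\pfam^{*})$. By the loss-factor computation preceding Lemma~\ref{lem:hyperplane},
\[
w_{\irm'}(\pfam^{*})=\plen\,g(\pfam^{*})\,L\le\plen(1+\eta)\,g(\pfam^{*}),
\]
so $\lambda^{*}\ge\densopt/(\plen(1+\eta))$.

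\emph{Step 2: local optimality of the optimum.}
Since $\pfam^{*}$ maximizes the \emph{density} rather than $h_{\irm'}$, I need a per-vertex lower bound that holds at the wrong weights. My first attempt is to use optimality of $\dens$ directly. Removing $v\in V^{*}_j$ gives
\[
\frac{f^{*}-\cnt{v}{G(\pfam^{*})}}{g'}\le\frac{f^{*}}{g},
\qquad g'=g\,(1-1/|V^{*}_j|)^{1/\plen}.
\]
Using $(1-x)^{1/\plen}\le1-x/\plen$, this yields
\[
\cnt{v}{G(\pfam^{*})}\ge\frac{f^{*}}{\plen|V^{*}_j|}=\frac{\densopt\,m^{*}_j}{\plen}.
\]
This bound carries the factor $m^{*}_j$, but peeling compares against $m'_j$, which may differ by up to $e^{s}$. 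That mismatch would cost a factor $e^{s}$ instead of $L$, so I would rather avoid it.

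The cleaner route, which I would commit to, is to take a minimal subset $S^{*}\subseteq\pfam^{*}$ maximizing $h_{\irm'}$. Its value satisfies $h_{\irm'}(S^{*})\ge\lambda^{*}$. By maximality, every $v\in S^{*}$ has $\cnt{v}{G(S^{*})}\ge h_{\irm'}(S^{*})\,m'_{\phi'(v)}$: otherwise removing $v$ would not decrease the ratio, contradicting minimality.

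\emph{Step 3: charging over the peel.}
Consider the first moment the peel removes a vertex $u\in S^{*}$, and let $S$ be the current set. Then $S\supseteq S^{*}$, because the peel starts on a subgraph containing $G(\pfam^{*})$. The chain is
\[
\frac{\cnt{u}{G(S)}}{m'_{\phi'(u)}}\ge\frac{\cnt{u}{G(S^{*})}}{m'_{\phi'(u)}}\ge h_{\irm'}(S^{*}),
\]
where the first inequality is monotonicity of instance counts. Since $u$ has the least value-per-cost in $S$, every $v\in S$ has $\cnt{v}{G(S)}\ge h_{\irm'}(S^{*})\,m'_{\phi'(v)}$. Each instance passes through exactly $\plen$ position-vertices, so summing gives
\[
\plen\,|\inst{S}|\ge h_{\irm'}(S^{*})\,w_{\irm'}(S).
\]

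\emph{Step 4: back to true density.}
AM--GM with $\prod_j m'_j=1$ gives $w_{\irm'}(S)=\sum_j m'_j|S_j|\ge\plen\,g(S)$. Hence
\[
\dens(S)=\frac{|\inst{S}|}{g(S)}\ge h_{\irm'}(S^{*})\ge\lambda^{*}\ge\frac{\densopt}{\plen(1+\eta)}.
\]
Since $\peelout\ge\dens(S)$, this proves the lemma.

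The main obstacle is Step 2: making the per-vertex threshold hold at the representative weights $\irm'$ rather than at $\irmopt$. Passing to the $h_{\irm'}$-maximizing subfamily inside $\pfam^{*}$ is what keeps the loss at exactly $L$, rather than the cruder $\max_j m'_j/m^{*}_j$ that the direct route would give.
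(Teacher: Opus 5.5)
Your proof is correct, but it reaches the bound by a different mechanism from the paper's.

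\textbf{The paper's route.} It uses a \emph{charging} argument. Every instance of $\inst{\pfam^{*}}$ is charged to its first-deleted endpoint $v$. At that moment the incidence of $v$ is at most $\peelout\,m'_{\phi'(v)}$: the peeling rule bounds it by the $\mathbf m'$-weighted average over the current set $S_v$, and AM--GM bounds that average by $\dens(S_v)\le\peelout$. Summing over layers with $|V^{*}_j|=g(\pfam^{*})/m^{*}_j$ produces $\sum_j e^{\theta_j}$ directly. Because this uses no optimality property of any set, the obstacle you single out in Step~2 never arises. Read without the per-layer maxima $\mu_j$, it says $|\inst{\pfam^{*}}|\le\peelout\,w_{\irm'}(\pfam^{*})$, i.e.\ $\peelout\ge h_{\irm'}(\pfam^{*})=\densopt/(\plen L)$. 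That is the same intermediate your Steps~1--4 reach.

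\textbf{Your route.} You use the primal first-deletion (Charikar-style) argument, which does need a per-vertex floor at the representative weights. You obtain it by passing to an $h_{\irm'}$-maximizer $S^{*}\subseteq\pfam^{*}$, i.e.\ the argument of Lemma~\ref{lem:coreduce}(i) run inside $\pfam^{*}$.

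This costs an auxiliary set and a little bookkeeping you left implicit. The fact $h_{\irm'}(S^{*})\ge\lambda^{*}>0$ forces every layer of $S^{*}$ to be nonempty. Hence $|S^{*}|\ge\plen\ge2$, so $S^{*}\setminus\{v\}\neq\emptyset$ and $w_{\irm'}(S^{*})-m'_{\phi'(v)}>0$ in the exchange step. The case $\densopt=0$ is trivial.

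\textbf{What your route buys.} You get an explicit witness prefix: the set alive when $S^{*}$ is first touched. The proof also splits cleanly into two parts:
\begin{itemize}
\item the exact loss identity of Lemma~\ref{lem:surrogate};
\item a pure solver fact: the peel recovers, in density, the $h_{\irm'}$-value of any locally optimal set, since $h_{\irm'}(S)\ge h_{\irm'}(S^{*})/\plen$ and then $\dens(S)\ge\plen\,h_{\irm'}(S)$.
\end{itemize}
This is exactly the solver-agnostic chain of Theorem~\ref{thm:main} at $\beta=1/\plen$, which the paper invokes after Theorem~\ref{thm:net-peeling} but does not itself use to prove this lemma.

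Your remark that the naive $\dens$-optimality floor would lose $\max_j e^{\theta_j}$ rather than $L$ is also correct. It explains why some extra device is needed, either your maximizer or the paper's charging.
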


\begin{proof}
The bound factors into a \emph{weight} step and a \emph{solver} step. Let $S_v$ be the
subgraph present when peeling deletes $v$, and $\mu_j=\max\{\cnt{v}{S_v}:v\in V^{*}_j\}$.

\emph{Weight step.} Charge each instance of $\inst{\pfam^{*}}$ to its first-deleted endpoint;
that endpoint $v$ is deleted while all of $\pfam^{*}$ is still present, so the instance is
counted in $\cnt{v}{S_v}$. Hence
$|\inst{\pfam^{*}}|\le\sum_{v\in\pfam^{*}}\cnt{v}{S_v}\le\sum_j|V^{*}_j|\mu_j$, and dividing by
$g(\pfam^{*})$ with $|V^{*}_j|=g(\pfam^{*})/m^{*}_j$ gives $\densopt\le\sum_j\mu_j/m^{*}_j$.

\emph{Solver step.} The deleted vertex minimizes $\cnt{u}{S_v}/m'_{\phi'(u)}$ over $u\in S_v$,
so this value is at most the $m'$-weighted average
$\plen|\inst{S_v}|/w_{\irm'}(S_v)\le\dens(S_v)\le\peelout$ (AM--GM, $w_{\irm'}\ge\plen g$); thus
$\mu_j\le\peelout\,m'_j$.

Splitting $\mu_j/m^{*}_j=(\mu_j/m'_j)\,e^{\theta_j}$ and using
$\sum_j e^{\theta_j}\le\plen(1+\eta)$ (Theorem~\ref{thm:net}),
\[
\densopt\ \le\ \sum_{j}\frac{\mu_j}{m^{*}_j}
\ \le\ \peelout\sum_{j} e^{\theta_j}
\ \le\ \plen(1+\eta)\,\peelout .
\]
\end{proof}


\begin{theorem}[Peeling over the net]\label{thm:net-peeling}
Running the peeling once at every net point of $\mathcal M_\eta$ and
returning the densest \mpath-family is a $\tfrac{1}{\plen(1+\eta)}$-approximation
to $\densopt$, with $\OO((\log n/\sqrt\eta)^{\plen-1})$ peeling runs.
\end{theorem}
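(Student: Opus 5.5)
The plan is to combine the net-size bound of Theorem~\ref{thm:net} with the per-run guarantee of Lemma~\ref{lem:perturb}. The running-time claim needs no new argument: the algorithm performs exactly one peeling run per point of $\mathcal M_\eta$. By Theorem~\ref{thm:net} there are $\OO((\log n/\sqrt\eta)^{\plen-1})$ such points, so that is the number of runs.

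For the approximation factor, I would fix an optimal family $\pfam^{*}$ with iRM-set $\irmopt$. This set is realizable, hence feasible by Property~\ref{prop:irm-product}, and its log-coordinates lie in $B_n$ by Lemma~\ref{lem:box}. Theorem~\ref{thm:net} then supplies a net point $\irm'\in\mathcal M_\eta$ with $\prod_j m'_j=1$ and loss factor $\tfrac1\plen\sum_j e^{\theta_j}\le1+\eta$. The peel at $\irm'$ starts from the full position-expanded ground set $U$, i.e.\ the whole \mpath-partite graph, which certainly contains $G(\pfam^{*})$. So Lemma~\ref{lem:perturb} applies to that run and gives a density $\peelout_{\irm'}$ seen during the peel with $\densopt\le\plen(1+\eta)\,\peelout_{\irm'}$.

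Let $\widehat{\pfam}$ be the family returned, the densest family recorded over all runs. Then
\[
\dens(\widehat{\pfam})\ \ge\ \peelout_{\irm'}\ \ge\ \tfrac{1}{\plen(1+\eta)}\,\densopt .
\]
Trivially $\dens(\widehat{\pfam})\le\densopt$, which completes the approximation claim.

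The one point needing care is the bookkeeping: $\peelout$ must be the true density $\dens$ of an actual \mpath-family seen during the peel, not the weighted density $\rho_{\irm'}$. Otherwise the comparison across runs would be on different scales. I would check that the solver step of Lemma~\ref{lem:perturb} already uses $\plen|\inst{S_v}|/w_{\irm'}(S_v)\le\dens(S_v)$ (by AM--GM, $w_{\irm'}\ge\plen g$), so $\peelout$ is the maximum of $\dens(S_v)$ over intermediate subgraphs. Each intermediate $S_v$ is a \mpath-family, so the algorithm records it, and the final maximum dominates $\peelout_{\irm'}$. Beyond this, the proof is a direct composition of the two earlier results.
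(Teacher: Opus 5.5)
Your proof is correct and follows the paper's own argument. Theorem~\ref{thm:net} supplies both the run count and a representative $\irm'$ in $\irmopt$'s cell with loss factor at most $1+\eta$, and Lemma~\ref{lem:perturb} applied to the single peel at $\irm'$ then gives $\peelout\ge\densopt/(\plen(1+\eta))$, which the returned family dominates; your extra check that $\peelout$ is a true density $\dens(S_v)$ (via $w_{\irm'}\ge\plen g$), so runs at different weights are compared on one scale, is a worthwhile detail the paper leaves implicit.
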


\noindent
The correctness is clear: it is the solver-agnostic reduction above at
$\beta=\tfrac1\plen$. Lemma~\ref{lem:perturb} gives the per-cell guarantee
$\peelout\ge\densopt/(\plen(1+\eta))$ in $\irmopt$'s cell.  Theorem~\ref{thm:net} bounds the number of peeling runs.  


\subsection{Localizing the Hypercube}
\label{ssec:deltabox}

The net size carries a factor $(\log n)^{\plen-1}$ from the worst-case hypercube width
$B_n=\OO(\log n)$ per layer (Lemma~\ref{lem:box}). The \emph{optimum's} weights, though,
cannot roam the whole hypercube: an iRM-weight is not free, and at the optimum the $j$-th
weight is pinned to how instance-rich a typical type-$j$ vertex is. Precisely, $m^{*}_j$
is the \emph{average} number of meta-path instances through a type-$j$ vertex of
$\pfam^{*}$, divided by $\densopt$ (Lemma~\ref{lem:deltabox}). Writing
$\Delta_j=\max_{v\in V(A_j)}P(v,G)$ for the \emph{largest} such per-vertex count, an
average never exceeds its maximum, so $\Delta_j$ caps how far the optimum can lean on
layer $j$.

\begin{lemma}[$\Delta$-box]\label{lem:deltabox}
Assume $\densopt>0$. For the optimal family $\pfam^{*}$,
$m^{*}_j=\bar P_j/\densopt$ where $\bar P_j=|\inst{\pfam^{*}}|/|V^{*}_j|$, and
$1\le\bar P_j\le\Delta_j$. Hence $m^{*}_j\in[1/\densopt,\ \Delta_j/\densopt]$, an
interval of log-width $\ln\Delta_j$.
\end{lemma}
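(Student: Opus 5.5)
The plan is to establish the identity first, then the two bounds on $\bar P_j$, and finally translate these into the interval for $m^{*}_j$.

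\emph{The identity.} By definition $m^{*}_j=g(\pfam^{*})/|V^{*}_j|$, and by optimality $\densopt=|\inst{\pfam^{*}}|/g(\pfam^{*})$, so $g(\pfam^{*})=|\inst{\pfam^{*}}|/\densopt$. Substituting gives
\[
m^{*}_j=\frac{|\inst{\pfam^{*}}|}{|V^{*}_j|\,\densopt}=\frac{\bar P_j}{\densopt}.
\]
Since $\densopt>0$, we have $|\inst{\pfam^{*}}|\ge1$, and every layer $V^{*}_j$ is nonempty, because the density is only defined for nonempty layers. So all quantities are well defined.

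\emph{Upper bound $\bar P_j\le\Delta_j$.} Each instance of $\inst{\pfam^{*}}$ occupies exactly one position $j$, and hence exactly one vertex of $V^{*}_j$ there. Counting instances by their position-$j$ vertex gives
\[
|\inst{\pfam^{*}}|=\sum_{v\in V^{*}_j}\cnt{v}{G(\pfam^{*})},
\]
where $\cnt{\cdot}{\cdot}$ is read per position in the position-expanded ground set $U$. Hence $\bar P_j$ is the average of $\cnt{v}{G(\pfam^{*})}$ over $V^{*}_j$. Instance counts are monotone under taking supergraphs, so $\cnt{v}{G(\pfam^{*})}\le \cnt{v}{G}\le\Delta_j$. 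An average cannot exceed the maximum, so $\bar P_j\le\Delta_j$.

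\emph{Lower bound $\bar P_j\ge1$.} This is the main obstacle, since it requires that every vertex of the optimum lies on at least one instance. I would argue by optimality via pruning. Suppose some $v\in V^{*}_j$ has $\cnt{v}{G(\pfam^{*})}=0$. Removing $v$ leaves $|\inst{\cdot}|$ unchanged and strictly shrinks $g$ as long as $|V^{*}_j|\ge2$, which strictly increases the density and contradicts optimality. If $|V^{*}_j|=1$, then that single vertex lies on every instance, of which there is at least one, so its count is at least $1$ anyway. Therefore every vertex has $\cnt{v}{G(\pfam^{*})}\ge1$, and the average satisfies $\bar P_j\ge1$.

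\emph{Conclusion.} Dividing $1\le\bar P_j\le\Delta_j$ by $\densopt$ gives $m^{*}_j\in[1/\densopt,\Delta_j/\densopt]$. Taking logarithms, this interval has width $\ln\Delta_j-\ln1=\ln\Delta_j$.
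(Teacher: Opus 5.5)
Your proof is correct and follows the paper's own argument step for step. You derive the identity from $\bar P_j/m^{*}_j=|\inst{\pfam^{*}}|/g(\pfam^{*})=\densopt$, get the upper bound by averaging $\cnt{v}{G(\pfam^{*})}$ over $V^{*}_j$ against its maximum $\Delta_j$, and get the lower bound by deleting an instance-free vertex (with the singleton case ruled out by $\densopt>0$); you are, if anything, more explicit than the paper about the per-position reading of the incidence and the monotonicity $\cnt{v}{G(\pfam^{*})}\le\cnt{v}{G}$.
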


\begin{proof}
$\bar P_j/m^{*}_j=|\inst{\pfam^{*}}|/g(\pfam^{*})=\densopt$ gives the identity. Since
$\sum_{v\in V^{*}_j}P(v,G(\pfam^{*}))=|\inst{\pfam^{*}}|$, the average $\bar P_j\le
\Delta_j$; and $\bar P_j\ge1$, as a type-$j$ vertex in no instance could be deleted
to raise the density (or, if alone, forces $\densopt=0$).
\end{proof}

So the optimum's $j$-th weight is confined to a band of log-width only $\ln\Delta_j$,
\emph{independent of $n$}. Its location depends on the unknown $\densopt$, but feasibility
brackets it regardless. Taking the product of $m^{*}_j=\bar P_j/\densopt$ over $j$ and using
$\prod_j m^{*}_j=1$ gives $\densopt=\big(\prod_j\bar P_j\big)^{1/\plen}$, the geometric mean
of the $\bar P_j$; since each $\bar P_j\in[1,\Delta_j]\subseteq[1,\Delta]$
($\Delta=\max_j\Delta_j$) and a geometric mean lies between its smallest and largest terms,
$\densopt\in[1,\Delta]$. This bracket alone fixes the box, with no estimate of $\densopt$:
in $m^{*}_j=\bar P_j/\densopt$ the numerator is at most $\Delta_j$ and at least $1$, so
$\densopt\ge1$ gives $m^{*}_j\le\Delta_j$ while $\densopt\le\Delta$ gives $m^{*}_j\ge1/\Delta$,
hence $\ln m^{*}_j\in[-\ln\Delta,\ \ln\Delta_j]$ \emph{whatever} $\densopt$ is.
The band thus \emph{provably} contains $\irmopt$: only the bracket $\densopt\in[1,\Delta]$
enters, not an estimate.
This bracket only tightens during the search, where the best density found so far,
$\hat\rho\le\densopt$, sharpens it to $\densopt\in[\hat\rho,\Delta]$ and shrinks the box to
$m^{*}_j\le\Delta_j/\hat\rho$ while still containing $\irmopt$.
The net size is reduced to
$\OO\big((\min\{\log n,\log\Delta\}/\sqrt\eta)^{\plen-1}\big)$.  


\noindent\textbf{$\Delta$ plus net}. 
The net and the $\Delta$-box touch \emph{different} parts of the per-axis base and so
compose without interference. The net (\S\ref{ssec:net}) sets the \emph{pitch}
$\sqrt\eta$, the $\Delta$-box the \emph{range} $\min\{\log n,\log\Delta\}$.
Together they are the cover Algorithm~\ref{alg:overview} sweeps, one fixed-weight \textsc{Solve}
per cell, each within $1+\eta$ of the optimum's
(\S\ref{sec:overview}), and the net we carry into \S\ref{sec:approx}.

\noindent\emph{On the running example.}
Figure~\ref{fig:net-variants}(c) draws the $\Delta$-box: the slab
$m_j\in[1/\densopt,\Delta_j/\densopt]$ (here $\Delta=(3,4,5)$) meets the product-one
plane in a small \emph{triangle}, with $\mathbf m^{*}$ inside. Overlaying the coarse
$\sqrt{2\eta/e}$ pitch on it (Figure~\ref{fig:net-variants}(d)) leaves only 5 cells, a
tiny fraction of the $110$ feasible iRM-sets naive enumeration would visit, yet the
cell of $\mathbf m^{*}$ is among them.

\section{Near-Optimal Approximation for $i>2$}\label{sec:approx}\label{sec:supermod}

\noindent
What remains for a near-optimal algorithm is to push the $\tfrac1\plen$ factor to $1-\delta$.
The road has three steps: the fixed-weight
subproblem is a weighted supermodular densest-subgraph problem in the auxiliary objective
$h_{\irm}$ (\S\ref{ssec:supermodsub}); a native solver makes it near-optimal \emph{in
$h_{\irm}$} (\S\ref{ssec:weightedsolver}); and an AM--GM bridge, tight at the optimum's own
weight, carries that near-optimality \emph{to the density $\dens$} (\S\ref{ssec:bridge}).

\noindent\textbf{Why the near-optimal solver stopped at $\plen=2$.}
The fixed-weight subproblem is a weighted supermodular densest-subgraph problem
(Theorem~\ref{thm:fixed-supermodular}, below), so one might expect the near-optimal supermodular
solvers~\cite{boob2020flowless,chekuri2022supermodular,harb2022faster} to finish the job; the
obstacle is the \emph{per-type weights}. Those solvers target the \emph{cardinality} ratio
$f/|S|$, and the $\plen=2$ methods target only the two-sided
$m_1|S_1|{+}\tfrac1{m_1}|S_2|$~\cite{ma2022convexdds,nguyen2024acdm,liang2026racd}; neither handles a
general $\plen$-type denominator. The possible workaround, \emph{materializing} $m_{\phi'(v)}$ integer copies of each $v$ to feed an
standard \emph{unit-weight} peeler, also fails: the weights are real-valued and spread up to
$n^2$, so exact copies do not exist and rounded ones inflate the graph and corrupt the peeling
order.

\subsection{Supermodular Densest Subgraph}\label{ssec:supermodsub}

Recall the fixed-weight subproblem from \S\ref{sec:revisit}: for an iRM-set $\irm$, maximize the auxiliary objective $h_{\irm}(S)=f(S)/w_{\irm}(S)$ over
$S\subseteq U$, with $f(S)=|\inst{S}|$ and the modular weighted size
$w_{\irm}(S)=\sum_j m_j|S_j|$ (the weighted density of \S\ref{ssec:reduction} is
$\plen\,h_{\irm}$). The denominator $w_{\irm}$ is already modular; the numerator is
supermodular.

\begin{lemma}\label{lem:supermodular}
$f:2^{U}\to\mathbb Z_{\ge 0}$, $f(S)=|\inst{S}|$, is normalized
($f(\emptyset)=0$), monotone, and \emph{supermodular}.
\end{lemma}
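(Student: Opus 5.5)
The plan is to write $f$ as a sum of indicator functions, one per candidate instance, and to show that each indicator has all three properties. Supermodularity and monotonicity are preserved under nonnegative sums, so they then pass to $f$.

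Over the position-expanded ground set $U$, every potential $\mpath$-instance $\pi=(v_1,\dots,v_\plen)$ of $G$ determines a fixed set of position-copies $T_\pi=\{v_1,\dots,v_\plen\}\subseteq U$. Since $v_j$ is the copy at position $j$, these $\plen$ elements are distinct in $U$ even when the underlying vertices coincide. For $S\subseteq U$, $\pi$ is an instance of $G(S)$ exactly when $T_\pi\subseteq S$: the edges between consecutive vertices are edges of $G$, and they survive in the induced subgraph whenever both endpoints are present. Hence
\[
f(S)=\sum_{\pi\in\inst{U}}\mathbf 1[T_\pi\subseteq S].
\]
Normalization then holds because $T_\pi\neq\emptyset$ (as $\plen\ge2$). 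Monotonicity holds because $T_\pi\subseteq S\subseteq S'$ implies $T_\pi\subseteq S'$.

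For supermodularity, I would prove for each indicator $\iota_T(S)=\mathbf 1[T\subseteq S]$ that
\[
\iota_T(S\cup X)+\iota_T(S\cap X)\ \ge\ \iota_T(S)+\iota_T(X)
\]
for all $S,X\subseteq U$, by cases. If both terms on the right are $1$, then $T\subseteq S\cap X$, so both terms on the left are $1$. If exactly one is $1$, say $T\subseteq S$, then $T\subseteq S\cup X$, so the left side is at least $1$. If neither is $1$, the inequality is trivial. Summing over $\pi$ gives the supermodular inequality for $f$.

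An equivalent route uses marginals: for $A\subseteq B$ and $v\notin B$, the marginal $f(A\cup\{v\})-f(A)$ counts the instances through $v$ whose other positions lie in $A$. This set only grows when $A$ is replaced by $B$, which gives increasing marginal gains.

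The only delicate step is the bookkeeping for repeated types, i.e.\ checking that the instance count over induced subgraphs really is this sum of indicators. This relies on the paper's convention that positions are filled independently and that each position-respecting sequence counts separately. That convention makes the map $\pi\mapsto T_\pi$ well defined on $U$, and it makes membership depend only on containment of $T_\pi$, not on any interaction between copies. Once this identity is in place, the rest is the routine indicator argument above.
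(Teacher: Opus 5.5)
Your proof is correct. The paper argues directly with increasing marginals: for $S\subseteq T$ and $v\notin T$, every instance counted in $f(S\cup\{v\})-f(S)$ passes through $v$ with its other positions in $S\subseteq T$, so it is also counted in $f(T\cup\{v\})-f(T)$. That is exactly your ``equivalent route.''

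Your main line instead writes $f(S)=\sum_{\pi}\mathbf 1[T_\pi\subseteq S]$ and checks the lattice inequality for each containment indicator. Both arguments rest on the same fact: an instance survives in $G(S)$ if and only if all of its position-copies lie in $S$. The paper's version is a one-line set containment. Yours makes that identity explicit, including the repeated-type bookkeeping through the position-expanded $U$, which the paper leaves implicit. It also exhibits $f$ as a hypergraph edge count, a nonnegative sum of supermodular ``all-present'' functions, which is the standard structural reason such counts are supermodular.
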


\begin{proof}
Normalization and monotonicity are immediate. For supermodularity, fix
$S\subseteq T$, $v\notin T$: every instance in the marginal $f(S\cup\{v\})-f(S)$
passes through $v$ with its other vertices in $S\subseteq T$, so it is also in
$f(T\cup\{v\})-f(T)$.
\end{proof}

A supermodular numerator
over a modular denominator is thus the problem below.

\begin{theorem}\label{thm:fixed-supermodular}
For every feasible iRM-set $\irm$, maximizing $h_{\irm}(S)$ over $S\subseteq U$
is an instance of weighted supermodular densest subgraph (immediate from
Lemma~\ref{lem:supermodular} and the modularity of $w_{\irm}$).
\end{theorem}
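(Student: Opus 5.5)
The plan is to match the fixed-weight subproblem term by term against the definition of weighted supermodular densest subgraph. That problem takes a finite ground set $U$, a normalized monotone supermodular set function $f$, and a nonnegative modular weight $w(S)=\sum_{v\in S}c_v$ with $c_v>0$, and asks to maximize $f(S)/w(S)$ over nonempty $S$. I would fix an arbitrary feasible iRM-set $\irm$ and take the ground set to be the position-expanded set $U$ of \S\ref{sec:prelim}. The numerator is $f(S)=|\inst{S}|$, and the per-element costs are $c_v=m_{\phi'(v)}$. The reduction is then identity on feasible sets, with objective $h_{\irm}$ preserved exactly.

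\textbf{Numerator.} I invoke Lemma~\ref{lem:supermodular} directly. On $2^U$, $f$ is normalized, monotone and supermodular. One point to check is that working over $U$ rather than $V$ causes no trouble when a type repeats. Instances are position-respecting sequences $(v_1,\dots,v_\plen)$ with $v_j$ drawn from the position-$j$ copy. So whether an instance lies in $\inst{S}$ depends only on the membership of those $\plen$ specific elements of $U$, and the marginal-containment argument of that lemma applies verbatim.

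\textbf{Denominator.} Eq.~\eqref{eq:wsize} already writes $w_{\irm}(S)=\sum_{v\in S}m_{\phi'(v)}$, a sum of per-element costs, so $w_{\irm}$ is modular with $w_{\irm}(\emptyset)=0$. Strict positivity of each cost follows from feasibility. Since $\prod_j m_j=1$ forces every $m_j\neq 0$, and the iRM weights are positive by construction (Lemma~\ref{lem:box} even gives $m_j\ge n^{-(\plen-1)/\plen}>0$), we have $c_v>0$. Hence $w_{\irm}(S)>0$ for every nonempty $S$, and the ratio is well defined.

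The only step needing any care is the well-posedness of the ratio, specifically ruling out zero or negative weights so that $h_{\irm}$ is a genuine density of the required form. Positivity settles this. Everything else is a direct citation of Lemma~\ref{lem:supermodular} plus the modularity of $w_{\irm}$.
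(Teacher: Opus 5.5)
Your proposal is correct and follows the paper's argument: the paper offers nothing beyond citing Lemma~\ref{lem:supermodular} for the numerator and the modularity of $w_{\irm}$ for the denominator, and your extra checks (the position-expanded ground set, strictly positive per-element costs) simply spell out that citation. One small quibble: positivity of the $m_j$ comes from the paper's convention that feasible iRM-sets live in log coordinates ($m_j=e^{x_j}$), not from Lemma~\ref{lem:box}, whose proof covers only realizable iRM-sets.
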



\subsection{A Near-Optimal Weighted Solver}
\label{ssec:weightedsolver}

The cardinality-ratio solvers do not apply, and the workaround fails, so we need a native solver. We
adapt \textsc{Super-Greedy++}~\cite{chekuri2022supermodular,boob2020flowless}, which peels the
vertex of least load-augmented degree $\sigma(v)+\deg_S(v)$ and charges its degree back across
passes, in two changes. \emph{(i) Weight the criterion:} delete the vertex of least
\emph{weighted} incidence $\cnt{v}{\cdot}/m_{\phi'(v)}$, putting the per-type weight into the
deletion order. \emph{(ii) Charge and repeat:} carry a per-vertex load across
passes, adding a vertex's incidence when it is deleted, so a vertex peeled too early is spared
next pass and the order self-corrects.

\noindent\textbf{Why repeat: a rising floor meets a descending ceiling.}
Any family peeling returns a \emph{floor}, an achievable density that prior peeling
(Theorem~\ref{thm:chen-peeling}) keeps within a fixed factor of optimal, though a single pass
gives no per-instance certificate. Repeating adds a \emph{ceiling}: the averaged load, peaked
against the per-type weights, upper-bounds the optimum. One deletion order does both, sparing
instance-rich vertices to lift the floor while averaging the charges to lower the ceiling; once
they meet, the family in hand is certified near-optimal. The floor is inherited from prior
peeling, the ceiling and its convergence new, formalized in Proposition~\ref{prop:solver}.

\begin{proposition}[Iterative Peeling]\label{prop:solver}
Peeling
by the load-augmented key $\big(\sigma(v)+\cnt{v}{G(S)}\big)/m_{\phi'(v)}$ with
$\sigma(v)\mathrel{+}=\cnt{v}{G(S)}$ charged at each deletion (ideas (i)--(ii)), reporting the
densest \mpath-family over all prefixes of all passes, returns $S$ with
$h_{\irm}(S)\ge(1-\delta)\,h_{\irm}^{*}$ ($h_{\irm}^{*}=\max_{\emptyset\neq S'\subseteq U}h_{\irm}(S')$) in
$
T=\OO\!\big(\Delta_{\irm}\ln|U|\,/\,(\delta^{2}\,h_{\irm}^{*})\big)\ \text{passes},
$
where $\Delta_{\irm}=\max_{v\in U}\cnt{v}{G}/m_{\phi'(v)}=\max_{j}\Delta_j/m_j$ is the
\emph{weighted width}. 
\end{proposition}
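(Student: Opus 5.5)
Our plan is to follow the multiplicative-weights / LP-duality analysis of \textsc{Greedy++} and \textsc{Super-Greedy++}~\cite{boob2020flowless,chekuri2022supermodular}, carrying the per-type weights $m_{\phi'(v)}$ through every step. The starting point is a fractional relaxation of $\max_S f(S)/w_{\irm}(S)$, which gives a certifiable upper bound. For each instance $\pi\in\inst{U}$ we allocate its unit of value to its vertices. Any allocation then induces loads $b(v)=\sum_{\pi\ni v}\alpha_{\pi,v}$, where $\alpha_{\pi,v}\ge0$ and $\sum_{v\in\pi}\alpha_{\pi,v}=1$. This yields the weak-duality inequality
\[
h_{\irm}^{*}\;\le\;\max_{v\in U}\,b(v)/m_{\phi'(v)} .
\]
The reason is that for an optimal $S^{*}$ we have $f(S^{*})\le\sum_{v\in S^{*}}b(v)\le \max_v\frac{b(v)}{m_{\phi'(v)}}\,w_{\irm}(S^{*})$, since every instance inside $S^{*}$ has all its endpoints in $S^{*}$. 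More generally we can use the supermodular (Lov\'asz-extension) dual: any $b$ in the base contrapolymatroid of $f$ certifies $h^{*}_{\irm}\le\max_v b(v)/m_{\phi'(v)}$. Because of Lemma~\ref{lem:supermodular}, this is the form we would actually use.

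\emph{Step 1 (the ceiling).} Each pass is a peeling order, and charging every vertex its incidence $\cnt{v}{G(S)}$ at deletion time defines a vertex $b^{(t)}$ of that base polytope: it is the greedy marginal vector of $f$ along the reverse deletion order, and it sums to $f(U)$. Averaging gives $\bar b=\frac1T\sum_t b^{(t)}$, which by the above yields the ceiling $\max_v \bar b(v)/m_{\phi'(v)}=\max_v\sigma_T(v)/(T m_{\phi'(v)})$.

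\emph{Step 2 (the floor, per pass).} Consider pass $t$ and let $v$ be deleted from the current set $S$. It minimizes $(\sigma_{t-1}(v)+\cnt{v}{G(S)})/m_{\phi'(v)}$, and we bound that minimum by a weighted average over $S$, exactly as in the solver step of Lemma~\ref{lem:perturb}. Weighting by $m_{\phi'(u)}$ gives
\[
\min_{u\in S}\frac{\sigma_{t-1}(u)+\cnt{u}{G(S)}}{m_{\phi'(u)}}\le\frac{\sigma_{t-1}(S)+\plen f(S)}{w_{\irm}(S)} .
\]
We would then follow Chekuri--Quanrud--Torres and treat the whole procedure as multiplicative weights / Frank--Wolfe on the concave dual $\min_{b}\max_v b(v)/m_{\phi'(v)}$. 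In that view the $t$-th pass is a best response to the current load $\sigma_{t-1}$, and the greedy order solves the linear minimization over the base polytope in the $m$-rescaled coordinates $b(v)/m_{\phi'(v)}$. The weighted analogue of their regret bound should give
\[
\max_v\frac{\sigma_T(v)}{Tm_{\phi'(v)}}\le(1+\delta)\,h^{*}_{\irm}
\]
once $T$ exceeds (width)$\cdot\ln|U|/(\delta^2\cdot\text{OPT})$. The width is the largest per-step payoff in rescaled units, namely $\max_v\cnt{v}{G}/m_{\phi'(v)}=\Delta_{\irm}$. The $\ln|U|$ factor is the entropy term over the $|U|$ experts.

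\emph{Step 3 (converting the ceiling into a prefix).} This is the main obstacle, because we need a prefix whose $h_{\irm}$ matches the averaged bound. We would reuse the \textsc{Greedy++} prefix argument: if every prefix in every pass had $h_{\irm}<(1-\delta)h^{*}_{\irm}$, then the averaged per-pass inequality of Step 2 would force the loads to grow too slowly to remain consistent with weak duality applied to $S^{*}$. That is, $\sum_{v\in S^{*}}\sigma_T(v)\ge T f(S^{*})$ would be contradicted. The factor $\plen$ in Step 2 is the delicate point. It is absorbed by noting that the charged incidences over one pass sum to exactly $f(U)$, not $\plen f(U)$, because each instance is charged once, at its first-deleted endpoint. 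So the comparison has to be made against the sum of charges, not against $\plen f(S)$ itself. If that bookkeeping fails, we would fall back on the cited general supermodular theorem~\cite{chekuri2022supermodular}, applied after a change of measure that replaces each vertex by a fractional mass $m_{\phi'(v)}$, and check that their proof only uses modularity of the denominator.
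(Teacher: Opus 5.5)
Your certificate half is correct and coincides with the paper's Part~1: each pass's charge is the greedy base of $f$ along the reverse deletion order, the averaged load $\sigma_T/T$ lies in $B(f)$, and weak duality gives $h_{\irm}^{*}\le\max_v\sigma_T(v)/(T\,m_{\phi'(v)})$. The gap is in your own derivation of the pass count (Steps~2--3). In Step~2 you assert that pass $t$ is a best response to $\sigma_{t-1}$, i.e.\ that its greedy base solves $\min_{x\in B(f)}\langle z,x\rangle$ for the rescaled loads. This is false. By Lemma~\ref{lem:greedybase}(b), that minimizer is the greedy base of a \emph{static} order sorting $z$, fixed before the pass. The load-augmented key $(\sigma_{t-1}(v)+\cnt{v}{G(S)})/m_{\phi'(v)}$ instead re-ranks vertices as $S$ shrinks, so the reverse deletion order in general sorts neither the pre-pass nor the post-pass loads. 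Without an exact, or quantitatively approximate, best response, the standard MWU/Frank--Wolfe regret bound does not apply. This mismatch is exactly why convergence of Greedy++ was a conjecture until the tailored analysis of \cite{chekuri2022supermodular}. (Also, $\max_v b(v)/m_{\phi'(v)}$ is convex in $b$, not concave.)

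Step~3 does not close either. Your per-deletion bound compares the deleted key with the weighted average of the current keys, and the incidence part of that average is $\sum_{u\in S}\cnt{u}{G(S)}=\plen f(S)$. The identity that one pass's charges sum to $f(U)$ is what already placed $b^{(t)}$ in $B(f)$; it does not cancel this factor. Assuming every prefix has $h_{\irm}<(1-\delta)h_{\irm}^{*}$ only bounds a deleted vertex's new per-unit load by the current weighted-average per-unit load over $S$ plus $\plen(1-\delta)h_{\irm}^{*}$. That leaves room for per-pass increments well above $h_{\irm}^{*}$, so no contradiction with $\sum_{v\in S^{*}}\sigma_T(v)\ge Tf(S^{*})$ follows. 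This is the single-pass $1/\plen$ barrier, and overcoming it across passes is the whole content of the imported theorem.

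Your fallback is the route the paper actually takes, and the paper never attempts Steps~2--3. It imports the rate of \cite{chekuri2022supermodular} directly in duality-gap form (ceiling minus best prefix), so no separate ceiling-to-prefix conversion is needed. It justifies the import by reading loads, marginals, key and width per unit weight. To make that transfer legitimate, you must supply the anchor you only assume: the weighted min--max equality $h_{\irm}^{*}=\min_{x\in B(f)}\max_v x_v/m_{\phi'(v)}$. The paper proves it in Lemma~\ref{lem:qmm} via the Lov\'asz extension over the weighted simplex $\{z\ge0:\sum_v m_{\phi'(v)}z_v=1\}$ and von Neumann's minimax. You also need the observation that the weights only rescale the $|U|$ pure strategies $\mathbf e_v/m_{\phi'(v)}$; this is why the log factor stays $\ln|U|$ and the width becomes $\Delta_{\irm}$.
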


\begin{proof}[Proof sketch]
The argument is self-contained given three lemmas (Lemmas~\ref{lem:greedybase}--\ref{lem:passbase}), each proved in Appendix~\ref{app:prop1}; we
state what each supplies and how they combine.

\emph{(i)~Base membership.} No matter which key chooses the vertex to delete, a pass charges each deleted vertex its
marginal $\cnt{v}{G(S)}=f(v\,|\,S\setminus v):=f(S)-f(S\setminus v)$, so read in reverse deletion order, the charges form a
\emph{greedy base} of the supermodular $f$ (Lemma~\ref{lem:greedybase}, greedy bases and exact
linear minimization). Hence each pass's charge lies in the base polytope
$B(f)=\{x\ge0:x(S)\ge f(S)\ \forall S,\ x(U)=f(U)\}$ (writing $x_v$ for the coordinates of $x\in\mathbb R^{U}$ and $x(S)=\sum_{v\in S}x_v$), and so does the averaged load
$\bar x_t=\sigma_t/t$, $\sigma_t$ the total charge after $t$ passes, by convexity (Lemma~\ref{lem:passbase}, per-pass base membership). 

\emph{(ii)~A computable certificate.} Weak duality on $B(f)$ converts any averaged load into a
bound: for $x\in B(f)$ and $\emptyset\ne S$, $f(S)\le x(S)$ gives
$h_{\irm}(S)\le\max_{v}x_v/m_{\phi'(v)}$, so the gap
$\hat g_t=\max_v\bar x_t(v)/m_{\phi'(v)}-\max_{\text{prefix }S}h_{\irm}(S)$ is nonnegative and
computable, and $\hat g_t\le\delta\,h_{\irm}^{*}$ certifies the reported family
$(1-\delta)$-dense \emph{unconditionally} (Remark~\ref{rem:cert}). 

\emph{(iii)~The gap closes at the stated rate.} The anchor is the weighted quasi-min-max
$h_{\irm}^{*}=\min_{x\in B(f)}\max_v x_v/m_{\phi'(v)}$ (Lemma~\ref{lem:qmm}), the per-type-weighted
counterpart of the cardinality min--max, obtained via the Lov\'asz extension of $f$ and von
Neumann's minimax. Against this anchor, the convergence analysis of
\textsc{Super-Greedy++}~\cite{chekuri2022supermodular} carries over, once loads, marginals and key are read \emph{per unit weight}: the weights enter only through $h_{\irm}^{*}$ and the
width $\Delta_{\irm}$, giving $\min_{t\le T}\hat g_t\le\delta\,h_{\irm}^{*}$ at the stated $T$. On
the $\Delta$-box-localized net the width-to-value ratio collapses to our own parameters,
so the uniform budget $T=\OO(\plen\,\Delta\ln|U|/\delta^{2})$ suffices (Remark~\ref{rem:budget}; $1+\eta\le2$ is absorbed into the constant),
the $1/\delta^{2}$ exponent being proper to peeling-type schemes (Remark~\ref{rem:delta2}). At
$T{=}1$, $\sigma\equiv0$ and the key is $\cnt{v}{G(S)}/m_{\phi'(v)}$, verbatim the reweighted peel of
Theorem~\ref{thm:chen-peeling}; the load carried across passes is the lift to $(1-\delta)$.
\end{proof}

\noindent
By now, we settle the subproblem in the \emph{weighted objective}: for any fixed weight $\irm$, the
solver returns a family within $(1-\delta)$ of $h_{\irm}^{*}$. Our target, however,
is the density $\dens$, not $h_{\irm}$, which will be addressed in 
\S\ref{ssec:bridge}.

\subsection{From $h_{\irm}$ to the Density $\dens$}\label{ssec:bridge}

We now convert near-optimality in $h_{\irm}$ into near-optimality in the density $\dens$. 

\begin{lemma}\label{lem:amgm}
For every $S\subseteq U$ with each $S_j\neq\emptyset$, \
$\dens(S)\ \ge\ \plen\,h_{\irm}(S)$, with equality iff $\irm$ is $S$'s own iRM-set.
\end{lemma}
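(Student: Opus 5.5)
Both sides of the inequality share the numerator $f(S)=|\inst{S}|$, so the claim reduces to a statement about denominators. Since $\dens(S)=f(S)/g(S)$ and $\plen\,h_{\irm}(S)=\plen f(S)/w_{\irm}(S)$, it suffices to show
\[
w_{\irm}(S)=\sum_j m_j|S_j|\ \ge\ \plen\,g(S)=\plen\Big(\prod_j|S_j|\Big)^{1/\plen}.
\]
If $f(S)=0$, both sides of the lemma vanish and the inequality holds trivially. If $f(S)>0$, the inequality on denominators is equivalent to the claim. Throughout, $\irm$ is any feasible iRM-set, so $\prod_j m_j=1$ and every $m_j>0$, and each $|S_j|\ge1$ by hypothesis.

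The key step is AM--GM applied to the $\plen$ positive numbers $a_j=m_j|S_j|$:
\[
\tfrac1\plen\sum_j m_j|S_j|\ \ge\ \Big(\prod_j m_j|S_j|\Big)^{1/\plen}=\Big(\prod_j m_j\Big)^{1/\plen}\Big(\prod_j|S_j|\Big)^{1/\plen}=g(S),
\]
where the last equality uses Property~\ref{prop:irm-product} (extended to feasible iRM-sets) to kill the $\prod_j m_j$ factor. Multiplying by $\plen$ gives $w_{\irm}(S)\ge \plen g(S)$. Inverting and multiplying by $f(S)\ge0$ then yields $\dens(S)\ge \plen h_{\irm}(S)$.

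For the equality case I use that AM--GM is tight exactly when all the $a_j$ coincide, say $m_j|S_j|=c$ for every $j$. The product condition then gives $c^{\plen}=\prod_j|S_j|$, so $c=g(S)$ and $m_j=g(S)/|S_j|$, which is exactly $S$'s own iRM-set. Conversely, $S$'s own iRM-set makes every $a_j=g(S)$, so equality holds. One caveat: when $f(S)=0$, equality in the lemma holds for every $\irm$. The statement should therefore be read as requiring $f(S)>0$, or the ``iff'' should be stated at the level of the denominators, $w_{\irm}(S)=\plen g(S)$. I would note this in the proof.

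I do not expect a real obstacle; the only care needed is this degenerate zero-instance case in the equality clause.
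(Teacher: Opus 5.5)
Your proof is correct and follows essentially the same route as the paper: AM--GM on the terms $m_j|S_j|$ together with $\prod_j m_j=1$ gives $w_{\irm}(S)\ge\plen\,g(S)$, and tightness forces all $m_j|S_j|$ to be equal, i.e.\ $m_j=g(S)/|S_j|$, which is $S$'s own iRM-set. Your caveat is also correct and is something the paper's proof passes over: when $f(S)=0$ both sides vanish for every $\irm$, so the ``iff'' holds only at the level of the denominators $w_{\irm}(S)=\plen\,g(S)$, or under the extra assumption $f(S)>0$.
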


\begin{proof}
By the AM--GM bound behind the loss factor of \S\ref{sec:geometry},
$w_{\irm}(S)=\sum_j m_j|S_j|\ge\plen(\prod_j|S_j|)^{1/\plen}=\plen\,g(S)$ (using
$\prod_j m_j=1$); dividing into $f(S)$ gives $h_{\irm}(S)\le\dens(S)/\plen$. The bound is tight
iff the $m_j|S_j|$ are all equal, i.e.\ iff $\irm$ is $S$'s own iRM-set.
\end{proof}

\noindent
At the optimum's own iRM-set $\irmopt$ the factor $\plen$ thus cancels,
$\plen\,h_{\irmopt}(\pfam^{*})=\densopt$, so solving there would return
$\dens(S')\ge(1-\delta)\densopt$ directly. Since $\irmopt$ is unknown, the net supplies only a
representative $\irm'$, at the tunable cost $1+\eta$:

\begin{lemma}\label{lem:surrogate}
Let $\irm'$ be the net representative of $\irmopt$'s cell, with displacements
$\theta_j=\ln(m'_j/m^{*}_j)$ and $\tfrac1\plen\sum_j e^{\theta_j}\le1+\eta$
(Theorem~\ref{thm:net}). Then $\plen\,h_{\irm'}(\pfam^{*})\ge\densopt/(1+\eta)$.
\end{lemma}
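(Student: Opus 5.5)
The plan is a direct computation: unfold $h_{\irm'}(\pfam^{*})$ and rewrite its weighted denominator through the optimum's own iRM-set, exactly as in the loss-factor calculation at the start of \S\ref{sec:geometry}. By definition $\plen\,h_{\irm'}(\pfam^{*})=\plen\,|\inst{\pfam^{*}}|/w_{\irm'}(\pfam^{*})$ with $w_{\irm'}(\pfam^{*})=\sum_j m'_j|V^{*}_j|$. Since $\irmopt$ is the iRM-set induced by $\pfam^{*}$, we have $|V^{*}_j|=g(\pfam^{*})/m^{*}_j$ for every $j$. Substituting gives
\[
w_{\irm'}(\pfam^{*})=g(\pfam^{*})\sum_j \frac{m'_j}{m^{*}_j}=g(\pfam^{*})\sum_j e^{\theta_j}.
\]

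Next I divide into the numerator. This yields
\[
\plen\,h_{\irm'}(\pfam^{*})=\frac{|\inst{\pfam^{*}}|}{g(\pfam^{*})}\cdot\Big(\tfrac1\plen\textstyle\sum_j e^{\theta_j}\Big)^{-1}=\densopt/L,
\]
where $L$ is the loss factor. The hypothesis $L\le1+\eta$, supplied by Theorem~\ref{thm:net}, then gives $\plen\,h_{\irm'}(\pfam^{*})\ge\densopt/(1+\eta)$.

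There is no real obstacle here. The only care point is that the identity $|V^{*}_j|=g(\pfam^{*})/m^{*}_j$ requires every layer $V^{*}_j$ to be nonempty, so that $g(\pfam^{*})>0$ and $\irmopt$ is well defined. This holds because $\densopt>0$ forces at least one instance, and an instance meets every layer. The case $\densopt=0$ makes the statement trivial.
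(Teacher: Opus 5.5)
Your proof is correct and follows the paper's approach: the paper also proves this by invoking the loss-factor identity $\plen\,h_{\irm'}(\pfam^{*})=\densopt/L$ from \S\ref{sec:geometry}, obtained by substituting $|V^{*}_j|=g(\pfam^{*})/m^{*}_j$, and then applying $L\le1+\eta$. Your additional check that every layer of $\pfam^{*}$ is nonempty when $\densopt>0$ is a welcome detail that the paper leaves implicit.
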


\begin{proof}
This is the loss-factor identity of \S\ref{sec:geometry}: scoring $\pfam^{*}$ at $\irm'$
deflates its density by exactly $L=\tfrac1\plen\sum_j e^{\theta_j}$, i.e.\
$\plen\,h_{\irm'}(\pfam^{*})=\densopt/L$, and $L\le1+\eta$ by Theorem~\ref{thm:net}.
\end{proof}

\begin{theorem}\label{thm:main}
Running the solver of Proposition~\ref{prop:solver} once at every net point of
the net $\mathcal M_\eta$ (Definition~\ref{def:net}),
localized by the $\Delta$-box (Lemma~\ref{lem:deltabox}), and returning the
densest \mpath-family yields
\[
\dens\ \ge\ \frac{1-\delta}{1+\eta}\,\densopt,
\]
using
$\OO\big((\min\{\log n,\log\Delta\}/\sqrt{\eta})^{\plen-1}\big)$
subproblem solvers, where $\Delta=\max_j\Delta_j$.
\end{theorem}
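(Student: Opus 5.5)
The argument chains the three ingredients of \S\ref{sec:geometry}--\S\ref{ssec:bridge} inside the single cell that contains the unknown optimum's weights, and then bounds the number of cells.

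\emph{Step 1 (locating $\irmopt$ in the localized net).} Assume $\densopt>0$; otherwise every family has density $0$ and the claim is trivial. Let $\mathbf x^{*}=\ln\irmopt$. By Lemma~\ref{lem:box}, $\mathbf x^{*}\in B_n$. By Lemma~\ref{lem:deltabox} and the bracket $\densopt\in[1,\Delta]$ derived after it, also $x^{*}_j\in[-\ln\Delta,\ln\Delta_j]$. So $\mathbf x^{*}$ lies in the intersection of $\mathcal H$, $B_n$ and the $\Delta$-box. Lemma~\ref{lem:round} gives a lattice point $\boldsymbol\lambda\in\Lambda_{s(\eta)}$ with $|\lambda_j-x^{*}_j|\le s(\eta)$. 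The localized net must therefore keep every lattice point within one step of the box. Concretely, I will define the localized net as the lattice points of $\Lambda_{s(\eta)}$ inside the $\Delta$-box enlarged by $s(\eta)$ per side, mirroring the $B_n^{+s}$ enlargement in Definition~\ref{def:lattice}. Under this definition $\boldsymbol\lambda$ is retained. Theorem~\ref{thm:net} then gives that $\irm'=e^{\boldsymbol\lambda}$ is feasible with loss factor $L\le1+\eta$.

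\emph{Step 2 (chaining the per-cell guarantees).} Lemma~\ref{lem:surrogate} gives $\plen\,h_{\irm'}(\pfam^{*})\ge\densopt/(1+\eta)$, hence $h^{*}_{\irm'}\ge\densopt/(\plen(1+\eta))$. Proposition~\ref{prop:solver} at $\irm'$ returns $S$ with $h_{\irm'}(S)\ge(1-\delta)h^{*}_{\irm'}$. Lemma~\ref{lem:amgm} then gives $\dens(S)\ge\plen\,h_{\irm'}(S)\ge\frac{1-\delta}{1+\eta}\densopt$.

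One technicality: Lemma~\ref{lem:amgm} needs every $S_j\neq\emptyset$. If the reported prefix misses a layer, then, since meta-path instances use every position, $f(S)=0$ and so $h_{\irm'}(S)=0$. This contradicts $h_{\irm'}(S)>0$, which holds because $h^{*}_{\irm'}>0$ and $\delta<1$. So the returned family has all layers nonempty. Finally, Algorithm~\ref{alg:overview} returns the densest family over all cells, which can only improve on this cell's output.

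\emph{Step 3 (counting cells).} This is the step I expect to be the main obstacle, because the $\Delta$-box must shrink the per-axis range rather than merely clip it. Per axis, the enlarged box has width at most $\ln\Delta+\ln\Delta_j+2s\le2\ln\Delta+2$, and it is also contained in $B_n^{+s}$ of width $\OO(\log n)$. So each coordinate takes $\OO(\min\{\log n,\log\Delta\}/\sqrt\eta+1)$ lattice values. Fixing $\plen-1$ coordinates determines the last through $\sum_jk_j=0$, which yields $\OO((\min\{\log n,\log\Delta\}/\sqrt\eta)^{\plen-1})$ solver calls. The additive $+1$ is absorbed because $\eta\le1$ and $\Delta\ge2$ whenever $\densopt>1$. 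The degenerate case $\Delta=1$ has a single feasible cell and needs only a one-line remark.

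The remaining care point is that the rounding of Lemma~\ref{lem:round} stays within the $s$-enlarged box; this holds by construction, since each coordinate moves by at most $s$.
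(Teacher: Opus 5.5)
Your proposal is correct and follows the paper's proof essentially step for step. You locate $\irmopt$'s cell via Lemma~\ref{lem:round} and the $\Delta$-box, then chain Lemma~\ref{lem:surrogate}, Proposition~\ref{prop:solver} and Lemma~\ref{lem:amgm}, using the same nonempty-layers argument from $h_{\irm'}(S)>0$, and count cells as in Theorem~\ref{thm:net} over the localized range. Your enlargement of the $\Delta$-box by one step $s(\eta)$ per side is exactly the convention the paper adopts (stated in Part~3 of the appendix proof of Proposition~\ref{prop:solver}), and your explicit cell count just spells out what the paper's proof leaves implicit.
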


\begin{proof}
The cell containing $\irmopt$ has a representative $\irm'\in\mathcal M_\eta$
with $\prod_j m'_j=1$ and loss factor $\le1+\eta$
(Theorem~\ref{thm:net}); by Lemma~\ref{lem:deltabox} the localized box
contains that cell. Let $S'$ be the solver's output at $\irm'$.
Lemma~\ref{lem:amgm}'s hypothesis holds at $S'$: by Lemma~\ref{lem:surrogate}
$\plen\,h_{\irm'}^{*}\ge\plen\,h_{\irm'}(\pfam^{*})\ge\densopt/(1+\eta)>0$, so
$h_{\irm'}(S')\ge(1-\delta)h_{\irm'}^{*}>0$, hence $f(S')>0$, and since an instance
occupies every position, every layer of $S'$ is nonempty. Chaining these,
\[
\begin{aligned}
\dens(S')
&\overset{\text{Lem.~\ref{lem:amgm}}}{\ge}\plen\,h_{\irm'}(S')
\overset{(1-\delta)}{\ge}(1-\delta)\,\plen\max_{S}h_{\irm'}(S)\\[2pt]
&\ \ge\ (1-\delta)\,\plen\,h_{\irm'}(\pfam^{*})
\ \overset{\text{Lem.~\ref{lem:surrogate}}}{\ge}\
\frac{1-\delta}{1+\eta}\,\densopt .
\end{aligned}
\]
Reporting the densest family over all net
points only improves the bound, and the number of solvers is that of
Theorem~\ref{thm:net} over the localized box.
\end{proof}

\begin{corollary}[PTAS for fixed $\plen$]\label{cor:ptas}
For every fixed $\plen$ and $\zeta\in(0,1)$, taking $\delta=\eta=\zeta/3$ gives a
$(1-\zeta)$-approximation in time polynomial in $|G|$, a PTAS, not an FPTAS.
\end{corollary}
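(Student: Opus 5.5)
The plan has two parts: fix the parameters so the approximation factor is at least $1-\zeta$, then check that the total running time is polynomial in $|G|$ for fixed $\plen$ and fixed $\zeta$.

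\emph{Approximation factor.} Set $\delta=\eta=\zeta/3$ and apply Theorem~\ref{thm:main}, which gives density at least $\frac{1-\zeta/3}{1+\zeta/3}\densopt$. We need $\frac{1-\zeta/3}{1+\zeta/3}\ge1-\zeta$. This is equivalent to
\[
1-\zeta/3\ \ge\ (1-\zeta)(1+\zeta/3)=1-\tfrac{2\zeta}{3}-\tfrac{\zeta^2}{3},
\]
which holds for every $\zeta\ge0$. Both parameters are admissible: $\eta\in(0,1]$ as Definition~\ref{def:net} requires, and $\delta\in(0,1)$.

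\emph{Number of subproblems.} By Theorem~\ref{thm:main} (equivalently Theorem~\ref{thm:net}) there are $\OO((\log n/\sqrt\eta)^{\plen-1})$ of them. For fixed $\plen$ and $\zeta$ this is $\OO((\log n)^{\plen-1})$, which is polylogarithmic. Enumerating the lattice $\Lambda_{s(\eta)}$ costs time proportional to its size.

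\emph{Cost of one subproblem.} This is the step I expect to need the most care. Proposition~\ref{prop:solver} gives $T=\OO(\Delta_{\irm}\ln|U|/(\delta^2h^*_{\irm}))$ passes, which depends on the width-to-value ratio and is not obviously polynomial. I would use the uniform budget $T=\OO(\plen\Delta\ln|U|/\delta^2)$ from Remark~\ref{rem:budget}, valid on the $\Delta$-box-localized net. Since $\Delta\le\max_v P(v,G)\le n^{\plen-1}$ and $|U|\le\plen n$, this bound is polynomial in $n$ for fixed $\plen$.

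If that remark's derivation is unavailable, I would argue directly. In the localized box, $m_j\ge1/\Delta$ gives $\Delta_{\irm}\le\Delta^2$. In the relevant cell, Lemma~\ref{lem:surrogate} together with $\densopt\ge1$ bounds $h^*_{\irm}$ below by a constant over $\plen(1+\eta)$. Together these again give a polynomial pass count.

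Each pass is a weighted peel costing polynomial time: $\OO(|E|)$ per incidence update via Section~\ref{sec:implicit}, or by materializing the instances, of which there are at most $n^{\plen}$.

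\emph{Conclusion.} Multiplying the three polynomial factors gives total time polynomial in $|G|$ for each fixed $\plen$ and $\zeta$, so the scheme is a PTAS. It is not claimed to be an FPTAS, because the cost grows like $\zeta^{-(\plen-1)/2}$ from the net times $\zeta^{-2}$ from the solver, and the exponent depends on $\plen$. The scheme is also not polynomial when $\plen$ is part of the input, because of the $(\log n)^{\plen-1}$ factor and the $n^{\plen}$ instance bound.
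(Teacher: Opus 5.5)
Your proposal is correct and follows the paper's proof: the factor comes from Theorem~\ref{thm:main} with $\frac{1-\zeta/3}{1+\zeta/3}\ge1-\zeta$, the cell count $\OO((\log n/\sqrt\zeta)^{\plen-1})$ comes from the net, and polynomiality comes from the uniform budget $T=\OO(\plen\,\Delta\ln|U|/\delta^{2})$ of Remark~\ref{rem:budget}. Your fallback bound, $\Delta_{\irm'}\le e\Delta^{2}$ together with $h^{*}_{\irm'}\ge 1/(\plen(1+\eta))$ via $\densopt\ge1$, is a valid, slightly cruder substitute. Like the paper, you justify ``not an FPTAS'' only by noting that the $1/\zeta$-degree $(\plen+3)/2$ grows with $\plen$; for each fixed $\plen$ that degree is a constant, so the running time is still polynomial in $1/\zeta$, and neither argument actually proves that clause.
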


\begin{proof}
Since $\tfrac{1-\zeta/3}{1+\zeta/3}\ge1-\zeta$, Theorem~\ref{thm:main} gives the
factor. The $1/\zeta$-degree is $\tfrac{\plen+3}{2}$: $(\plen-1)/2$ from the
$\OO\big((\log n/\sqrt{\zeta}\,)^{\plen-1}\big)$ cells and $2$ from the solver's
$\OO(1/\delta^{2})$ passes (Proposition~\ref{prop:solver} with $\delta=\zeta/3$);
the uniform pass budget $T=\OO(\plen\,\Delta\ln|U|/\delta^{2})$
is polynomial in $|G|$ at fixed $\plen$. The degree grows with $\plen$, and the concrete end-to-end runtime
is given in \S\ref{sec:implicit}.
\end{proof}

\noindent\emph{On the running example.}
The optimum is the $(4,2,2)$ module $\pfam^{*}$ of Example~\ref{ex:density}
($\densopt\approx2.78$), interleaved with two decoy fans $g_1,g_3$ of loosely attached
paths. Run the low-cost $\tfrac1\plen$ peel (\S\ref{ssec:reduction}) at \emph{every} net
weight: none isolates the module in a single pass, so the best family it sweeps out is
the whole graph ($\approx2.58$). The supermodular solver with $\delta=0.01$ of this section, maximizing
$h_{\irm}$ to $(1-\delta)$ at the module's own weight, instead returns $\pfam^{*}$
exactly. This $2.78$-against-$2.58$ gap \emph{persists under every net refinement}, a clear
per-cell solver gap. Such interleavings are worst-case; on real data the peel often matches
the solver, but only the solver's $(1-\delta)$ bound \emph{certifies} that no
such gap exists in any cell.

\subsection{Running the Solver Instance-Free}\label{sec:implicit}

\noindent
The previous sections cut the \emph{number} of subproblems; a second bottleneck
hides inside each. Every peeling solver reads the instance set $\inst{\pfam}$, which
for a meta-path can far exceed the graph ($|\inst{\pfam}|=\OO(\prod_j|V_j|)$ against
$|E|$ edges), and materializing it would be wasteful, which we show is unnecessary.

\noindent\textbf{The factorization.}
Fix $\mpath=(A_1,\ldots,A_\plen)$ and $\pfam$. For a vertex
$v\in V(A_j)$ let $\dleft(v)$ be the number of partial instances of the prefix
$(A_1,\ldots,A_j)$ that \emph{end} at $v$, and $\dright(v)$ the number of partial
instances of the suffix $(A_j,\ldots,A_\plen)$ that \emph{start} at $v$, the ways
to reach $v$ from the left, and to continue from $v$ to the right.

\begin{observation}\label{lem:factorization}
 $\forall v$ of $G(\pfam)$, \ $\cnt{v}{G(\pfam)}=\dleft(v)\,\dright(v)$.
\end{observation}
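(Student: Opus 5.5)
The plan is a bijection argument on instances through $v$. Fix $v$ in $G(\pfam)$ at position $j=\phi'(v)$; by the position-expanded convention of \S\ref{sec:prelim}, $v$ occupies only position $j$ as an element of $U$, so an instance $\pi=(v_1,\dots,v_\plen)\in\inst{\pfam}$ ``passes through $v$'' exactly when $v_j=v$. I will make this reading explicit first, since it is what makes a product formula possible. If a raw vertex of a recurring type could sit at several positions in one sequence, instances would be double-counted across positions. Under position expansion each copy is a distinct element, so no such issue arises.

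\emph{Key step: the splitting map.} Define $\Phi(\pi)=\big((v_1,\dots,v_j),(v_j,\dots,v_\plen)\big)$. The first component is a partial instance of the prefix $(A_1,\dots,A_j)$ ending at $v$: each $v_k\in V_k$, and each consecutive pair is joined by the prescribed relation because the pair is consecutive in $\pi$. Symmetrically, the second component is a partial instance of the suffix starting at $v$. Hence $\Phi$ maps $\{\pi\in\inst{\pfam}:v_j=v\}$ into $\mathcal L(v)\times\mathcal R(v)$, whose sizes are $\dleft(v)$ and $\dright(v)$.

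\emph{Bijection.} $\Phi$ is injective, since $\pi$ is recovered by concatenating the two components and identifying the shared entry $v$. It is surjective, since for any prefix $(u_1,\dots,u_{j-1},v)$ and suffix $(v,w_{j+1},\dots,w_\plen)$, the concatenation $(u_1,\dots,u_{j-1},v,w_{j+1},\dots,w_\plen)$ is a valid instance. Every vertex lies in the right layer, and every consecutive pair satisfies its relation: the pairs at positions up to $j$ come from the prefix, those from $j$ onward from the suffix, and the only cross pairs, $(u_{j-1},v)$ and $(v,w_{j+1})$, are already inside the prefix and suffix respectively. This is the step that uses the \emph{path} (Markov-like) structure of \mpath: validity is a conjunction of constraints on consecutive pairs only, so no constraint links the left and right parts except through $v$. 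I will remark that this is exactly why the observation is stated for meta-paths, not general patterns. The convention that positions are filled independently, with $v_j=v_k$ allowed, ensures no distinctness constraint couples the two sides either.

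\emph{Conclusion and obstacle.} Since $\Phi$ is a bijection, $\cnt{v}{G(\pfam)}=|\mathcal L(v)|\,|\mathcal R(v)|=\dleft(v)\dright(v)$. The boundary cases $j=1$ and $j=\plen$ need a line: the trivial prefix $(v)$ gives $\dleft(v)=1$, and likewise $\dright(v)=1$ at the right end. The main obstacle is not computational. It is stating the counting convention precisely: ``through $v$'' must mean at $v$'s own position, and the independence of repeated positions must be invoked. Once those are fixed, the argument is the elementary concatenation bijection.
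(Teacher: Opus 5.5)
Your proposal is correct. The paper states this as an observation and gives no proof, and your prefix/suffix concatenation bijection at position $\phi'(v)$ is exactly the reasoning implicit in its definitions of $\dleft(v)$ and $\dright(v)$ as ``the ways to reach $v$ from the left, and to continue from $v$ to the right.'' Your explicit reading of ``through $v$'' as ``at $v$'s own position'' in the position-expanded ground set $U$, with repeated positions filled independently, is the right one and matches the paper's later use of $\cnt{v}{G(S)}=f(S)-f(S\setminus v)$.
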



\noindent\textbf{Maintaining counts.}
The two count families are themselves simple path dynamic programs. Writing
$N^{-}(v)$ and $N^{+}(v)$ for the neighbors of $v$ in the preceding and following
layers,
\[
\dleft(v)=\!\!\sum_{u\in N^{-}(v)}\!\!\dleft(u)\ \ (j>1),
\qquad
\dright(v)=\!\!\sum_{u\in N^{+}(v)}\!\!\dright(u)\ \ (j<\plen),
\]
with $\dleft(v)=1$ on the first layer and $\dright(v)=1$ on the last.

\begin{observation}\label{thm:dp}
All counts $\{\dleft(v),\dright(v)\}_{v}$ can be computed by one forward and one
backward sweep over $G(\pfam)$, in $\OO(|E(G(\pfam))|)$ time and space: by the
recurrences above, the forward sweep fills $\dleft$ in increasing $j$ and the backward
sweep fills $\dright$ in decreasing $j$, each reading every edge once.
\end{observation}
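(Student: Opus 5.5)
We argue Observation~\ref{thm:dp} by induction on the layer index, separately for the forward and backward sweeps, and then account for the cost edge by edge. First, the recurrences must be correct. A partial instance of the prefix $(A_1,\dots,A_j)$ ending at $v\in V_j$ is a sequence $(v_1,\dots,v_{j-1},v)$ with each $v_k\in V_k$ and consecutive vertices joined by the prescribed relation. Deleting its last vertex gives a partial prefix instance of length $j-1$ ending at $v_{j-1}$, and $v_{j-1}$ must lie in $N^{-}(v)$, the neighbours of $v$ in layer $j-1$ of $G(\pfam)$ under relation $R_{j-1}$. Conversely, appending $v$ to any prefix instance ending at some $u\in N^{-}(v)$ yields a prefix instance ending at $v$. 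This is a bijection between the prefix instances ending at $v$ and the disjoint union over $u\in N^{-}(v)$ of the prefix instances ending at $u$. Hence $\dleft(v)=\sum_{u\in N^{-}(v)}\dleft(u)$, with base case $\dleft\equiv1$ on layer $1$. The suffix argument is symmetric and gives $\dright$. Because positions are treated independently in the position-expanded ground set $U$, a repeated type causes no aliasing: each copy of a vertex carries its own values $\dleft$ and $\dright$.

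Second, the order of evaluation must be valid. The value $\dleft$ on layer $j$ depends only on layer $j-1$. So we process the layers in increasing $j$, initialise every $\dleft(v)$ on layer $j$ to $0$, and for each edge $(u,v)$ of relation $R_{j-1}$ between $V_{j-1}$ and $V_j$ add $\dleft(u)$ into $\dleft(v)$. By induction on $j$, every $\dleft(u)$ on layer $j-1$ is already final when it is read. The backward sweep does the same in decreasing $j$, adding $\dright(u)$ into $\dright(v)$ along the edges between layers $j$ and $j+1$.

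Third, the cost. Each edge of $G(\pfam)$ connecting consecutive positions $(j,j+1)$ is read once in the forward sweep and once in the backward sweep, and each read costs $\OO(1)$ arithmetic. Initialisation costs $\OO(|V_1|+\dots+|V_\plen|)$. Storage is two counters per vertex plus the adjacency lists. Both are within $\OO(|E(G(\pfam))|)$ once isolated vertices are discarded or counted as $\OO(1)$ each.

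The step that needs care is how edges are counted when a relation type recurs along \mpath. An edge of type $R$ may serve several position pairs $(j,j+1)$, so it is read once per occurrence of $R$. The cost is therefore really $\OO(\plen\,|E(G(\pfam))|)$, which is $\OO(|E(G(\pfam))|)$ only because $\plen$ is treated as a constant. We state this explicitly, and we count each edge once per position pair it serves, since an edge of the position-expanded graph belongs to exactly one pair. A secondary point is that the counts can reach $\prod_j|V_j|$. We assume word-RAM arithmetic on $\OO(\plen\log n)$-bit integers, which is the usual convention in the paper's setting.
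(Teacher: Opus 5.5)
Your proposal is correct and follows the paper's own argument. The paper's justification is the observation statement itself: the two layer recurrences, evaluated forward in increasing $j$ for $L$ and backward in decreasing $j$ for $R$, with each edge read once per sweep. What you add makes explicit what that one-line justification leaves implicit:
\begin{itemize}
\item the bijection that proves the recurrences;
\item the $\OO(\sum_j|V_j|)$ initialisation term;
\item counting edges per position pair when a relation recurs along the meta-path;
\item the bit-length of the counts.
\end{itemize}
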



\noindent\textbf{Instance-free peeling.}
\textsc{Iterative Peeling} reads the instances only through its
key $(\sigma(v)+\cnt{v}{G(S)})/m_{\phi'(v)}$, and both parts run on $G(\pfam)$ alone: the loads
$\sigma(v)$ are $\OO(|U|)$ scalars updated in place, and the incidence factors as
$\cnt{v}{G(S)}=\dleft(v)\dright(v)$ (Observation~\ref{lem:factorization}), maintained by the two
sweeps of Observation~\ref{thm:dp} and refreshed on deletion by propagating along incident
edges. A pass thus keeps the keys $\big(\sigma(v)+\dleft(v)\dright(v)\big)/m_{\phi'(v)}$ in a
heap and repeatedly extracts the minimum, deletes, and propagates, the \emph{same} execution as
the enumerated solver (since $\dleft(v)\dright(v)=\cnt{v}{G(S)}$), so its $(1-\delta)$ guarantee
carries over with total state $\OO(|E|)$, independent of $|\inst{\pfam}|$. 

\noindent\textbf{End-to-end cost.}
Two quantities, scaling differently in $n$, govern the cost. The \emph{number} of subproblems is
the cell count $K=\OO\big((\min\{\log n,\log\Delta\}/\sqrt\eta)^{\plen-1}\big)$, polylogarithmic in
$n$ and $\Delta$-adaptive (Lemma~\ref{lem:deltabox}), saving over the
$\OO((n/\plen)^{\plen})$ enumeration of prior work. The \emph{cost per subproblem} is polynomial:
on the $\Delta$-box-localized net a cell runs $T=\OO(\plen\,\Delta\ln|U|/\delta^{2})$ passes
(Proposition~\ref{prop:solver}), each costing $\OO(|E|\,\Delta_{\mathrm{prop}}+|U|\log|U|)$ with
$|E|=|E(G(\pfam))|$, $|U|=\sum_j|V(A_j)|$, and $\Delta_{\mathrm{prop}}\le|E|$ the maximum number
of times one pass's count propagation re-reads a single edge. The search therefore
runs in $\OO\big(K\,T\,(|E|\,\Delta_{\mathrm{prop}}+|U|\log|U|)\big)$. 

\noindent\textbf{Discussion.}
The bound on $T$ is worst-case. For the cardinality densest subgraph, Greedy++'s convergence was
conjectured by~\cite{boob2020flowless} and later proved~\cite{chekuri2022supermodular}; faster first-order and coordinate-descent
methods~\cite{harb2022faster,nguyen2024acdm} carry per-edge state that forfeits
instance-freeness. In practice each solver stops far sooner, and the ceiling carries across
representatives $\irm$ (its averaged load is weight-free), trimming the search further
(\S\ref{sec:experiments}); we omit the details due to the limited space.

\section{Vertex and Cell Pruning}\label{sec:pruning}

\noindent
Building on the supermodular reformulation of \S\ref{sec:approx}, we propose a
\emph{per-type} incidence-\emph{floor} vertex pruning that shrinks the graph \emph{before}
any solver. Because the floor scales with each vertex's type weight
$m_{\phi'(v)}$, it is sharper than the single, uniform density threshold of prior
HIN vertex pruning~\cite{chen2023densest}.


Fix a feasible iRM-set $\irm$ with the auxiliary objective $h_{\irm}=f/w_{\irm}$
over the ground set $U$ of \S\ref{sec:approx}, and write
$h_{\irm}^{*}=\max_{\emptyset\neq S\subseteq U}h_{\irm}(S)>0$ for its fixed-weight
optimum. We show the lemma below.

\begin{lemma}[Vertex pruning]\label{lem:coreduce}
With the notation above:
\emph{(i)~Incidence floor.} Every maximizer $S^{*}$ and every $v\in S^{*}$
satisfy $\cnt{v}{G(S^{*})}\ge h_{\irm}^{*}\,m_{\phi'(v)}$, and hence
$\cnt{v}{G}\ge h_{\irm}^{*}\,m_{\phi'(v)}$.
\emph{(ii)~Optimum-preserving peel.} For any $\lambda\le h_{\irm}^{*}$, deleting every vertex
with $\cnt{v}{G}<\lambda\,m_{\phi'(v)}$ removes no vertex of any $h_{\irm}$-optimal
$S^{*}$; iterating the deletion on the shrinking graph against the same $\lambda$
likewise removes none, and leaves $h_{\irm}^{*}$ unchanged.
\end{lemma}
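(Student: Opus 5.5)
We prove (i) by a one-vertex exchange argument on the ratio $h_{\irm}=f/w_{\irm}$, and derive (ii) from (i) by induction over deletion rounds.

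\emph{Step 1 (incidence floor).} Fix a maximizer $S^{*}$ and $v\in S^{*}$. If $S^{*}=\{v\}$, then $f(S^{*})>0$ (since $h_{\irm}^{*}>0$) forces every instance to pass through $v$. Hence $\cnt{v}{G(S^{*})}=f(S^{*})=h_{\irm}^{*}\,m_{\phi'(v)}$ directly.

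Otherwise $S^{*}\setminus\{v\}\neq\emptyset$, and optimality gives $f(S^{*}\setminus v)\le h_{\irm}^{*}\,w_{\irm}(S^{*}\setminus v)$. The key identity is that the marginal of $v$ is exactly its incidence. Every instance of $G(S^{*})$ either avoids $v$, and so survives in $S^{*}\setminus v$, or passes through $v$. Therefore $f(S^{*})-f(S^{*}\setminus v)=\cnt{v}{G(S^{*})}$, which is the same fact used in Lemma~\ref{lem:greedybase}. Modularity gives $w_{\irm}(S^{*})-w_{\irm}(S^{*}\setminus v)=m_{\phi'(v)}$. Subtracting the inequality from $f(S^{*})=h_{\irm}^{*}w_{\irm}(S^{*})$ yields $\cnt{v}{G(S^{*})}\ge h_{\irm}^{*}m_{\phi'(v)}$.

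The second claim of (i) follows from monotonicity of incidence in the host graph: $G(S^{*})\subseteq G$, so every instance of $G(S^{*})$ through $v$ is an instance of $G$ through $v$.

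\emph{Step 2 (one round of pruning).} Take $\lambda\le h_{\irm}^{*}$ and any vertex $u$ with $\cnt{u}{G}<\lambda m_{\phi'(u)}\le h_{\irm}^{*}m_{\phi'(u)}$. If $u$ belonged to some optimal $S^{*}$, Step 1 would give $\cnt{u}{G}\ge h_{\irm}^{*}m_{\phi'(u)}$, a contradiction. So no optimal vertex is deleted, and this holds for every optimal $S^{*}$ simultaneously.

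\emph{Step 3 (iteration).} Induct on rounds with the invariant: the current graph $G_t$ (induced on $U_t\subseteq U$) contains every $h_{\irm}$-optimal $S^{*}$ of the original problem, and its optimum $h^{*}_{\irm}(G_t)$ equals $h_{\irm}^{*}$.

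The value equality holds in both directions. Any subset of $U_t$ is a subset of $U$, so $h^{*}_{\irm}(G_t)\le h_{\irm}^{*}$. Since $S^{*}\subseteq U_t$, we also have $h^{*}_{\irm}(G_t)\ge h_{\irm}^{*}$. Here we need that $f$ of a subset does not depend on the host graph, which is true because instances are counted in the induced subgraph $G(S)$.

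With the invariant in hand, we apply Step 1 inside $G_t$. Each original optimal $S^{*}$ is optimal in $G_t$, so $\cnt{v}{G_t}\ge\cnt{v}{G(S^{*})}\ge h_{\irm}^{*}m_{\phi'(v)}\ge\lambda m_{\phi'(v)}$ for every $v\in S^{*}$. Step 2 then repeats verbatim and preserves the invariant.

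The main subtlety is keeping $\lambda$ fixed against the original $h_{\irm}^{*}$ while the graph shrinks. This is what makes the induction go through. Optimal sets of $G_t$ may include sets that were not optimal before, but we only need to protect the original ones, and the value equality ensures those remain maximizers. The singleton and empty-set edge cases in Step 1 also need the care given there.
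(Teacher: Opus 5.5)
Your proof is correct and takes essentially the paper's route: the one-vertex exchange against the competitor $S^{*}\setminus\{v\}$, where you subtract $f(S^{*}\setminus v)\le h_{\irm}^{*}w_{\irm}(S^{*}\setminus v)$ from $f(S^{*})=h_{\irm}^{*}w_{\irm}(S^{*})$ and the paper cross-multiplies, followed by the same induction on rounds using $\cnt{v}{G(U_r)}\ge\cnt{v}{G(S^{*})}$ whenever $S^{*}\subseteq U_r$. Two harmless asides: your singleton case is vacuous, because $f(S^{*})>0$ forces all $\plen\ge2$ layers of $S^{*}$ to be nonempty (this is how the paper gets $w_{\irm}(S^{*})-m_{\phi'(v)}>0$), and since $f$, $w_{\irm}$ and the optimum value are unchanged on $G_t$, its maximizers are exactly the original ones, so contrary to your closing remark no new optimal sets can appear.
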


\begin{proof}
Fix a maximizer $S^{*}$; $f(S^{*})=h_{\irm}^{*}w_{\irm}(S^{*})>0$ forces every
layer nonempty, so $w_{\irm}(S^{*})-m_{\phi'(v)}>0$ for every $v\in S^{*}$.
\emph{(i)} For $v\in S^{*}$, deleting $v$ destroys exactly $\cnt{v}{G(S^{*})}$
instances, and optimality of $S^{*}$ over the competitor $S^{*}\!\setminus\!\{v\}$ gives
$\tfrac{f(S^{*})-\cnt{v}{G(S^{*})}}{w_{\irm}(S^{*})-m_{\phi'(v)}}\le h_{\irm}^{*}$;
cross-multiplying yields $\cnt{v}{G(S^{*})}\ge h_{\irm}^{*}m_{\phi'(v)}$, and
$\cnt{v}{G}\ge\cnt{v}{G(S^{*})}$ since $S^{*}\subseteq U$.
\emph{(ii)} By (i) every $v\in S^{*}$ clears the floor $\lambda m_{\phi'(v)}$, so
$S^{*}$ survives each round (with $\cnt{v}{G(U_r)}\ge\cnt{v}{G(S^{*})}$ as
$S^{*}\subseteq U_r$, the ground set after $r$ rounds, by induction) and stays optimal over the shrinking ground set,
leaving $h_{\irm}^{*}$ unchanged.
\end{proof}


The floor of Lemma~\ref{lem:coreduce} involves the unknown $h_{\irm}^{*}$, but
part~(ii) needs only a \emph{lower} bound on it: any $\lambda\le h_{\irm}^{*}$ is a sound
floor, and deleting $\{v:\cnt{v}{G}<\lambda\,m_{\phi'(v)}\}$ removes no vertex of an optimal family.
The running search supplies one, its best auxiliary value $h_{\irm}(S)\le h_{\irm}^{*}$ so far,
which only improves across the run~\cite{chen2023densest}; the peel uses this $\lambda$ in place of
$h_{\irm}^{*}$. When the peel empties a cell's graph, that cell is dropped, and the
bound still holds. 
\section{Experiments}\label{sec:experiments}

\begin{table}[t]
\centering
\caption{Datasets}
\label{tab:datasets}
\scriptsize
\setlength{\tabcolsep}{3.2pt}
\renewcommand{\arraystretch}{1.05}
\begin{tabular}{l rr rr c}
\toprule
Dataset & $|\mathcal A|$ & $|\mathcal R|$ & $|V|$ & $|E|$ & $i$ \\
\midrule
MovieLens & 5    & 6      & $2.7\mathrm{K}$  & $0.10\mathrm{M}$ & 3--4 \\
DBLP      & 4    & 3      & $37.8\mathrm{K}$ & $0.17\mathrm{M}$ & 3--4 \\
Douban    & 6    & 6      & $37.6\mathrm{K}$ & $1.71\mathrm{M}$ & 3--4 \\
DBpedia   & 412  & $661$ & $6.31\mathrm{M}$ & $18.7\mathrm{M}$  & 3--9 \\
Freebase  & $10.1\mathrm{K}$ & $14.8\mathrm{K}$ & $120\mathrm{M}$ & $433\mathrm{M}$ & 3--9 \\
\bottomrule
\end{tabular}

\vspace{9pt}
\definecolor{cNetFirst}{HTML}{d1622b}
\definecolor{cNetSecond}{HTML}{1f5fbf}
\definecolor{cNetLocal}{HTML}{2ca02c}
\definecolor{cCertify}{HTML}{7b3294}
\newcommand{\algmark}[4]{\tikz[baseline=-0.6ex]{%
  \draw[#1,#2,line width=0.7pt] (0,0)--(0.40,0);
  \draw[#1] plot[mark=#3,mark size=1.4pt,mark options={solid,#4}] coordinates {(0.20,0)};}}

\caption{Tested algorithms
}
\label{tab:algs}
\scriptsize
\setlength{\tabcolsep}{4pt}
\renewcommand{\arraystretch}{1.15}
\begin{tabular}{l c l l c}
\toprule
& mark & Net & Per-cell solver & Guarantee \\
\midrule
A1 & ---                                                  & exact (none)        & exact         & $1$                        \\
A2 & \algmark{cNetFirst}{densely dashed}{*}{}             & first-order         & $1/i$ peel    & $\tfrac{1}{i(1+\eta)}$     \\
A3 & \algmark{cNetFirst}{solid}{square*}{}                & first-order         & near-optimal  & $\tfrac{1-\delta}{1+\eta}$\,(conj.) \\
A4 & \algmark{cNetSecond}{densely dashed}{triangle*}{}    & second-order        & $1/i$ peel    & $\tfrac{1}{i(1+\eta)}$     \\
A5 & \algmark{cNetSecond}{solid}{diamond*}{}              & second-order        & near-optimal  & $\tfrac{1-\delta}{1+\eta}$\,(conj.) \\
A6 & \algmark{cNetLocal}{densely dashed}{pentagon*}{}     & local\,+\,second    & $1/i$ peel    & $\tfrac{1}{i(1+\eta)}$     \\
A7 & \algmark{cNetLocal}{solid}{triangle*}{rotate=180}    & local\,+\,second    & near-optimal  & $\tfrac{1-\delta}{1+\eta}$\,(conj.) \\
A8 & \algmark{cCertify}{densely dotted}{star}{}           & local\,+\,second    & near-optimal  & $\tfrac{1-\delta}{1+\eta}$\,(cert.) \\
\bottomrule
\end{tabular}

\end{table}

We now evaluate the effectiveness and efficiency of the proposed techniques.
\subsection{Experimental Setup}
\label{ssec:exp-setup}

\definecolor{cNetFirst}{HTML}{d1622b}
\definecolor{cNetSecond}{HTML}{1f5fbf}
\definecolor{cNetLocal}{HTML}{2ca02c}
\definecolor{cCertify}{HTML}{7b3294}
\newcommand{\expOneCurves}[1]{%
  \addplot[cNetFirst, densely dashed,  mark=*,         mark options={solid}, yshift=1.6pt]  table[x=eta,y=A2]{figures/data/#1};
  \addplot[cNetFirst, solid,   mark=square*,                         yshift=-1.6pt] table[x=eta,y=A3]{figures/data/#1};
  \addplot[cNetSecond, densely dashed, mark=triangle*, mark options={solid}, yshift=-3.2pt] table[x=eta,y=A4]{figures/data/#1};
  \addplot[cNetSecond, solid,  mark=diamond*,                        yshift=1.1pt]  table[x=eta,y=A5]{figures/data/#1};
  \addplot[cNetLocal, densely dashed,  mark=pentagon*, mark options={solid}, yshift=-1.1pt] table[x=eta,y=A6]{figures/data/#1};
  \addplot[cNetLocal, solid,   mark=triangle*, mark options={rotate=180}, yshift=3.2pt]  table[x=eta,y=A7]{figures/data/#1};
  \addplot[cCertify, densely dotted, mark=star, mark options={solid}, yshift=4.8pt]  table[x=eta,y=A7]{figures/data/#1};
}
\newcommand{\legitem}[6]{
  \draw[#2,#3,line width=0.8pt] (#1,0)--(#1+0.50,0);
  \draw[#2] plot[mark=#4,mark size=1.5pt,mark options={solid,#5}] coordinates {(#1+0.25,0)};
  \node[anchor=west,inner sep=1.5pt] at (#1+0.52,0) {#6};}
\newcommand{\expTwoCurves}[1]{%
  \addplot[cNetFirst, solid, mark=square*]                                 table[x=delta,y=A3]{figures/data/#1};
  \addplot[cNetSecond, solid, mark=diamond*,                  yshift=1.5pt] table[x=delta,y=A5]{figures/data/#1};
  \addplot[cNetLocal, solid, mark=triangle*, mark options={rotate=180}, yshift=-1.5pt] table[x=delta,y=A7]{figures/data/#1};
  \addplot[cCertify, densely dotted, mark=star, mark options={solid}, yshift=-3.1pt] table[x=delta,y=A7]{figures/data/#1};
}

\begin{figure*}[t]
\centering
\captionsetup[subfigure]{skip=1pt}
\begin{subfigure}{\textwidth}\centering
\begin{tikzpicture}
\begin{groupplot}[
  expbase,
  width=2.4cm, height=1.7cm,
  group style={group size=5 by 1, horizontal sep=0.95cm},
  xmode=log, x dir=reverse, log basis x=10,
  xtick={0.05,0.1,0.2,0.5,1}, xticklabels={.05,.1,.2,.5,1},
  enlarge x limits=0.06,
  unbounded coords=jump,
  grid=both, grid style={black!10, line width=0.3pt},
  xlabel={$\eta$ (finer $\rightarrow$)},
  ytick={0.99,0.995,1.0}, yticklabels={.99,.995,1.0},
]
\nextgroupplot[title={MovieLens}, ylabel={$\rho/\rho^{*}$},
               ymin=0.989, ymax=1.0012]
  \expOneCurves{exp1_MovieLens.dat}
\nextgroupplot[title={DBLP}, ymin=0.989, ymax=1.0015]
  \expOneCurves{exp1_DBLP.dat}
\nextgroupplot[title={Douban}, ymin=0.984, ymax=1.0015]
  \expOneCurves{exp1_Douban.dat}
\nextgroupplot[title={DBpedia}, ymin=0.997, ymax=1.003,
               ytick={0.998,1.0,1.002}, yticklabels={.998,1,1.002}]
  \expOneCurves{exp1_DBpedia.dat}
  \node[font=\tiny, align=center, text=black!72, fill=white, fill opacity=0.7,
        text opacity=1, inner sep=1pt] at (axis description cs:0.5,0.22)
    {all exact\\($\rho/\rho^{*}{=}1$)};
\nextgroupplot[title={Freebase}, ymin=0.997, ymax=1.003,
               ytick={0.998,1.0,1.002}, yticklabels={.998,1,1.002}]
  \expOneCurves{exp1_Freebase.dat}
  \node[font=\tiny, align=center, text=black!72, fill=white, fill opacity=0.7,
        text opacity=1, inner sep=1pt] at (axis description cs:0.5,0.22)
    {all exact\\($\rho/\rho^{*}{=}1$)};
\end{groupplot}
\coordinate (leg) at ($(group c1r1.north)!0.5!(group c5r1.north)+(0,7mm)$);
\begin{scope}[shift={(leg)}, font=\tiny]
  \legitem{-3.77}{cNetFirst}{densely dashed}{*}{}{A2}
  \legitem{-2.65}{cNetFirst}{solid}{square*}{}{A3}
  \legitem{-1.53}{cNetSecond}{densely dashed}{triangle*}{}{A4}
  \legitem{-0.41}{cNetSecond}{solid}{diamond*}{}{A5}
  \legitem{ 0.71}{cNetLocal}{densely dashed}{pentagon*}{}{A6}
  \legitem{ 1.83}{cNetLocal}{solid}{triangle*}{rotate=180}{A7}
  \legitem{ 2.95}{cCertify}{densely dotted}{star}{}{A8}
\end{scope}
\end{tikzpicture}
\subcaption{Recovered density $\rho/\rho^{*}$ vs.\ net granularity $\eta$.}
\label{fig:exp-eta}
\end{subfigure}\\[-0.4ex]
\begin{subfigure}{\textwidth}\centering
\begin{tikzpicture}
\begin{groupplot}[
  expbase,
  width=2.4cm, height=1.7cm,
  group style={group size=5 by 1, horizontal sep=0.95cm},
  xmode=log, log basis x=10,
  xtick={0.05,0.1,0.25,0.5}, xticklabels={.05,.1,.25,.5},
  enlarge x limits=0.10,
  grid=both, grid style={black!10, line width=0.3pt},
  xlabel={$\delta$ (looser $\rightarrow$)},
  ymin=0.996, ymax=1.001, ytick={0.998,1.0}, yticklabels={.998,1},
]
\nextgroupplot[title={MovieLens}, ylabel={$\rho/\rho^{*}$}]
  \expTwoCurves{exp2_MovieLens.dat}
\nextgroupplot[title={DBLP}]
  \expTwoCurves{exp2_DBLP.dat}
\nextgroupplot[title={Douban}]
  \expTwoCurves{exp2_Douban.dat}
\nextgroupplot[title={DBpedia}]
  \expTwoCurves{exp2_DBpedia.dat}
  \node[font=\tiny, align=center, text=black!72, fill=white, fill opacity=0.7,
        text opacity=1, inner sep=1pt] at (axis description cs:0.5,0.22)
    {all exact\\($\rho/\rho^{*}{=}1$)};
\nextgroupplot[title={Freebase}]
  \expTwoCurves{exp2_Freebase.dat}
  \node[font=\tiny, align=center, text=black!72, fill=white, fill opacity=0.7,
        text opacity=1, inner sep=1pt] at (axis description cs:0.5,0.22)
    {all exact\\($\rho/\rho^{*}{=}1$)};
\end{groupplot}
\end{tikzpicture}
\subcaption{Recovered density $\rho/\rho^{*}$ vs.\ per-subproblem tolerance $\delta$.}
\label{fig:exp-delta}
\end{subfigure}
\caption{Effectiveness evaluation.}
\label{fig:effectiveness}
\end{figure*}

\begin{figure}[t]
\centering
\scalebox{0.85}{%
\begin{tikzpicture}[
  font=\scriptsize, >=Stealth,
  cmp/.style={draw=cS4, fill=cL3!12, rounded corners=1.5pt, inner sep=2pt,
              minimum height=8pt, anchor=west},
  gene/.style={draw=black!55, fill=orange!22, ellipse, inner sep=1.5pt, minimum height=9pt},
  inflam/.style={draw=teal!60!black, fill=teal!12, rounded corners=1.5pt, inner sep=2pt,
              minimum height=8pt, anchor=west},
  cancer/.style={draw=red!60!black, fill=orange!18, rounded corners=1.5pt, inner sep=2pt,
              minimum height=8pt, anchor=west},
  binds/.style={draw=black!35, line width=0.3pt},
  assoc/.style={draw=black!45, line width=0.3pt},
  treats/.style={draw=green!45!black, line width=0.7pt, dashed},
]
\node[cmp] (Su) at (0,3.30) {Sulfasalazine};
\node[cmp] (Me) at (0,2.75) {Mesalazine};
\node[cmp] (Di) at (0,2.20) {Diclofenac};
\node[cmp] (Ib) at (0,1.65) {Ibuprofen};
\node[cmp] (As) at (0,1.10) {Aspirin};
\node[cmp] (Mo) at (0,0.55) {Montelukast};
\node[cmp] (Zi) at (0,0.00) {Zileuton};
\node[gene] (P2) at (2.9,2.55) {PTGS2 (COX-2)};
\node[gene] (P1) at (2.9,1.65) {PTGS1 (COX-1)};
\node[gene] (AL) at (2.9,0.75) {ALOX5 (5-LOX)};
\node[inflam] (RA) at (4.85,3.30) {rheumatoid arthritis};
\node[inflam] (UC) at (4.85,2.75) {ulcerative colitis};
\node[inflam] (AS) at (4.85,2.20) {asthma};
\node[cancer] (EC) at (4.85,1.45) {esophageal cancer};
\node[cancer] (SC) at (4.85,0.90) {stomach cancer};
\node[cancer] (BC) at (4.85,0.35) {breast cancer};
\foreach \c/\g in {Su/P2,Su/P1,Su/AL, Me/P2,Me/P1,Me/AL, Di/P2,Di/P1,Di/AL,
                   Ib/P2,Ib/P1, As/P2,As/P1, Mo/P1,Mo/AL, Zi/P1,Zi/AL}
  \draw[binds] (\c.east) -- (\g.west);
\foreach \g/\d in {P2/RA,P2/UC,P2/AS,P2/EC,P2/SC,P2/BC, P1/UC,P1/EC,P1/BC,
                   AL/RA,AL/AS,AL/SC}
  \draw[assoc] (\g.east) -- (\d.west);
\draw[treats] (Su.north) to[bend left=26] (RA.north);
\draw[treats] (Su.north) to[out=58,in=178] (UC.west);
\draw[treats] (Me.north) to[out=72,in=182] (UC.west);
\draw[treats] (Mo.south) to[out=-32,in=180] (4.35,-0.42) -- (4.35,2.05)
                          to[out=90,in=208] (AS.west);
\draw[treats] (Zi.south) to[out=-28,in=180] (4.20,-0.52) -- (4.20,1.98)
                          to[out=90,in=212] (AS.west);
\begin{scope}[shift={(0,-1.0)}]
  \draw[binds] (0,0) -- (0.42,0); \node[anchor=west,font=\tiny] at (0.42,0) {binds / associates (recovered C--G--D)};
  \draw[treats] (4.02,0) -- (4.44,0); \node[anchor=west,font=\tiny] at (4.44,0) {treats (known, independent)};
\end{scope}
\end{tikzpicture}
}
\caption{Case study }
\label{fig:casestudy}
\end{figure}

\definecolor{cNetFirst}{HTML}{d1622b}
\definecolor{cNetSecond}{HTML}{1f5fbf}
\definecolor{cNetLocal}{HTML}{2ca02c}
\definecolor{cCertify}{HTML}{7b3294}
\newcommand{\cutline}[1]{\draw[densely dashed, black!55, line width=0.5pt]
  ({rel axis cs:0,0}|-{axis cs:0,#1}) -- ({rel axis cs:1,0}|-{axis cs:0,#1});}
\newcommand{\effscalCurves}[1]{%
  \addplot[black, densely dotted, mark=o,        mark options={solid}] table[x=frac,y=A1]{figures/data/#1};
  \addplot[cNetFirst, densely dashed, mark=*,    mark options={solid}] table[x=frac,y=A2]{figures/data/#1};
  \addplot[cNetFirst, solid, mark=square*]                             table[x=frac,y=A3]{figures/data/#1};
  \addplot[cNetSecond, densely dashed, mark=triangle*, mark options={solid}] table[x=frac,y=A4]{figures/data/#1};
  \addplot[cNetSecond, solid, mark=diamond*]                           table[x=frac,y=A5]{figures/data/#1};
  \addplot[cNetLocal, densely dashed, mark=pentagon*, mark options={solid}]  table[x=frac,y=A6]{figures/data/#1};
  \addplot[cNetLocal, solid, mark=triangle*, mark options={rotate=180}]      table[x=frac,y=A7]{figures/data/#1};
  \addplot[cCertify, densely dotted, mark=star, mark options={solid}]        table[x=frac,y=A8]{figures/data/#1};}
\newcommand{\effiDeltaCurves}[1]{%
  \addplot[cNetSecond, solid, mark=diamond*]
    table[x=delta,y=A5]{figures/data/effi_vary_delta_#1.dat};
  \addplot[cNetLocal, densely dashed, mark=pentagon*, mark options={solid}]
    table[x=delta,y=A6]{figures/data/effi_vary_delta_#1.dat};
  \addplot[cNetLocal, solid, mark=triangle*, mark options={rotate=180}]
    table[x=delta,y=A7]{figures/data/effi_vary_delta_#1.dat};
  \addplot[cCertify, densely dotted, mark=star, mark options={solid}]
    table[x=delta,y=A8]{figures/data/effi_vary_delta_#1.dat};}
\newcommand{\effiEtaCurves}[1]{%
  \addplot[cNetLocal, densely dashed, mark=pentagon*, mark options={solid}]
    table[x=eta,y=A6]{figures/data/effi_vary_eta_#1.dat};
  \addplot[cNetLocal, solid, mark=triangle*, mark options={rotate=180}]
    table[x=eta,y=A7]{figures/data/effi_vary_eta_#1.dat};
  \addplot[cCertify, densely dotted, mark=star, mark options={solid}]
    table[x=eta,y=A8]{figures/data/effi_vary_eta_#1.dat};}

\begin{figure*}[t]
\centering
\begin{subfigure}{\textwidth}\centering
\begin{tikzpicture}
\begin{axis}[hide axis, scale only axis, width=1pt, height=1pt,
  legend columns=8, legend style={draw=none, font=\scriptsize,
  /tikz/every even column/.append style={column sep=6pt}}]
\addplot[draw=none,forget plot] coordinates {(0,0)};
\addlegendimage{black, densely dotted, mark=o}\addlegendentry{A1 (exact)}
\addlegendimage{cNetFirst, densely dashed, mark=*}\addlegendentry{A2}
\addlegendimage{cNetFirst, mark=square*}\addlegendentry{A3}
\addlegendimage{cNetSecond, densely dashed, mark=triangle*}\addlegendentry{A4}
\addlegendimage{cNetSecond, mark=diamond*}\addlegendentry{A5}
\addlegendimage{cNetLocal, densely dashed, mark=pentagon*}\addlegendentry{A6}
\addlegendimage{cNetLocal, mark=triangle*, mark options={rotate=180}}\addlegendentry{A7}
\addlegendimage{cCertify, densely dotted, mark=star}\addlegendentry{A8}
\end{axis}
\end{tikzpicture}\\[-0.2ex]
\begin{tikzpicture}
\begin{groupplot}[exprow,
    width=2.45cm, height=1.65cm,
    group style={group size=5 by 1, horizontal sep=0.82cm},
    ymode=log, log basis y=10,
    xmin=14, xmax=106,
    xtick={20,40,60,80,100}, xticklabels={20,40,60,80,100},
    x tick label style={font=\tiny},
    xlabel={edge(\%)},
    enlarge x limits=0.10,
    unbounded coords=jump,
  ]
\nextgroupplot[title={MovieLens}, ylabel={time (s)}, ymin=3e-2, ymax=2e3,
    ytick={0.1,1,10,100,1000}, yticklabels={$.1$,$1$,$10$,$10^2$,$\infty$}]
  \cutline{1000}
  \effscalCurves{effscal_MovieLens.dat}
\nextgroupplot[title={DBLP}, ymin=5e-2, ymax=2e3,
    ytick={0.1,1,10,100,1000}, yticklabels={$.1$,$1$,$10$,$10^2$,$\infty$}]
  \cutline{1000}
  \effscalCurves{effscal_DBLP.dat}
\nextgroupplot[title={Douban}, ymin=3e-2, ymax=2e3,
    ytick={0.1,1,10,100,1000}, yticklabels={$.1$,$1$,$10$,$10^2$,$\infty$}]
  \cutline{1000}
  \effscalCurves{effscal_Douban.dat}
\nextgroupplot[title={DBpedia}, ymin=3e-3, ymax=3,
    ytick={0.01,0.1,1}, yticklabels={$.01$,$.1$,$1$}]
  \effscalCurves{effscal_DBpedia.dat}
\nextgroupplot[title={Freebase}, ymin=1e-2, ymax=2e3,
    ytick={0.1,10,1000}, yticklabels={$.1$,$10$,$\infty$}]
  \cutline{1000}
  \effscalCurves{effscal_Freebase.dat}
\end{groupplot}
\end{tikzpicture}
\subcaption{Scalability 
}
\label{fig:exp-effscal}
\end{subfigure}\\[-0.4ex]
\begin{subfigure}{\textwidth}\centering
\begin{tikzpicture}
\begin{axis}[hide axis, scale only axis, width=1pt, height=1pt,
  legend columns=4, legend style={draw=none, font=\scriptsize,
  /tikz/every even column/.append style={column sep=7pt}}]
\addplot[draw=none,forget plot] coordinates {(0,0)};
\addlegendimage{cNetSecond, mark=diamond*}\addlegendentry{A5}
\addlegendimage{cNetLocal, densely dashed, mark=pentagon*}\addlegendentry{A6}
\addlegendimage{cNetLocal, mark=triangle*, mark options={rotate=180}}\addlegendentry{A7}
\addlegendimage{cCertify, densely dotted, mark=star}\addlegendentry{A8 (certified)}
\end{axis}
\end{tikzpicture}\\[-0.3ex]
\begin{tikzpicture}
\begin{groupplot}[exprow,
    width=2.35cm, height=1.7cm,
    group style={group size=5 by 1, horizontal sep=0.85cm},
    xmode=log, log basis x=10,
    xtick={0.01,0.05,0.1,0.25,0.5}, xticklabels={.01,.05,.1,.25,.5},
    xmin=7e-3, xmax=7e-1,
    x tick label style={font=\tiny, rotate=45, anchor=east}, enlarge x limits=0.08,
    ymode=log, log basis y=10, unbounded coords=jump,
    xlabel={$\delta$ (looser $\rightarrow$)},
    grid=both, grid style={black!10, line width=0.3pt},
  ]
\nextgroupplot[title=MovieLens, ylabel={time (s)}, ymin=0.1, ymax=6e1,
    ytick={0.1,1,10}, yticklabels={$.1$,$1$,$10$}]
  \effiDeltaCurves{MovieLens}
\nextgroupplot[title=DBLP, ymin=0.2, ymax=3e2,
    ytick={1,10,100}, yticklabels={$1$,$10$,$10^2$}]
  \effiDeltaCurves{DBLP}
\nextgroupplot[title=Douban, ymin=1, ymax=4e2,
    ytick={1,10,100}, yticklabels={$1$,$10$,$10^2$}]
  \effiDeltaCurves{Douban}
\nextgroupplot[title=DBpedia, ymin=1.5e-3, ymax=1,
    ytick={0.01,0.1,1}, yticklabels={$.01$,$.1$,$1$}]
  \effiDeltaCurves{DBpedia}
\nextgroupplot[title=Freebase, ymin=3e-3, ymax=1,
    ytick={0.01,0.1,1}, yticklabels={$.01$,$.1$,$1$}]
  \effiDeltaCurves{Freebase}
\end{groupplot}
\end{tikzpicture}
\subcaption{Running time vs. $\delta$ 
}
\label{fig:effi-vary-delta}
\end{subfigure}\\[-0.4ex]
\begin{subfigure}{\textwidth}\centering
\begin{tikzpicture}
\begin{groupplot}[exprow,
    width=2.35cm, height=1.7cm,
    group style={group size=5 by 1, horizontal sep=0.85cm},
    xmode=log, x dir=reverse, log basis x=10,
    xtick={0.05,0.1,0.2,0.5,1}, xticklabels={.05,.1,.2,.5,1},
    x tick label style={font=\tiny}, enlarge x limits=0.08,
    ymode=log, log basis y=10,
    xlabel={$\eta$ (finer $\rightarrow$)},
    grid=both, grid style={black!10, line width=0.3pt},
  ]
\nextgroupplot[title=MovieLens, ylabel={time (s)}, ymin=5e-2, ymax=1e1,
    ytick={0.1,1,10}, yticklabels={$.1$,$1$,$10$}]
  \effiEtaCurves{MovieLens}
\nextgroupplot[title=DBLP, ymin=0.5, ymax=3e2,
    ytick={1,10,100}, yticklabels={$1$,$10$,$10^2$}]
  \effiEtaCurves{DBLP}
\nextgroupplot[title=Douban, ymin=1, ymax=3e2,
    ytick={1,10,100}, yticklabels={$1$,$10$,$10^2$}]
  \effiEtaCurves{Douban}
\nextgroupplot[title=DBpedia, ymin=1e-2, ymax=4e-1,
    ytick={0.01,0.1}, yticklabels={$.01$,$.1$}]
  \effiEtaCurves{DBpedia}
\nextgroupplot[title=Freebase, ymin=1e-2, ymax=1,
    ytick={0.01,0.1,1}, yticklabels={$.01$,$.1$,$1$}]
  \effiEtaCurves{Freebase}
\end{groupplot}
\end{tikzpicture}
\subcaption{Running time vs. $\eta$.}
\label{fig:effi-vary-eta}
\end{subfigure}
\caption{Efficiency evaluation.}
\label{fig:efficiency}
\end{figure*}

\noindent\textbf{Datasets.}
We use five benchmark HINs (Table~\ref{tab:datasets}); \textsc{Hetionet} serves the case
study (\S\ref{ssec:exp-eff}). Three are dense
\emph{core} benchmarks: \textsc{MovieLens}, \textsc{DBLP}, and
\textsc{Douban}~\cite{chen2023densest}. Two are large \emph{breadth} graphs rebuilt from public dumps, so their sizes differ from~\cite{chen2023densest}:
\textsc{DBpedia} ($18.7$M edges) and \textsc{Freebase} ($433$M edges), used to test scale and the limits of the cover on long, sparse meta-paths.

\noindent\textbf{Queries (meta-paths).}
As in~\cite{chen2023densest}, from each schema we materialize a pool of meta-path subgraphs grouped
by length $i$: every available $\mpath$ of $i{\in}\{3,4\}$ on the core datasets,
and a sampled $i{\in}\{3,\dots,9\}$ pool on the breadth datasets ($140$ each). Sampling is at the path-instance level, so each \mpath-partite graph is connected and has \mpath-instances.

\noindent\textbf{Methods.}
We compare 8 algorithms, listed in Table~\ref{tab:algs}.
A1, the state-of-the-art~\cite{chen2023densest} exact algorithm, gives the exact optimum
$\rho^{*}$ by enumerating necessary feasible iRM-sets with pruning (its approximation is
consistently slower, checking every iRM-set\footnote{Therefore, we directly use the exact method as the baseline.}).
 \textsc{A2}--\textsc{A8} factor the cover into a \emph{net} (which
weight-sets are represented) and a per-cell \emph{solver}: \emph{second-order} =
the $\sqrt{2\eta/e}$ net (\S\ref{ssec:net}), \emph{local} = the $\Delta$-box
localization (\S\ref{ssec:deltabox}), \emph{near-optimal} = the supermodular
solver (\S\ref{sec:supermod}). The near-optimal solver runs in two modes: a \emph{conjectured} stop
(``conj.'': A3, A5, A7), a few passes reaching the near-optimal density in practice
without a per-run guarantee~\cite{boob2020flowless}, and a \emph{certified} stop
(``cert.'': A8), run to the dual-gap certificate of Proposition~\ref{prop:solver}
that proves $\tfrac{1-\delta}{1+\eta}$ on every cell~\cite{chekuri2022supermodular};
\S\ref{sec:implicit} discusses the two. A2--A8 all apply the pruning of
\S\ref{sec:pruning} by default.
Among A2 to A8, A6 is the best in terms of time complexity, while A8 is the best in terms of accuracy. 



\noindent\textbf{Implementation and measures.}
All algorithms are implemented in C++ and compiled with GCC~13.3 (\texttt{-O2}) on M5 Max. We measure the algorithm runtime as the total CPU time,
excluding the I/O cost of loading the graph into main memory; a cut-off time of
$800$\,s and a $48$\,GB memory cap are set, and a non-completion is denoted $\infty$.
Each reported number is averaged over the dataset's query pool.

\noindent\textbf{Parameters.}
We vary net granularity $\eta$, per-subproblem tolerance $\delta$, and meta-path length $\plen$, with defaults $0.1$, $0.05$, $3$ ($4$)\footnote{$3$ for simple-schema HINs}, respectively.

\subsection{Effectiveness}
\label{ssec:exp-eff}

\noindent Except for A1, which supplies the reference optimum $\rho^{*}$, only
results finishing within the cut-off time are shown.


\noindent\textbf{Recovered density.}
Figures~\ref{fig:exp-eta} and~\ref{fig:exp-delta} report the density actually
recovered, $\rho/\rho^{*}$, as $\eta$ and $\delta$ vary. The first observation is
that every variant recovers the optimum: on the core datasets
$\rho/\rho^{*}{\ge}0.999$ at the default $\eta{=}0.1$ (A7 within
$1.5{\times}10^{-3}$), and on the breadth datasets the near-optimal solver returns
the exact optimum at every setting. The second observation is that the accuracy is
insensitive to both parameters: A7 varies under $1\%$ across $\eta$ and under
$0.2\%$ across $\delta$. The reason is the quadratic net loss of
Theorem~\ref{thm:net}, whose gap stays tiny even at a coarse $\eta$; hence coverage
and quality decouple and a coarse $(\eta,\delta)$ incurs no loss. A faint
ordering A3$\,\gtrsim\,$A5$\,\gtrsim\,$A7 also appears across $\delta$: the
un-reduced first-order net represents more weight-sets, so its best cell lands
slightly closer to the optimum, but the gap ($\le0.2\%$) is within the $(1{+}\eta)$
tolerance.

\noindent\textbf{Achieving vs.\ certifying near-optimality (A6/A7 vs.\ A8).}
Whether a subproblem is peeled to the fixed $1/i$ cap (A6), to the heuristic
near-optimal stop (A7), or to the worst-case $(1{-}\delta)$ dual-gap certificate
(A8), all three recover the same density (Figure~\ref{fig:exp-eta}). The reason is
that on real inputs the peel reaches a near-optimal subgraph within a few passes, so
certifying it only \emph{proves} the result rather than changing it. Consequently,
certification yields a guarantee rather than accuracy: A6 and A7 achieve near-optimal
density but cannot certify it on a given input, whereas A7's end-to-end
$(1{-}\delta)/(1{+}\eta)$ bound, secured by A8's certificate, holds whatever the data
(\S\ref{ssec:bridge} shows an input where the uncertified $1/i$ peel falls short).
Therefore, A6/A7 is preferable when the input is trusted and A8 when a worst-case
guarantee is required.

\noindent\textbf{Case study: drug repurposing.}
We run the motivating compound--gene--disease repurposing query of
\S\ref{sec:intro} directly on \textsc{Hetionet}~\cite{himmelstein2017hetionet}
($47$K nodes, $2.25$M edges), where, as argued there, a densest
$\langle\textsf{C},\textsf{G},\textsf{D}\rangle$ block can serve as a repurposing
shortlist. To make the result displayable, we filter out a few generic high-degree
``housekeeping'' hub genes (the analogue of stop-words); the query then returns a
coherent module: a family of anti-inflammatory drugs, the shared enzymes they target (COX-2 and two
relatives, the targets of aspirin and ibuprofen), and the inflammatory diseases
those enzymes drive (Figure~\ref{fig:casestudy}). The result can be checked against
held-out labels: the query reads only the \textsc{binds}/\textsc{associates} edges
and never a treatment relation, yet five of the drug--disease pairs it implies
(e.g.\ sulfasalazine--rheumatoid arthritis
) are \emph{approved} indications stored
in a \emph{separate, held-out} \textsc{treats} edge type, so the dense structure
recovers approved indications it was never shown. The remaining implied pairs (the same
enzymes to three cancers, carrying \emph{no} \textsc{treats} edge) match a
documented but still-investigational therapy, plausible novel leads for a repurposing
pipeline. Our exact solver certifies the returned subgraph is optimal (edit

\subsection{Efficiency}
\label{ssec:exp-eff2}


\noindent\textbf{Scalability}.
Following the subgraph-generation of \cite{chen2023densest}, for each dataset we
retain fixed percentages ($20$--$100\%$) of the edges (10 sets for each percentage) and run the
algorithms on the induced subgraphs (Figure~\ref{fig:exp-effscal}); $\plen$ is set to
$3$ on the core datasets and to $4$ on the breadth datasets.
The first observation is that only the
localized-net variants A6, A7, and A8 scale to the complete dataset. The reason is
that their net is $n$-independent and each pass runs in $\OO(|E|)$ without
materializing the $\OO(\prod_j|V_j|)$ instances, so their cost grows only with the
graph. In contrast, removing the reductions degrades scalability: the
second-order net without localization (A4/A5), and far more so the un-localized
first-order nets (A2/A3), grow sharply with the graph. On \textsc{Freebase}, A3
reaches $14$\,s against A7's $0.5$\,s, a $28\times$ difference. A1, enumerating every iRM-set, is the least scalable, timing out on every core dataset and
on \textsc{Freebase}. The results
verify that both reductions, the $\sqrt{2\eta/e}$ net and the $\Delta$-box
localization, are critical for scalability; we therefore drop A1--A3 below.

\noindent\textbf{Achieving vs.\ certifying near-optimality (A7 vs.\ A8).}
Figure~\ref{fig:effi-vary-delta} varies the per-subproblem tolerance $\delta$. The
first observation is that \emph{achieving} near-optimal density is inexpensive and
$\delta$-insensitive: the heuristic peel (A5, A7) costs under $1.5\times$ as $\delta$
tightens three orders of magnitude, and the $1/i$ peel (A6) is $\delta$-independent,
both finishing within a few seconds. In contrast, \emph{certifying} near-optimality
(A8) is far more costly, and its cost is data-dependent. The reason is that the
certified time is governed by the number of \emph{near-tied} net cells: on
\textsc{MovieLens} and the breadth datasets a single dominant cell lets the incumbent
prune the rest, so certification stays within $40$s across all $\delta$; on
\textsc{DBLP} and \textsc{Douban}, many cells carry near-tied density and each must be
certified to convergence, so the running time increases sharply with small $\delta$.

\noindent\textbf{Varying net granularity $\eta$.}
Figure~\ref{fig:effi-vary-eta} varies $\eta$ on the full graph. The running time
rises smoothly as $\eta$ refines, matching the predicted
$\OO((\log n/\sqrt\eta)^{\plen-1})$ cell count, with A7 a roughly constant factor
above the $1/i$ peel A6. The reason is the per-cell solver overhead, which is the
only difference between the two. A8 rises even more slowly, $1.8$--$2.3\times$ under
the ${\approx}20\times$ cell growth, as its time goes to the few near-tied cells,
whose number barely changes with $\eta$.

\definecolor{cNetLocal}{HTML}{2ca02c}
\definecolor{cCertify}{HTML}{7b3294}
\newcommand{\effiVaryICurves}[1]{%
  \addplot[cNetLocal, densely dashed, mark=pentagon*, mark options={solid}]
    table[x=i,y=A6]{figures/data/effi_vary_i_#1.dat};
  \addplot[cNetLocal, solid, mark=triangle*, mark options={rotate=180}]
    table[x=i,y=A7]{figures/data/effi_vary_i_#1.dat};
  \addplot[cCertify, densely dotted, mark=star, mark options={solid}, yshift=2.4pt]
    table[x=i,y=A8]{figures/data/effi_vary_i_#1.dat};}

\begin{figure}[t]
\centering
\begin{tikzpicture}
\begin{axis}[hide axis, scale only axis, width=1pt, height=1pt,
  legend columns=3, legend style={draw=none, font=\scriptsize,
  /tikz/every even column/.append style={column sep=7pt}}]
\addplot[draw=none,forget plot] coordinates {(0,0)};
\addlegendimage{cNetLocal, densely dashed, mark=pentagon*}\addlegendentry{A6 ($1/i$ peel)}
\addlegendimage{cNetLocal, mark=triangle*, mark options={rotate=180}}\addlegendentry{A7 (near-opt)}
\addlegendimage{cCertify, densely dotted, mark=star}\addlegendentry{A8 (certified)}
\end{axis}
\end{tikzpicture}\\[-0.2ex]
\begin{tikzpicture}
\begin{groupplot}[exprow,
  width=2.62cm, height=1.62cm,
  group style={group size=2 by 1, horizontal sep=0.42cm, y descriptions at=edge left},
  ymode=log, log basis y=10,
  xmin=2.7, xmax=9.3, xtick={3,5,7,9}, enlarge x limits=0.06,
  x tick label style={font=\tiny},
  ymin=3e-3, ymax=3e2, ytick={1e-2,1e0,1e2}, yticklabels={$10^{-2}$,$1$,$10^2$},
  xlabel={meta-path length $\plen$},
  grid=both, grid style={black!10, line width=0.3pt},
]
\nextgroupplot[title=DBpedia, ylabel={runtime (s)}]
  \effiVaryICurves{DBpedia}
\nextgroupplot[title=Freebase]
  \effiVaryICurves{Freebase}
\end{groupplot}
\end{tikzpicture}
\vspace{-6pt}
\caption{Varying $\plen$ on the breadth datasets.}
\label{fig:effi-vary-i}
\end{figure}

\noindent\textbf{Varying meta-path length $\plen$.}
Figure~\ref{fig:effi-vary-i} varies $\plen$ from $3$ up to $9$ on the breadth HINs,
averaging over the query graphs at each length. The first observation is
that only the localized variants remain tractable, their running time tracking the
net's $\OO((\log n/\sqrt\eta)^{\plen-1})$ growth, the one bottleneck the reductions
cannot remove. The second observation is that the heuristic A6 and A7 finish every
length (A7 within $33$\,s), whereas the certified A8 runs about an order of magnitude
slower when $i$ is large. 
This is consistent with the
achieve-vs-certify gap above, now amplified by the $\plen{-}1$ exponent in the net.

\newcommand{\ablbars}[4]{%
  \addplot[ybar,bar shift=0pt,fill=cOracle,draw=cS4,line width=0.3pt] coordinates {(0,#1)};
  \addplot[ybar,bar shift=0pt,fill=cS2,draw=cS4,line width=0.3pt]     coordinates {(1,#2)};
  \addplot[ybar,bar shift=0pt,fill=cS3,draw=cS4,line width=0.3pt]     coordinates {(2,#3)};
  \addplot[ybar,bar shift=0pt,fill=cL3,draw=cS4,line width=0.3pt]     coordinates {(3,#4)};}
\begin{figure}[t]
\centering
\begin{tikzpicture}
\begin{groupplot}[exprow,
    group style={group size=2 by 1, horizontal sep=0.42cm, y descriptions at=edge left},
    ybar, /pgf/bar width=4.5pt,
    ymode=log, ymin=80, ymax=3000, log origin=infty,
    enlarge x limits=0.18,
    xtick={0,1,2,3},
    xticklabels={\,1st,$\sqrt\eta$,$\Delta$,both},
    x tick label style={font=\tiny,yshift=0.2ex},
    ytick={100,1000},
    yticklabels={$10^2$,$10^3$},
  ]
\nextgroupplot[title=MovieLens, ylabel={cells}]
  \ablbars{1467.09}{569.97}{350.76}{150.64}
\nextgroupplot[title=Douban]
  \ablbars{1733.15}{720.68}{289.48}{134.07}
\end{groupplot}
\end{tikzpicture}
\vspace{-6pt}
\caption{Localization
}
\label{fig:exp-ablation}
\end{figure}

\noindent\textbf{Effect of localization.}
Figure~\ref{fig:exp-ablation} reports the net cell count through the following
methods: 1) the first-order net (A2), 2) the $\sqrt\eta$ net (\S\ref{ssec:net}), 3) the
$\Delta$-box localization (\S\ref{ssec:deltabox}), and \textsf{both} (2 plus 3). \textsf{both} composes to an $\approx$$10\times$ smaller net on the core datasets, which
explains the fast running times of A6, A7, and A8 observed above.

%
%
%



\section{Related Work}\label{sec:related}

\noindent\textbf{Densest subgraph and its variants.}
The densest subgraph, which maximizes $|E(S)|/|S|$, is exact via parametric
max-flow~\cite{goldberg1984finding} and $1/2$-approximable by greedy
peeling~\cite{charikar2000greedy}; its
variants span directed/bipartite~\cite{khuller2009dense},
density-friendly~\cite{danisch2017large}, $k$-clique~\cite{tsourakakis2015kclique,he2023scaling},
multilayer~\cite{galimberti2017multilayer}, anchored~\cite{dai2022anchored}, and
locally densest clique/triangle~\cite{xu2024locally,yang2026locally} density, recently
consolidated in a unified analysis~\cite{zhou2024indepth} and a broad
survey~\cite{lanciano2024survey}. \textsf{DPSS} is their
$\plen$-partite meta-path generalization.

\noindent\textbf{Near-optimal supermodular solvers.}
A fast line recasts the objective as a \emph{supermodular} ratio solved to
$(1-\eps)$: Greedy++~\cite{boob2020flowless}, iterative
peeling~\cite{chekuri2022supermodular}, scalable
decomposition~\cite{harb2022faster} and its lexicographically-optimal-base
Frank--Wolfe view~\cite{harb2023lexbase}, coordinate descent~\cite{nguyen2024acdm}, and,
for the directed case, convex programming~\cite{ma2022convexdds}, scalable
discovery~\cite{zhou2025directed}, and accelerated coordinate
descent~\cite{liang2026racd}. All are one-dimensional, tied to the single
ratio at $\plen=2$. We lift them to $\plen>2$ via our weight-polytope cover
(\S\ref{sec:geometry},\,\S\ref{sec:supermod}). At $\plen=2$ they scan the ratio at a
first-order pitch ($\OO(\log n/\eps)$ guesses, each within $1{+}\eps$~\cite{ma2022convexdds});
our $\sqrt\eta$ net (\S\ref{ssec:net}) reaches the same $1{+}\eta$ loss in
$\OO(\log n/\sqrt\eta)$. For $\plen>2$ the feasible weights lose their total order
(\S\ref{sec:revisit}), so the ratio search no longer applies. 

\noindent\textbf{Dense subgraphs over HINs.}
HINs model typed, multi-relational data queried through
meta-paths~\cite{sun2012mining,shi2017survey,sun2011pathsim}; cohesive-structure
discovery there centers on query-anchored \emph{community
search}~\cite{fang2020survey,fang2020effective,jian2020effective,zhou2023influential},
with recent methods even avoiding meta-path \emph{materialization}~\cite{jiang2025scar},
closely related to our instance-free peeling. The
densest \mpath-partite subgraph~\cite{chen2023densest} optimizes one parameter-free
density globally; we improve this prior art.


\section{Conclusion}\label{sec:conclusion}

In this paper, we  show that the exhaustive weight-set enumeration behind prior \textsf{DPSS}
methods is unnecessary for approximation. The feasible iRM-sets fill a bounded
$(\plen-1)$-dimensional polytope that an $\eta$-net covers with
polylogarithmically many representatives. Each weighted supermodular
densest-subgraph instance is then solved near-optimally and instance-free, with
incumbent-driven pruning discarding most before applying any solver.
Overall, our algorithm achieves a $\tfrac{1-\delta}{1+\eta}$ guarantee
at polylogarithmic cost. 
Extensive experimental studies on real datasets justify the effectiveness and efficiency of our proposed methods. 

\bibliographystyle{ACM-Reference-Format}
\bibliography{sample-base}

\appendix
\setcounter{lemma}{0}
\setcounter{remark}{0}
\renewcommand{\thelemma}{A.\arabic{lemma}}
\renewcommand{\theremark}{A.\arabic{remark}}

\section{Full Proof of the Iterative Peeling Solver}\label{app:prop1}

\noindent
Throughout, $\irm=(m_1,\dots,m_{\plen})$ is a fixed iRM-set of per-type weights;
$f(S)=|\inst{S}|$ counts the instances of the length-$\plen$ meta-path \mpath within
$S\subseteq U$, the position-expanded ground set ($f$ is normalized, monotone, and
supermodular by Lemma~\ref{lem:supermodular}); $m_v:=m_{\phi'(v)}>0$ where $\phi'(v)\in[\plen]$
is $v$'s position in \mpath,
$w_{\irm}(S)=\sum_{v\in S}m_v$, $h_{\irm}(S)=f(S)/w_{\irm}(S)$ for $\emptyset\neq S\subseteq U$,
$h_{\irm}^{*}=\max_{\emptyset\neq S}h_{\irm}(S)$, $\cnt{v}{H}$ is the number of instances
through $v$ in a subgraph $H$, $\Delta_j=\max_{v\in V(A_j)}\cnt{v}{G}$
($V(A_j)$ the vertices of type $A_j$), and
$\Delta=\max_j\Delta_j$, the maximum per-vertex instance count. Deleting $v$ from $S$ destroys exactly the
instances through $v$, so, with $G(S)$ the subgraph induced by $S$,
$\cnt{v}{G(S)}=f(S)-f(S\setminus v)=f(v\,|\,S\setminus v)$. A vector
$x\in\mathbb R^{U}$ has one coordinate $x_v$ per element $v\in U$, and we write
$x(S)=\sum_{v\in S}x_v$; the (contra)polymatroid base polytope of $f$ is
\[
B(f)=\{x\ge0:x(S)\ge f(S)\ \forall S\subseteq U,\ x(U)=f(U)\}.
\]
For every $x\in B(f)$ and $v\in U$,
$f(\{v\})\le x_v=x(U)-x(U{-}v)\le f(U)-f(U{-}v)=f(v\,|\,U{-}v)=\cnt{v}{G}\le\Delta$,
so $B(f)\subseteq[0,\Delta]^{U}$ is compact and convex. The \emph{weighted width} is
$\Delta_{\irm}=\max_{v\in U}\cnt{v}{G}/m_v=\max_j\Delta_j/m_j$ ($m_v$ is constant on layers),
computable by the same two sweeps that build the $\Delta$-box
(Observations~\ref{lem:factorization}--\ref{thm:dp}).

\smallskip
\noindent\textbf{The algorithm in full.}
\textsc{Iterative Peeling}$(f,\irm,T)$: set $\sigma\equiv0$; for $t=1,\dots,T$: set
$S\leftarrow U$ and, while $S\neq\emptyset$, record $S$ as a candidate family, then delete
\[
\begin{gathered}
u=\arg\min_{v\in S}\frac{\sigma(v)+\cnt{v}{G(S)}}{m_v},\\[2pt]
g_t(u):=\cnt{u}{G(S)},\qquad \sigma(u)\mathrel{+}=g_t(u).
\end{gathered}
\]
Return the densest candidate family $\hat S$ over all prefixes (the surviving sets just before each
deletion) of all passes. At $T{=}1$, $\sigma\equiv0$ and the key is $\cnt{v}{G(S)}/m_v$, exactly
the reweighted peel of Theorem~\ref{thm:chen-peeling}.

\begin{lemma}[Greedy bases; exact linear minimization]\label{lem:greedybase}
Let $f$ be normalized, monotone, and supermodular on $U$. For an ordering
$\pi=(\pi_1,\dots,\pi_{|U|})$ of $U$ with prefixes $P_k=\{\pi_1,\dots,\pi_k\}$ define the
\emph{greedy base} $g^\pi$ by $g^\pi_{\pi_k}:=f(\pi_k\,|\,P_{k-1})$. Then:
(a) $g^\pi\in B(f)$ for every ordering $\pi$;
(b) for every $z\in\mathbb R^{U}_{\ge0}$, if $\pi$ sorts $z$ decreasingly
($z_{\pi_1}\ge\cdots\ge z_{\pi_{|U|}}$, ties arbitrary), then
\[
\min_{x\in B(f)}\langle z,x\rangle=\langle z,g^\pi\rangle
=\int_0^\infty f\big(\{v:z_v\ge\tau\}\big)\,d\tau=:\hat f(z),
\]
the Lov\'asz extension of $f$ at $z$.
\end{lemma}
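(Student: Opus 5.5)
The plan is to mirror the classical Edmonds greedy argument for polymatroids, transposed to the supermodular (contrapolymatroid) setting, where the inequalities flip and $\min$ replaces $\max$.

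For part (a), nonnegativity of $g^\pi$ follows from monotonicity, since $g^\pi_{\pi_k}=f(P_k)-f(P_{k-1})\ge0$. Telescoping gives $g^\pi(U)=f(U)-f(\emptyset)=f(U)$. For the covering inequality $g^\pi(S)\ge f(S)$ for arbitrary $S$, list $S=\{\pi_{k_1},\dots,\pi_{k_r}\}$ with $k_1<\dots<k_r$ and write
\[
f(S)=\textstyle\sum_{\ell}f\big(\pi_{k_\ell}\,|\,\{\pi_{k_1},\dots,\pi_{k_{\ell-1}}\}\big).
\]
Each conditioning set is contained in $P_{k_\ell-1}$. Supermodularity, in the form of increasing marginals established in Lemma~\ref{lem:supermodular}, then gives termwise $f(\pi_{k_\ell}\,|\,\cdot)\le f(\pi_{k_\ell}\,|\,P_{k_\ell-1})=g^\pi_{\pi_{k_\ell}}$. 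Summing over $\ell$ yields $f(S)\le g^\pi(S)$.

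For part (b), let $\pi$ sort $z$ decreasingly and set $z_{\pi_{|U|+1}}:=0$. Abel summation gives
\[
\langle z,x\rangle=\textstyle\sum_{k=1}^{|U|}(z_{\pi_k}-z_{\pi_{k+1}})\,x(P_k),
\]
where every coefficient is nonnegative because $z$ is sorted and $z\ge0$. For any $x\in B(f)$ we have $x(P_k)\ge f(P_k)$, so $\langle z,x\rangle\ge\sum_k(z_{\pi_k}-z_{\pi_{k+1}})f(P_k)$. The vector $g^\pi$ attains this bound with equality, since $g^\pi(P_k)=f(P_k)$ by telescoping. Combined with (a), which places $g^\pi$ in $B(f)$, this shows $g^\pi$ is a minimizer and $\min_{x\in B(f)}\langle z,x\rangle=\langle z,g^\pi\rangle=\sum_k(z_{\pi_k}-z_{\pi_{k+1}})f(P_k)$.

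It remains to identify this sum with the integral. For $\tau\in(z_{\pi_{k+1}},z_{\pi_k}]$ the level set $\{v:z_v\ge\tau\}$ equals $P_k$, up to ties. For $\tau>z_{\pi_1}$ the level set is empty and contributes $f(\emptyset)=0$. Summing over these intervals reproduces $\hat f(z)$.

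The main obstacle is handling ties carefully. When $z_{\pi_k}=z_{\pi_{k+1}}$, the level set $\{v:z_v\ge\tau\}$ on a degenerate interval is not exactly $P_k$. However, such intervals have zero length and their Abel coefficients vanish, so the choice of tie-breaking in $\pi$ affects neither side. I would state this explicitly, and also note that the integral is finite because $z$ is bounded and the level sets are eventually empty.
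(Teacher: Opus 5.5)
Your proposal is correct and follows the paper's proof essentially step for step: nonnegativity, telescoping, and the termwise increasing-marginals comparison along $S$ enumerated in $\pi$-order for (a); then, for (b), Abel summation with nonnegative coefficients, tightness $g^\pi(P_k)=f(P_k)$ on every prefix, and the level-set identification of the sum with $\hat f(z)$. The only differences are cosmetic. You obtain nonnegativity directly from monotonicity, $f(P_k)-f(P_{k-1})\ge 0$, rather than via $f(\pi_k\,|\,P_{k-1})\ge f(\{\pi_k\})\ge 0$, and you explicitly note that tie intervals have zero length, which the paper leaves implicit.
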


\begin{proof}
(a) \emph{Nonnegativity:} $g^\pi_{\pi_k}=f(\pi_k\,|\,P_{k-1})\ge f(\{\pi_k\})\ge0$, since
supermodular marginals are nondecreasing in the context set and $f$ is monotone.
\emph{Tightness:} $g^\pi(U)=\sum_k\big(f(P_k)-f(P_{k-1})\big)=f(U)$ telescopes.
\emph{Covering:} fix $S\subseteq U$ and enumerate it in $\pi$-order as $s_1,\dots,s_r$ with
$s_j=\pi_{k_j}$; then $\{s_1,\dots,s_{j-1}\}\subseteq P_{k_j-1}$, so by monotone marginals
\[
f(S)=\sum_{j=1}^{r}f\big(s_j\,\big|\,\{s_1,\dots,s_{j-1}\}\big)
\le\sum_{j=1}^{r}f\big(s_j\,\big|\,P_{k_j-1}\big)=g^\pi(S).
\]
(b) Set $z_{\pi_{|U|+1}}:=0$. Abel summation gives, for any $x$,
$\langle z,x\rangle=\sum_{k}\big(z_{\pi_k}-z_{\pi_{k+1}}\big)x(P_k)$, with all coefficients
$\ge0$ by the sorting. For $x\in B(f)$, $x(P_k)\ge f(P_k)$ for $k<|U|$ and $x(P_{|U|})=f(U)$;
the greedy base attains $g^\pi(P_k)=f(P_k)$ for all $k$. Hence
$\langle z,x\rangle\ge\sum_k(z_{\pi_k}-z_{\pi_{k+1}})f(P_k)=\langle z,g^\pi\rangle$. Finally
$\{v:z_v\ge\tau\}=P_k$ for $\tau\in(z_{\pi_{k+1}},z_{\pi_k}]$, an interval of length
$z_{\pi_k}-z_{\pi_{k+1}}$, so the sum equals the integral defining $\hat f(z)$.
\end{proof}

\begin{lemma}[Weighted quasi-min-max]\label{lem:qmm}
With $f,\irm$ as above and $f\not\equiv0$,
\[
h_{\irm}^{*}=\max_{\emptyset\neq S\subseteq U}\frac{f(S)}{w_{\irm}(S)}
=\min_{x\in B(f)}\ \max_{v\in U}\ \frac{x_v}{m_v}.
\]
\end{lemma}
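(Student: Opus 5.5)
I would prove the two inequalities separately, following the hint in Proposition~\ref{prop:solver} (Lov\'asz extension plus minimax), but with a direct argument wherever possible.

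\emph{Easy direction ($\le$).} Fix any $x\in B(f)$ and set $c=\max_v x_v/m_v$. For every nonempty $S$, the covering constraint gives $f(S)\le x(S)=\sum_{v\in S}x_v\le c\sum_{v\in S}m_v=c\,w_{\irm}(S)$. Hence $h_{\irm}(S)\le c$. Taking the maximum over $S$ and then the minimum over $x$ gives $h_{\irm}^{*}\le\min_{x\in B(f)}\max_v x_v/m_v$. The minimum is attained because $B(f)$ is compact (shown above) and the objective is continuous.

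\emph{Hard direction ($\ge$).} I would exhibit $x\in B(f)$ with $x_v\le h_{\irm}^{*}m_v$ for all $v$. Write $\max_v x_v/m_v=\max_{y\in\Sigma}\langle y,x\rangle$, where $\Sigma=\{y\ge0:\sum_v m_v y_v=1\}$ (extreme points $e_v/m_v$). The function is bilinear, and $B(f)$ and $\Sigma$ are compact and convex, so von Neumann's minimax swaps the order:
\[
\min_{x\in B(f)}\max_{y\in\Sigma}\langle y,x\rangle=\max_{y\in\Sigma}\min_{x\in B(f)}\langle y,x\rangle=\max_{y\in\Sigma}\hat f(y),
\]
where the last equality is Lemma~\ref{lem:greedybase}(b), which applies because $y\ge0$. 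It remains to show $\hat f(y)\le h_{\irm}^{*}$ for $y\in\Sigma$. Using the level-set form,
\[
\hat f(y)=\int_0^\infty f(\{y\ge\tau\})\,d\tau\le h_{\irm}^{*}\int_0^\infty w_{\irm}(\{y\ge\tau\})\,d\tau=h_{\irm}^{*}\sum_v m_v y_v=h_{\irm}^{*}.
\]
This uses $f(S)\le h_{\irm}^{*}w_{\irm}(S)$ for every $S$ (trivially for $S=\emptyset$) and the layer-cake identity $\int_0^\infty \mathbf 1[y_v\ge\tau]\,d\tau=y_v$. Conversely, taking $y=\mathbf 1_{S^*}/w_{\irm}(S^*)$ at a maximizer gives $\hat f(y)=h_{\irm}^{*}$, which confirms equality.

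\emph{Main obstacle.} The delicate step is justifying the minimax exchange and the identification of the inner minimum with $\hat f$. Von Neumann applies because both sets are compact and convex and the payoff is bilinear. The hypothesis $f\not\equiv0$ is needed only so that $h_{\irm}^{*}>0$ and the statement is meaningful; the argument itself does not otherwise depend on it. If one prefers to avoid the minimax theorem, an alternative is LP duality on $\min\{c: x\in B(f),\ x_v\le c\,m_v\}$, but I would go with minimax since the paper already anticipates it.
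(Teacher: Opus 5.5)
Your proof is correct and follows the paper's argument essentially step for step. The paper uses the same weak-duality direction and the same weighted simplex with von Neumann's minimax, and it likewise identifies the inner minimum with $\hat f$ via Lemma~\ref{lem:greedybase}(b). It bounds $\hat f\le h_{\irm}^{*}$ with the discrete sum $\hat f(z)=\sum_k\alpha_k f(P_k)$, $\sum_k\alpha_k w_{\irm}(P_k)=1$, which is the same computation as your layer-cake integral.
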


\begin{proof}
\emph{Weak direction ($\le$).} For $x\in B(f)$ and $\emptyset\neq S$:
$f(S)\le x(S)=\sum_{v\in S}\tfrac{x_v}{m_v}m_v\le\big(\max_{v\in S}\tfrac{x_v}{m_v}\big)w_{\irm}(S)$,
so $h_{\irm}(S)\le\max_{v\in U}x_v/m_v$ for every $S$ and every $x\in B(f)$.

\emph{Strong direction ($\ge$).} Let
$Z:=\{z\in\mathbb R^{U}_{\ge0}:\sum_v m_v z_v=1\}$, compact and convex since every $m_v>0$.
\emph{Step 1: $h_{\irm}^{*}=\max_{z\in Z}\hat f(z)$.} For ``$\le$'':
$z:=\mathbf 1_{S^{*}}/w_{\irm}(S^{*})\in Z$ for a maximizer $S^{*}$, and
$\hat f(z)=f(S^{*})/w_{\irm}(S^{*})=h_{\irm}^{*}$ by the integral formula of
Lemma~\ref{lem:greedybase}(b). For ``$\ge$'': take
any $z\in Z$ sorted decreasingly by $\pi$ with prefixes $P_k$ and coefficients
$\alpha_k:=z_{\pi_k}-z_{\pi_{k+1}}\ge0$; then $\hat f(z)=\sum_k\alpha_k f(P_k)$ and
$1=\sum_v m_v z_v=\sum_k\alpha_k w_{\irm}(P_k)$, so $f(P_k)\le h_{\irm}^{*}w_{\irm}(P_k)$ (each
$P_k\neq\emptyset$) gives $\hat f(z)\le h_{\irm}^{*}$.
\emph{Step 2: exchange min and max.} By Lemma~\ref{lem:greedybase}(b),
$\hat f(z)=\min_{x\in B(f)}\langle z,x\rangle$ on $Z\subseteq\mathbb R^{U}_{\ge0}$. The payoff
$\langle z,x\rangle$ is bilinear and both $Z$ and $B(f)$ are compact convex sets, so von
Neumann's minimax theorem gives
\[
h_{\irm}^{*}=\max_{z\in Z}\min_{x\in B(f)}\langle z,x\rangle
=\min_{x\in B(f)}\max_{z\in Z}\langle z,x\rangle
=\min_{x\in B(f)}\max_{v\in U}\frac{x_v}{m_v},
\]
the last step because a linear functional over the weighted simplex $Z$ is maximized at an
extreme point $z=\mathbf e_v/m_v$.
\end{proof}

\begin{lemma}[Per-pass base membership]\label{lem:passbase}
Consider one pass of \textsc{Iterative Peeling} (with \emph{any} deletion key), deleting
$u_1,u_2,\dots,u_{|U|}$ in order, where $S_k=\{u_k,\dots,u_{|U|}\}$ is the surviving set before
the $k$-th deletion, and charging $g(u_k)=\cnt{u_k}{G(S_k)}=f(u_k\,|\,S_k\setminus u_k)$. Then
$g=g^{\pi}$ for the \emph{reverse} deletion order $\pi=(u_{|U|},\dots,u_1)$; hence
$g\in B(f)$ by Lemma~\ref{lem:greedybase}(a). Consequently, after $t$ passes with charges
$g_1,\dots,g_t$ and accumulated load $\sigma_t:=\sum_{s\le t}g_s$, the averaged load
$\bar x_t:=\sigma_t/t$ lies in $B(f)$ by convexity.
\end{lemma}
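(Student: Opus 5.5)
The plan is to identify the charge vector of a single pass with a greedy base and then invoke Lemma~\ref{lem:greedybase}(a). The averaging claim will follow from convexity of $B(f)$.

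\emph{Step 1: match prefixes.} Write $N=|U|$ and fix the reverse order $\pi=(\pi_1,\dots,\pi_N)$ with $\pi_k:=u_{N+1-k}$. Its prefixes are
\[
P_k=\{u_{N+1-k},\dots,u_N\}=S_{N+1-k},
\]
so every prefix of $\pi$ is exactly a surviving set of the pass. For the vertex $\pi_k=u_{N+1-k}$ we then have $P_{k-1}=S_{N+1-k}\setminus u_{N+1-k}$.

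\emph{Step 2: compare coordinates.} The greedy base assigns
\[
g^\pi_{\pi_k}=f(\pi_k\,|\,P_{k-1})=f\big(u_{N+1-k}\,\big|\,S_{N+1-k}\setminus u_{N+1-k}\big).
\]
This is the charge $g(u_{N+1-k})$ by the identity $\cnt{v}{G(S)}=f(v\,|\,S\setminus v)$ recorded at the head of this appendix. Every vertex of $U$ is deleted exactly once, so $\pi$ is a genuine ordering of $U$ and $g=g^\pi$ coordinatewise. Lemma~\ref{lem:greedybase}(a) then gives $g\in B(f)$.

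The deletion key never enters this argument; only the order it produces matters. This is why the statement holds for \emph{any} key, and in particular for the load-augmented key with whatever tie-breaking is used. If one wants to be careful about a pass that stops early, note that the algorithm runs while $S\neq\emptyset$, so every pass deletes all of $U$.

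\emph{Step 3: averaging.} Each per-pass charge is $g_s\in B(f)$. The set $B(f)$ is an intersection of half-spaces $x(S)\ge f(S)$, the orthant $x\ge0$, and the hyperplane $x(U)=f(U)$, so it is convex. Hence the average $\bar x_t=\tfrac1t\sum_{s\le t}g_s$ is a convex combination of points of $B(f)$ and lies in $B(f)$.

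There is no real obstacle here; the only delicate point is the index bookkeeping in Step~1.
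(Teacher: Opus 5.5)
Your proof is correct and follows the paper's own argument: the prefixes of the reverse deletion order coincide with the surviving sets, so the greedy marginals $f(\pi_k\,|\,P_{k-1})$ equal the pass charges coordinatewise, Lemma~\ref{lem:greedybase}(a) gives $g\in B(f)$, and convexity of $B(f)$ (an intersection of half-spaces, the orthant and a hyperplane) handles the average. Your remarks that the key never enters and that every pass deletes all of $U$ are accurate and slightly more explicit than the paper.
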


\begin{proof}
Under $\pi=(u_{|U|},\dots,u_1)$ the element at position $j$ is $u_{|U|-j+1}$ and the prefix is
$P_{j-1}=\{u_{|U|},\dots,u_{|U|-j+2}\}=S_{|U|-j+1}\setminus u_{|U|-j+1}$; its greedy marginal is
therefore $f(u_{|U|-j+1}\,|\,S_{|U|-j+1}\setminus u_{|U|-j+1})=g(u_{|U|-j+1})$, coordinatewise
identical to the pass's charge.
\end{proof}

\noindent
The membership in Lemma~\ref{lem:passbase} holds for \emph{any} deletion key: every deletion
order emits some greedy base; the key only selects which one. No exact linear-minimization claim is made for
the load-augmented key: the reverse deletion order need not sort either the pre-pass or the
post-pass loads, which is why the convergence of
Greedy++ was long a conjecture~\cite{boob2020flowless}, resolved by a tailored
analysis~\cite{chekuri2022supermodular}.

\begin{proof}[Proof of Proposition~\ref{prop:solver}]
\emph{Part 1: Sound reporting via weak duality} (steps (i)--(ii) of the main-paper proof
sketch). By Lemma~\ref{lem:passbase}, after any $t$ passes
$\bar x_t=\sigma_t/t\in B(f)$, so by the weak direction of Lemma~\ref{lem:qmm},
\begin{equation}\label{eq:cert}
h_{\irm}^{*}\le\max_{v\in U}\frac{\bar x_t(v)}{m_v}
\quad\text{and}\quad
h_{\irm}(\hat S)\ge\max_{v}\frac{\bar x_t(v)}{m_v}-\hat g_t,
\end{equation}
where $\hat g_t:=\max_v\bar x_t(v)/m_v-\max_{\text{prefixes }S}h_{\irm}(S)$ is the nonnegative,
computable duality gap after $t$ passes. Thus $\hat g_t\le\delta\,h_{\irm}^{*}$ certifies
$h_{\irm}(\hat S)\ge(1-\delta)h_{\irm}^{*}$, unconditionally.

\emph{Part 2: The gap closes at the stated rate} (sketch step (iii)). It remains to bound $\min_{t\le T}\hat g_t$. We separate what is
proved here from what is imported, since only one step of the argument relies on an external result.

\emph{Correctness is self-contained.} By \eqref{eq:cert}, whenever the computable gap obeys
$\hat g_t\le\delta\,h_{\irm}^{*}$, the reported family satisfies
$h_{\irm}(\hat S)\ge(1-\delta)h_{\irm}^{*}$, and this uses \emph{only} weak duality
(Lemma~\ref{lem:qmm}) and averaged-base membership (Lemma~\ref{lem:passbase}), both established
above for the per-type-weighted objective. Thus every family the algorithm reports is certified
without appeal to any external rate (Remark~\ref{rem:cert}). What is \emph{imported}
from~\cite{chekuri2022supermodular} is solely the a~priori guarantee that this gap test is met
within the stated number of passes; we now show that this rate transfers to the weighted
objective once every per-element quantity is read per unit weight, rather than reproving it.

\emph{The imported rate.} We invoke the analysis behind Theorem~1.3
of~\cite{chekuri2022supermodular}: for a
normalized, monotone, supermodular $F$ on $n$ elements, \textsc{Super-Greedy++} (peel by
$\sigma(v)+F(v\,|\,S\setminus v)$, charge the marginal, report the best prefix) attains
$\min_{t\le T}\hat g_t^{\mathrm{card}}\le\varepsilon\lambda^{*}$ once
$T\ge C\,\Delta_F\ln n/(\varepsilon^{2}\lambda^{*})$, where $\Delta_F=\max_v F(v\,|\,V{-}v)$,
$\lambda^{*}=\max_{\emptyset\neq S}F(S)/|S|$, $C$ is an absolute constant, and
$\hat g_t^{\mathrm{card}}$ is the cardinality-denominator gap; the theorem is stated there as a
$(1-\varepsilon)$-approximation, and its proof establishes the duality-gap form quoted here.
(This is the rate whose tightness
\cite{chekuri2022supermodular} leave open; see Remark~\ref{rem:delta2}.)

\emph{Reading the analysis per unit weight.} Regard each element $v$ as carrying its weight
$m_v=m_{\phi'(v)}>0$; every per-element quantity is then read per unit weight, under the
correspondence
\begin{center}
\footnotesize
\setlength{\tabcolsep}{3pt}
\begin{tabular}{@{}lll@{}}
\toprule
role in~\cite{chekuri2022supermodular} & unweighted & weighted\\
\midrule
per-element load & $\sigma(v)$ & $\sigma(v)/m_v$\\
per-element marginal & $F(v\,|\,S{\setminus}v)$ & $\cnt{v}{G(S)}/m_v$\\
deletion key & $\sigma(v){+}F(v\,|\,S{\setminus}v)$ & $(\sigma(v){+}\cnt{v}{G(S)})/m_v$\\
denominator & $|S|$ & $w_{\irm}(S)=\sum_{v\in S}m_v$\\
duality anchor & $\min_{x\in B(F)}\max_v x_v$ & $\min_{x\in B(f)}\max_v x_v/m_v$\\
width & $\max_v F(v\,|\,V{-}v)$ & $\max_v\cnt{v}{G}/m_v$\\
$\log$ factor & $\ln n$ & $\ln|U|$\\
\bottomrule
\end{tabular}
\end{center}
the anchor being $\lambda^{*}$ (unweighted) and $h_{\irm}^{*}$ (weighted, Lemma~\ref{lem:qmm}),
the width $\Delta_F$ and $\Delta_{\irm}$. Under this correspondence the proof
of~\cite{chekuri2022supermodular} transfers step by step, its steps falling into two kinds.

\emph{Combinatorial steps are independent of the weights.} The peeling order, the prefix
accounting, and base membership never consult the weights. No matter which key chooses the vertex, a pass charges each deleted $v$
its marginal $\cnt{v}{G(S)}=f(v\,|\,S\setminus v)$, so in reverse deletion order it emits a greedy
base of $f$, and the averaged loads $\bar x_t=\sigma_t/t$ lie in $B(f)$; this is
Lemma~\ref{lem:passbase}, proved with no reference to $\irm$. Which prefixes are charged, and in
what order, is identical to the unweighted run; the weights enter only
afterward, dividing the value attached to a coordinate.

\emph{Analytic steps are the same statement under a change of functional.} By \eqref{eq:cert} our
goal is to certify $\max_v\bar x_t(v)/m_v\to h_{\irm}^{*}$, while \cite{chekuri2022supermodular}
certify $\max_v\bar x_t(v)\to\lambda^{*}$. These are one statement with the linear functional
$\max_v(\cdot)_v$ replaced by $\max_v(\cdot)_v/m_v$ and $\lambda^{*}$ replaced by
$h_{\irm}^{*}$. Lemma~\ref{lem:qmm} shows the latter is again a min--max over the \emph{same} base
polytope $B(f)$, of the same combinatorial type (the Lov\'asz extension of $f$ paired against the
weighted simplex $Z=\{z\ge0:\sum_v m_v z_v=1\}$; Lemma~\ref{lem:qmm}, Step~2). Hence every
inequality in~\cite{chekuri2022supermodular} relating $\max_v\bar x_t(v)$ to $\lambda^{*}$ has an
exact counterpart relating $\max_v\bar x_t(v)/m_v$ to $h_{\irm}^{*}$, and the counterpart is valid because
the three inputs those inequalities depend on are all shared and weight-free: the polytope $B(f)$,
the greedy/linear-minimization identity (Lemma~\ref{lem:greedybase}), and base membership
(Lemma~\ref{lem:passbase}). Only two quantities update: the coordinate range under the new functional,
$\Delta_F\to\Delta_{\irm}$, and the logarithmic factor, $\ln n\to\ln|U|$.

\emph{Where the $\ln|U|$ and the width enter.} Two ingredients of~\cite{chekuri2022supermodular}
fix these factors, and the rescaling touches each transparently. First, the regret term arises
from an entropy over the max-player's pure strategies; those strategies are the $|U|$ coordinate directions
$\mathbf e_v/m_v$ spanning $Z$, so their number is $|U|$ whatever the weights $m_v$; rescaling a weight rescales its
direction but adds no strategy, which is exactly why the logarithm is $\ln|U|$ and never
$\ln w_{\irm}(U)$. Second, the per-round quantities the analysis bounds are the per-unit-weight
values $x_v/m_v$, whose range is the width $\Delta_{\irm}$; the weights enter the rate only by
setting this range. The precise assembly of these two ingredients into the pass count
$\OO(\Delta_{\irm}\ln|U|/(\delta^{2}h_{\irm}^{*}))$ (linear in the width and in $1/h_{\irm}^{*}$,
quadratic in $1/\delta$) is the tailored analysis of~\cite{chekuri2022supermodular}, the same
argument now read per unit weight, and we import it as such; the
$1/\delta^{2}$ exponent proper to peeling-type schemes is discussed in Remark~\ref{rem:delta2}.

\emph{Conclusion.} Combining,
\[
\min_{t\le T}\hat g_t\le\delta\,h_{\irm}^{*}\quad\text{once}\quad
T=\OO\big(\Delta_{\irm}\ln|U|/(\delta^{2}h_{\irm}^{*})\big),
\]
and \eqref{eq:cert} converts this
into the claimed $(1-\delta)$ guarantee for the reported $\hat S$. We claim sufficiency of this
$T$, not tightness: \cite{chekuri2022supermodular} pose the precise rate of \textsc{Super-Greedy++}
as an open problem, and any sharpening transfers through the same correspondence untouched. Independently of
the imported rate, the reported family is certified by \eqref{eq:cert} whenever the computable gap
test is met (Remark~\ref{rem:cert}).

\emph{Part 3: Width on the localized net; the $T{=}1$ case} (completing sketch step (iii)).
Since $m_v$ is constant on layers,
$\Delta_{\irm}=\max_j\Delta_j/m_j$, an identity in the paper's existing parameters. The net
solved is the $\Delta$-box enlarged by one lattice step $s(\eta)=\sqrt{2\eta/e}$, at most $1$
since $\eta\le1$ (Definitions~\ref{def:lattice} and~\ref{def:net}), so its points satisfy $\ln m_j\ge-\ln\Delta-s(\eta)$, hence
$m_j\ge e^{-1}/\Delta$; with $\Delta_j\le\Delta$ this gives
$\Delta_{\irm}\le e\,\Delta^{2}=\OO(\Delta^{2})$ uniformly over net cells (the cell-exact
accounting at the optimal cell's representative is Remark~\ref{rem:budget}). Finally, at $T{=}1$,
$\sigma\equiv0$ and the key is $\cnt{v}{G(S)}/m_v$: pass~1 coincides exactly with the reweighted peel of
Theorem~\ref{thm:chen-peeling}, whose $\tfrac1\plen$ guarantee transfers to $h_{\irm}$ since the
weighted density of \S\ref{ssec:reduction} equals $\plen\,h_{\irm}$; the load carried across
passes is what lifts it to $(1-\delta)$.
\end{proof}

\begin{remark}[Computable certificate; early stopping]\label{rem:cert}
After any $t$ passes, both sides of \eqref{eq:cert} are computable in $\OO(|U|)$ from $\sigma_t$
and the incumbent prefix $\hat S_t$: if
$\max_{v\in U}\sigma_t(v)/(t\,m_v)\le h_{\irm}(\hat S_t)/(1-\delta)$, then
$h_{\irm}(\hat S_t)\ge(1-\delta)h_{\irm}^{*}$, \emph{unconditionally} (weak duality plus
Lemma~\ref{lem:passbase}; no imported rate is involved). Proposition~\ref{prop:solver}
guarantees the test is met within the stated $T$; in practice it is met far earlier
(\S\ref{sec:experiments}).
\end{remark}

\begin{remark}[Uniform pass budget: $T=\OO(\plen\,\Delta\ln|U|/\delta^{2})$]\label{rem:budget}
Theorem~\ref{thm:main} invokes the $(1-\delta)$ guarantee \emph{only at the optimal cell's
representative} $\irm'$; every other cell merely contributes a candidate family. At that cell
the width-to-value ratio collapses to existing parameters. Let $\pfam^{*}$ be the optimal
family, $\densopt$ its density, and $\irmopt$ its own iRM-set; by Lemma~\ref{lem:deltabox},
$m^{*}_j=\bar P_j/\densopt$ with $1\le\bar P_j\le\Delta_j$. At $\irmopt$ itself both quantities
are exact: $h^{*}_{\irmopt}=\densopt/\plen$ (attained at $\pfam^{*}$ by AM--GM tightness of the
induced iRM-set, while Lemma~\ref{lem:amgm} caps $h^{*}_{\irm}\le\densopt/\plen$ for every
feasible $\irm$) and
$\Delta_{\irmopt}=\max_j\Delta_j/m^{*}_j=\densopt\max_j\Delta_j/\bar P_j$, whence
$\Delta_{\irmopt}/h^{*}_{\irmopt}=\plen\max_j\Delta_j/\bar P_j$. The net representative inherits
this up to constants: Lemma~\ref{lem:round} gives $|\ln m'_j-\ln m^{*}_j|\le s(\eta)\le1$
coordinatewise, so $\Delta_{\irm'}\le e^{s(\eta)}\densopt\max_j\Delta_j/\bar P_j$, and
Lemma~\ref{lem:surrogate} gives $h^{*}_{\irm'}\ge\densopt/(\plen(1+\eta))$; the two occurrences
of $\densopt$ \emph{cancel}:
\[
\frac{\Delta_{\irm'}}{h^{*}_{\irm'}}
\le e^{s(\eta)}\,\plen(1+\eta)\max_{j\in[\plen]}\frac{\Delta_j}{\bar P_j}
\le e\,\plen(1+\eta)\,\Delta,
\]
using $\bar P_j\ge1$ and $\Delta_j\le\Delta$. Hence, absorbing $1+\eta\le2$ into the
constant, the uniform per-cell budget
$T=\OO(\plen\,\Delta\ln|U|/\delta^{2})$ meets Proposition~\ref{prop:solver}'s requirement at
$\irm'$, and Theorem~\ref{thm:main} holds unchanged with a budget in the paper's existing
parameters, with no $h_{\irm}^{*}$. The mechanism is that at the optimal cell the same $\densopt$ that
inflates the width (through $m^{*}_j=\bar P_j/\densopt$) also inflates the value, leaving the
ratio $\max_j\Delta_j/\bar P_j$ (the global maximum incidence over the optimum's per-layer average
incidence), which is near $1$ when the optimum contains no vertex of exceptionally high
incidence. Two qualifications are in order. First, the
$\Delta$-only form is a \emph{pipeline} fact: as a standalone per-cell statement,
Proposition~\ref{prop:solver} does need $h_{\irm}^{*}$, since an arbitrary net cell can
push $\Delta_{\irm}/h_{\irm}^{*}$ well beyond $\plen\Delta$; only at $\irm'$ does the cancellation
occur. Second, $\Delta$ is the incidence of the graph the peeler actually runs on, so the
pruning of \S\ref{sec:pruning} shrinks the budget further; cells may also stop early by
Remark~\ref{rem:cert}.
\end{remark}

\begin{remark}[On the $1/\delta^{2}$ exponent]\label{rem:delta2}
$\OO(1/\varepsilon)$ iteration counts in this line of work come from accelerated first-order
methods~\cite{harb2022faster}, which maintain \emph{per-edge load-sharing variables}; the
meta-path analogue materializes one variable per (instance, position) pair,
$\Theta(\plen\,|\inst{U}|)$ state, exactly the blow-up that instance-freeness (C3,
\S\ref{sec:implicit}) is designed to avoid, and that the $\dleft(v)\dright(v)$ factorization
(Observation~\ref{lem:factorization}) cannot compress, since those iterates are not constant on
instance classes. Peeling-type
schemes, whose state is $\OO(|U|)$, currently pay
$1/\delta^{2}$~\cite{chekuri2022supermodular}.
\end{remark}

\end{document}